\documentclass[a4paper,twocolumn,notitlepage,nofootinbib,longbibliography,superscriptaddress,floatfix]{revtex4-2}
\usepackage{
    adjustbox,
    algorithm,
    algpseudocode,
    amsmath,
    amssymb,
    amsthm,
    booktabs,
    braket,
    comment,
    csquotes,
    enumitem,
    graphicx,
    IEEEtrantools,
    mathtools,
    multirow,
    natbib,
    physics,
    refcount,
    relsize,
    url,
    times,
    xcolor
}

\usepackage[english]{babel}
\usepackage[normalem]{ulem}
\usepackage[caption=false]{subfig}

\definecolor{mainblue}{HTML}{1f77b4}
\definecolor{mainorange}{HTML}{ff7f0e}
\definecolor{maingreen}{HTML}{2ca02c}
\definecolor{mainred}{HTML}{DC3522}
\definecolor{mainpurple}{HTML}{9467bd}
\definecolor{mainpink}{HTML}{e377c2}

\renewcommand{\exp}{\ensuremath{\mathrm{exp}}}

\newcommand{\C}{\mathbb C}
\newcommand{\R}{\mathbb R}
\newcommand{\N}{\mathbb N}
\newcommand{\E}{\mathbb E}
\newcommand{\Prob}{\mathbb P}

\newcommand{\calL}{\mathcal L}
\newcommand{\calD}{\mathcal D}
\newcommand{\calY}{\mathcal Y}

\newcommand{\id}{\operatorname{id}}
\newcommand{\vol}{\operatorname{vol}}
\newcommand{\re}{\operatorname{Re}}

\newcommand{\din}{d_{\mathrm{in}}}
\newcommand{\dout}{d_{\mathrm{out}}}
\newcommand{\D}{D}

\newcommand{\proj}[1]{|#1\rangle\!\langle #1|}

\newcommand{\normone}[1]{\left\|#1\right\|_1}
\newcommand{\normtwo}[1]{\left\|#1\right\|_2}
\newcommand{\norminf}[1]{\left\|#1\right\|_\infty}
\newcommand{\normdiam}[1]{\left\|#1\right\|_\diamond}

\newcommand{\cgin}{c_{\mathrm{gin}}}
\newcommand{\cvol}{c_{\mathrm{vol}}}
\newcommand{\Cdens}{C_{\mathrm{dens}}}
\newcommand{\Ktilt}{K_{\mathrm{tilt}}}

\newcommand{\cac}{c_{\mathrm{ac}}}

\providecommand{\to}{\ensuremath{\Tilde{o}}}

\providecommand{\calD}{\ensuremath{\mathcal{D}}}

\providecommand{\calL}{\ensuremath{\mathcal{L}}}

\providecommand{\calY}{\ensuremath{\mathcal{Y}}}

\usepackage[dvipsnames]{xcolor}
\usepackage[colorlinks=true, urlcolor=RoyalBlue, linkcolor=RoyalBlue, citecolor=ForestGreen]{hyperref}

\DeclareMathOperator*{\argmin}{arg\,min}

\newtheorem{theorem}{Theorem}
\newtheorem{lemma}{Lemma}
\newtheorem{proposition}{Proposition}

\newtheorem{corollary}{Corollary}

\newtheorem{innercustomgeneric}{\customgenericname}
\providecommand{\customgenericname}{}
\newcommand{\newcustomtheorem}[2]{%
  \newenvironment{#1}[1]
  {%
   \renewcommand\customgenericname{#2}%
   \renewcommand\theinnercustomgeneric{##1}%
   \innercustomgeneric
  }
  {\endinnercustomgeneric}
}

\newcustomtheorem{customtheorem}{Theorem}
\newcustomtheorem{customlemma}{Lemma}
\newcustomtheorem{customproposition}{Proposition}

\usepackage[nameinlink,noabbrev]{cleveref}

\newcommand{\fu}{Dahlem Center for Complex Quantum Systems, Freie Universit\"{a}t Berlin, 14195 Berlin, Germany}

\newcommand{\har}{School of Engineering and Applied Sciences, Harvard University, Allston, MA 02134, USA}

\begin{document}

\title{Quantum memory advantage for quantum process tomography}

\author{Carlos Bravo-Prieto}
\thanks{%
  \{\href{mailto:c.bravo.prieto@fu-berlin.de}{c.bravo.prieto},
  \href{mailto:a.mele@fu-berlin.de}{a.mele}\}@fu-berlin.de%
}
\affiliation{\fu}

\author{Weiyuan Gong}
\email{wgong@g.harvard.edu}
\affiliation{\har}

\author{Antonio Anna Mele}
\thanks{%
  \{\href{mailto:c.bravo.prieto@fu-berlin.de}{c.bravo.prieto},
  \href{mailto:a.mele@fu-berlin.de}{a.mele}\}@fu-berlin.de%
}
\affiliation{\fu}

\begin{abstract}
Quantum process tomography, the task of learning an unknown quantum channel from black-box access, is a central problem in quantum information. In this setting, protocols with quantum memory can coherently store and jointly process quantum information obtained from multiple channel uses, whereas protocols without quantum memory must measure after each use and retain only a classical transcript of the measurement outcomes. A fundamental open question is whether quantum memory provides a query-complexity advantage even when protocols without quantum memory may adapt their experiments based on all previous outcomes with unbounded classical computational power.
In this work, we show that it does. We determine the optimal query complexity of quantum process tomography without quantum memory up to a \emph{constant} factor to be
$\Theta(d_{\mathrm{in}}^3 d_{\mathrm{out}}^3/\varepsilon^2)$,
where $d_{\mathrm{in}}$ and $d_{\mathrm{out}}$ are the channel input and output dimensions, respectively, and $\varepsilon$ is the target diamond-norm accuracy. More precisely, we prove that any incoherent protocol for this task, including adaptive protocols, requires
$\Omega(d_{\mathrm{in}}^3 d_{\mathrm{out}}^3/\varepsilon^2)$
queries, even when each channel use may be assisted by arbitrary fresh ancilla, and we present a non-adaptive, ancilla-free incoherent protocol achieving the matching upper bound
$O(d_{\mathrm{in}}^3 d_{\mathrm{out}}^3/\varepsilon^2)$.
Our results thereby generalize the optimal sample-complexity bounds for single-copy state tomography, recovered as the special case $d_{\mathrm{in}}=1$.
By contrast, coherent protocols with quantum memory achieve query complexity
$\Theta(d_{\mathrm{in}}^2 d_{\mathrm{out}}^2/\varepsilon^2)$.
Hence, our results establish a rigorous learning separation between quantum process tomography with and without quantum memory.
\end{abstract}

\maketitle

\section{Introduction} \label{s:introduction}

Quantum process tomography is the task of learning an unknown quantum channel from query access, which serves as the key subroutine for calibration, verification, characterization, and certification across various quantum platforms~\cite{chuang1997prescription,poyatos1997complete,helsen2022general,mohseni2008quantum,obrien2004quantum,riebe2006process,bialczak2010quantum,ballance2016high,bouchard2019quantum,blume2017demonstration,scott2008optimizing,kitaev1997quantum,gilchrist2005distance,watrous2012simpler,Watrous2018,aharonov1998quantum,hayashi1998asymptotic,odonnell2016efficient,mele2025optimal}.
The problem is formulated as: given queries to an unknown underlying quantum channel~\cite{Choi1975}, output a classical description of a channel that is close to the actual channel in diamond norm, which captures the worst-case distinguishability of two processes where the user of the process may choose an arbitrary input state, attach an arbitrary reference system, and perform an arbitrary final measurement~\cite{kitaev1997quantum,gilchrist2005distance,Watrous2018,watrous2012simpler,aharonov1998quantum}.

An essential task in this problem is to understand the role of quantum memory.
A coherent process tomography protocol can preserve quantum systems across different uses of the unknown channel and is able to finally perform a joint quantum operation or measurement on all retained systems. 
An incoherent protocol, by contrast, uses the unknown channel one time at a time, measures immediately after each use, and stores only a classical transcript between different rounds. 
The incoherent protocol may still be adaptive: after observing the previous outcomes, it may choose a new input state, a new measurement, and even a fresh ancilla for the next channel use. 
This distinction is already known to be essential and is well-understood in quantum state tomography.
For learning an arbitrary $d$-dimensional quantum state in trace distance, coherent collective measurements achieve the optimal sample complexity $\Theta(d^2)$  
while any incoherent single-copy measurement protocol requires $\Theta(d^3)$ even when the measurements are chosen adaptively~\cite{haah2016sample,odonnell2016efficient,kueng2017low,chen2023does}.
This result separated two resources that are often conflated: adaptivity, the ability to choose later experiments based on earlier classical data, and coherence, the ability to keep quantum systems coherent across different copies.
A more recent result even provides a smooth tradeoff between sample complexity and the number of copies on which each measurement is able to perform jointly~\cite{chen2024tradeoff}.
For structured channels like Pauli channels, it is shown that entanglement and quantum memory can also lead to exponential separations in the number of queries~\cite{aharonov2022quantum,chen2022exponential,Huang_2022,chen2022quantum,fawzi2025lower,chen2025efficient,chen2024tight}.

However, the analogous question for quantum processes learning has remained more subtle for general channels with $d_{\mathrm{in}}$ input and $d_{\mathrm{out}}$ output dimensions. 
A general channel has a normalized Choi operator on a Hilbert space of $d_{\mathrm{in}}d_{\mathrm{out}}$ dimensions~\cite{Choi1975}. 
However, the operational error in channel tomography is the diamond norm rather than the Choi trace norm in state tomography, which prevents a direct transfer of state-tomography bounds.
Previous projected least-squares process tomography protocols gave rigorous incoherent upper bounds of $\widetilde{O}(d_{\mathrm{in}}^3d_{\mathrm{out}}^3/\varepsilon^2)$ for achieving $\varepsilon$-accuracy in diamond norm, which includes logarithmic factors~\cite{SurawyStepneyKahnKuengGuta2022,oufkir2023sample}, and proved a matching lower bound $\Omega(d_{\mathrm{in}}^3d_{\mathrm{out}}^3/\varepsilon^2)$ only for non-adaptive incoherent measurements~\cite{oufkir2023sample}. 
On the other hand, recent coherent protocols enable full process tomography in diamond norm with $\Theta(d_{\mathrm{in}}^2d_{\mathrm{out}}^2/\varepsilon^2)$ queries~\cite{mele2025optimal,chen2026quantum}.
These coherent protocols establish the best possible scaling when quantum memory is allowed, leaving open whether classical adaptivity could reduce the sample complexity without quantum memory.

This work resolves this question by giving a negative answer.
We prove that every adaptive incoherent protocol for learning an arbitrary channel with $d_{\mathrm{in}}$ input and $d_{\mathrm{out}}$ output dimensions up to $\varepsilon$ accuracy in diamond norm requires $\Omega(d_{\mathrm{in}}^3d_{\mathrm{out}}^3/\varepsilon^2)$ queries.
The lower bound holds for any incoherent protocol with arbitrary classical adaptivity, arbitrary outcome-dependent measurements, and interaction with arbitrary fresh ancilla systems within each round. 
Our results also generalize the optimal sample-complexity bounds for single-copy state tomography: when $d_{\mathrm{in}}=1$, channels reduce to states, and our lower and upper bounds recover the corresponding
$\Theta(d_{\mathrm{out}}^3/\varepsilon^2)$
scaling~\cite{chen2023does,kuengfasttkomogr}.

The proof follows the same high-level philosophy as the posterior-tilt method developed for adaptive incoherent state tomography~\cite{chen2023does}, but the channel setting requires several new ingredients.
First, we construct a local family of channels around the completely depolarizing channel by slightly perturbing the normalized Choi operator in off-diagonal blocks. 
Second, we represent each individual query in an adaptive incoherent protocol by a one-slot tester acting on the Choi operator using the standard tester and quantum-comb formalism~\cite{GutoskiWatrous2007,ChiribellaDArianoPerinotti2009}. 
After conditioning on a full transcript, the adaptive choices of the protocol become a deterministic sequence of tester elements. 
Third, for every such fixed transcript, we prove a likelihood tilt bound over a local Schatten neighborhood of the true channel, which shows that the posterior cannot concentrate inside the Choi trace ball that would be forced by a diamond-norm accurate estimator with $o(d_{\mathrm{in}}^3d_{\mathrm{out}}^3/\varepsilon^2)$ queries. 
The resulting posterior anti-concentration contradicts uniform successful tomography and yields the claimed result.

We complement the lower bound with a non-adaptive, ancilla-free incoherent protocol using $O(d_{\mathrm{in}}^3d_{\mathrm{out}}^3/\varepsilon^2)$ queries, which removes the logarithmic factor from the previously best incoherent process tomography guarantees~\cite{SurawyStepneyKahnKuengGuta2022,oufkir2023sample}.
Importantly, the protocol we consider remains the same as the previous work, while we carried out a refined concentration analysis in the same spirit of the incoherent state tomography work~\cite{kuengfasttkomogr}: instead of applying a direct matrix-concentration bound that pays a logarithmic overhead, we control each fixed quadratic form by scalar Bernstein concentration and then use a constant-radius net of the unit sphere. 
This gives an operator-norm guarantee strong enough to imply the desired diamond-norm error without the logarithmic factor.
Together with the coherent upper and lower bounds, our result gives a strict quantum-memory separation for full process tomography between $\Theta(d_{\mathrm{in}}^2d_{\mathrm{out}}^2/\varepsilon^2)$ for coherent protocols and $\Theta(d_{\mathrm{in}}^3d_{\mathrm{out}}^3/\varepsilon^2)$ for incoherent protocols.

\section{Preliminaries} \label{sec:preliminaries}

In this section, we introduce the basic notation and conventions. 
All Hilbert spaces considered in this work are finite-dimensional. 
The unknown channel maps $\calL(A)$ to $\calL(B)$, where $A\simeq\C^{\din}$ and $B\simeq\C^{\dout}$. 
We write $\D:=\din\dout$.
For a Hilbert space $H$, let $\calL(H)$ denote the linear operators on $H$ and let $\calD(H)$ denote the density operators. 
Schatten norms are denoted by $\normone{\cdot}$, $\normtwo{\cdot}$, and $\norminf{\cdot}$.

\subsection{Choi convention}

Fix a computational basis $\{\ket{i}\}_{i=1}^{\din}$ of $A$, and let $A'\simeq A$ be a reference copy with the same basis. 
Define
\begin{equation}
\ket{\Omega}_{A'A}:=\frac{1}{\sqrt{\din}}\sum_{i=1}^{\din}\ket{i}_{A'}\ket{i}_{A}.
\end{equation}
For a linear map $\Phi:\calL(A)\to\calL(B)$, we use the Choi characterization of completely positive trace-preserving maps in normalized form \cite{Choi1975,Watrous2018}
\begin{equation}
J_\Phi:=(\id_{A'}\otimes\Phi)(\proj{\Omega})=\frac{1}{\din}\sum_{i,j=1}^{\din}\ketbra{i}{j}_{A'}\otimes\Phi(\ketbra{i}{j}_{A}).
\end{equation}
After defining $J_\Phi$, we identify $A'$ with $A$ and regard $J_\Phi\in\calL(A\otimes B)$. 
All transposes are taken in the computational basis fixed above.
With this normalization, $\Phi$ is a quantum channel if and only if
\begin{equation}
J_\Phi\ge0,\qquad
\Tr_B J_\Phi=\frac{I_A}{\din}.
\end{equation}
We also use the diamond norm $\normdiam{\cdot}$ and the elementary implication
\begin{equation}
\normone{J_\Phi-J_\Psi}\le\normdiam{\Phi-\Psi}
\end{equation}
for any linear maps $\Phi,\Psi:\calL(A)\to\calL(B)$.

\subsection{Adaptive incoherent protocols}

An adaptive incoherent protocol queries the unknown channel once at a time, makes a measurement, and stores only classical information between different measurements. 
We consider a protocol with $T$ rounds of measurements.
At round $t\in\{1,\ldots,T\}$, after observing a classical history $h_{t-1}=(y_1,\ldots,y_{t-1})$, the learner chooses an ancilla $R_t$, an input state $\rho_{h_{t-1}}\in\calD(R_t\otimes A)$, and a discrete POVM $\{M_{h_{t-1},y}\}_{y\in\calY_{h_{t-1}}}$ on $R_t\otimes B$. 
Throughout the main proof, the outcome set $\calY_{h_{t-1}}$ is taken to be finite or countably infinite. 
The channel is applied once to the ancillary register $A$, the POVM is measured immediately, and the resulting outcome $Y_t$ is appended to the transcript. 
We only allow classical adaptivity and processing among different rounds of measurements.

The full transcript of such a protocol is given by $Z:=(Y_1,\ldots,Y_T)$, and the estimator $\widehat{\Phi}$ is a function of $Z$.
In particular, $\widehat{\Phi}$ need not be a channel. 
Randomized protocols are handled by conditioning on the private random seed. 
Consequently, it suffices to prove the lower bound for deterministic protocols and then average over the seed.

The restriction to discrete outcomes is only a notational simplification to make the transcript likelihoods point probabilities, and thus all sums over outcomes and transcripts are countable. 
The same results extend to arbitrary standard Borel outcome spaces by replacing point probabilities with Radon-Nikodym derivatives. 
We provide the corresponding measure-theoretic formulation in Appendix~\ref{a:borel}.

\subsection{Single-round testers}

A single incoherent use of a channel can be written in tester form, the one-slot case of the quantum strategy or comb formalism~\cite{GutoskiWatrous2007,ChiribellaDArianoPerinotti2009}. 
A discrete single-round tester is a family of positive operators
\begin{equation}
\{T_y\}_{y\in\calY},\qquad
T_y\in\calL(A\otimes B),
\end{equation}
such that, for some $\tau\in\calD(A)$,
\begin{equation}
\sum_{y\in\calY}T_y=\tau^\top\otimes I_B .
\end{equation}
When the tested channel is $\Phi$, the outcome probabilities are
\begin{equation}
p_\Phi(y)=\din\,\Tr(T_yJ_\Phi).
\end{equation}

In Section~\ref{sec:one-use-testers}, we prove the representation lemma showing that every single round of a potentially adaptive protocol with incoherent measurements induces such a tester.
Thus, along any fixed transcript $z=(y_1,\ldots,y_T)$ of an adaptive protocol, the operators used in the likelihood ratio are deterministic:
\begin{equation}
T_t(z):=T_{y_{<t},y_t},\qquad
y_{<t}:=(y_1,\ldots,y_{t-1}).
\end{equation}

\subsection{Bayesian notation}

For a channel family parametrized by $\{\Phi_\theta:\theta\in\Theta\}$ and a fixed deterministic protocol, let $P_\theta$ be the induced probability distribution over the transcript $Z$. 
If $\mu$ is a prior on $\Theta$, the prior predictive distribution is
\begin{equation}
P_\mu(z):=\int_\Theta P_\theta(z)\,d\mu(\theta).
\end{equation}
For every transcript $z$ with $P_\mu(z)>0$, the posterior is
\begin{equation}
\nu_z(A):=\frac{\int_A P_\theta(z)\,d\mu(\theta)}{\int_\Theta P_\theta(z)\,d\mu(\theta)},\qquad
A\subseteq\Theta.
\end{equation}
When a reference parameter $\theta_0$ has common transcript support with all parameters under consideration, we also write
\begin{equation}
R_\theta^{\theta_0}(z):=\frac{P_\theta(z)}{P_{\theta_0}(z)}
\end{equation}
and express the posterior as
\begin{equation}
\nu_z(A)=\frac{\int_A R_\theta^{\theta_0}(z)\,d\mu(\theta)}{\int_\Theta R_\theta^{\theta_0}(z)\,d\mu(\theta)}.
\end{equation}

\section{Main results}
\label{subsec:results-overview}

Our main result is a lower bound for learning an arbitrary quantum channel in diamond norm for any adaptive incoherent process tomography protocol. 
The protocol may choose its input state and measurement at each round as an arbitrary function of the previous classical outcomes, and use an unbounded-size ancilla register within each round.
As the measurements are assumed incoherent, only classical adaptivity is allowed among different measurement rounds.
The proof strategy is summarized in Fig.~\ref{fig:proof_strategy}.

\begin{figure*}[t]
\centering
\includegraphics[width=0.97\textwidth]{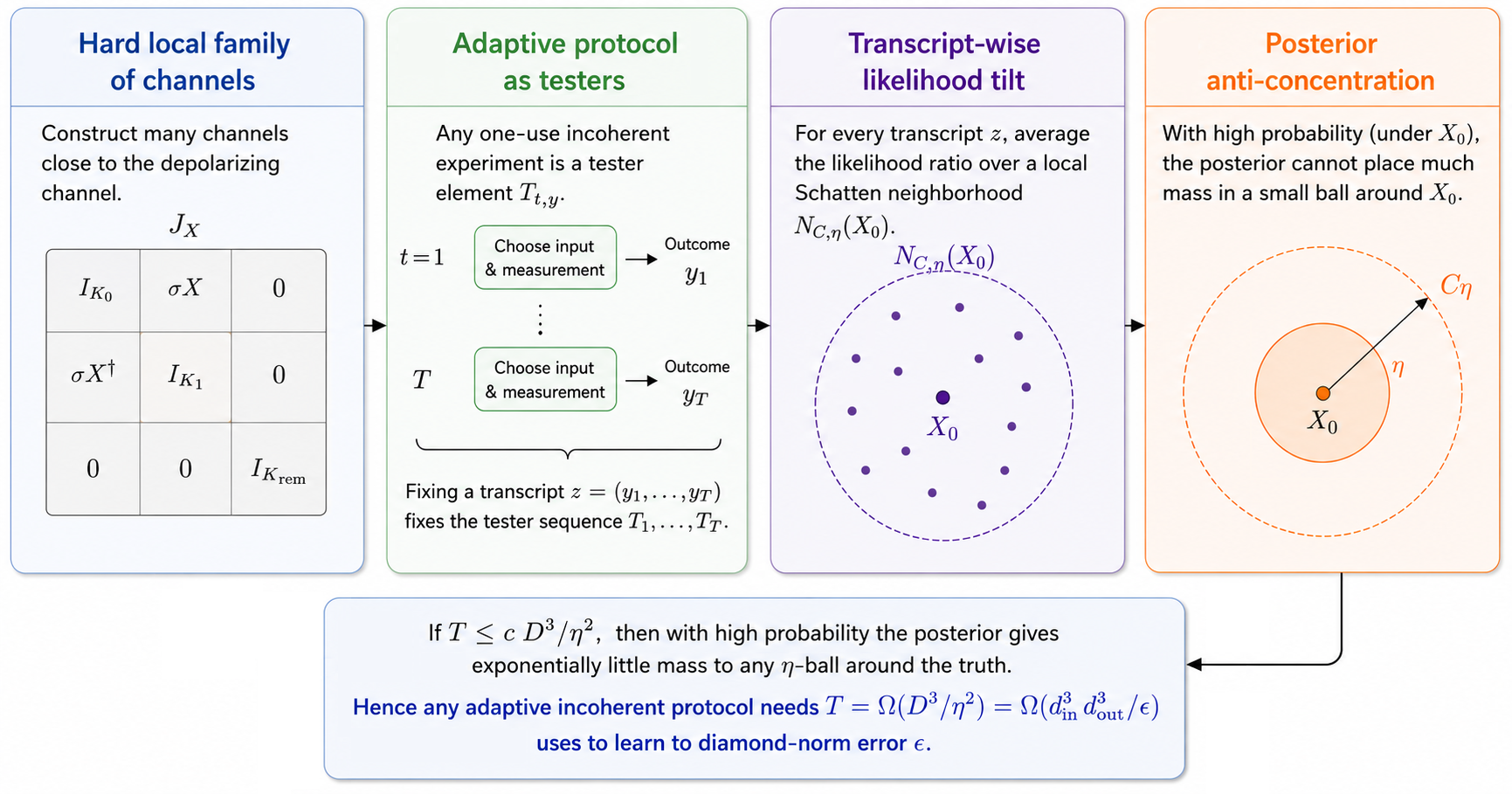}
\caption{\textbf{Overview of the proof strategy.} We first construct a local family of channels near the completely depolarizing channel.
Each channel is indexed by a matrix perturbation of the normalized Choi operator. 
We then represent every round of an adaptive incoherent protocol by a single-round tester. 
Along any transcript, the adaptivity protocol yields a deterministic sequence of tester elements.
The key technical ingredient is a transcript-wise likelihood tilt bound over local Schatten neighborhoods, which yields posterior anti-concentration: after fewer than order $\din^3\dout^3/\varepsilon^2$ channel uses, the posterior cannot concentrate in the Choi trace ball that would be forced by any diamond-norm accurate estimator. 
This gives the lower bound $T=\Omega(\din^3\dout^3/\varepsilon^2)$.
}
\label{fig:proof_strategy}
\end{figure*}

\begin{theorem}[Adaptive incoherent process tomography lower bound]
\label{thm:main}
There exist universal constants $c>0$, $\varepsilon_\star>0$, and $\D_0\in\N$ such that the following holds. 
Let $A$ and $B$ be the system of $\C^{\din}$ and $\C^{\dout}$ dimensions with $\dout\ge 2$.
For simplicity, we denote $\D=\din\dout\ge \D_0$.
Suppose an adaptive incoherent protocol, allowing standard Borel outcome spaces in each measurement, uses $T$ queries to an unknown channel $\Phi:\calL(A)\to\calL(B)$ and outputs a linear-map-valued estimator $\widehat{\Phi}$. 
If, for every
channel $\Phi$,
\begin{equation}
\Prob_\Phi\left[\normdiam{\widehat{\Phi}-\Phi}\le\varepsilon\right]\ge \frac23
\end{equation}
for some $0<\varepsilon\le\varepsilon_\star$, then
\begin{equation}
T\ge c\,\frac{\D^3}{\varepsilon^2}=c\,\frac{\din^3\dout^3}{\varepsilon^2}.
\end{equation}
\end{theorem}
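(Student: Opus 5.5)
We follow the posterior-tilt strategy of~\cite{chen2023does}, adapted to Choi operators; the plan is the one summarized in Fig.~\ref{fig:proof_strategy}.

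\emph{Step 1: the local family.} Set $J_0:=I_{AB}/\D$, the Choi operator of the completely depolarizing channel. Write $A\otimes B=\bigoplus_{i}\ket{i}_A\otimes B$ as $\din$ blocks of size $\dout$. We perturb only blocks on which partial-trace constraints are automatic:
\begin{equation}
J_U:=\frac{I}{\D}+\frac{\varepsilon'}{\D}\,\Delta_U ,
\end{equation}
where $\Delta_U$ is Hermitian and traceless on every $B$-diagonal block, so that $\Tr_B J_U=I_A/\din$. A concrete choice is $\Delta_U=U(P-\bar P)U^\dagger$, projected to remove the block partial traces, with $P$ a projector of rank about $\D/2$, $U$ Haar-distributed and $\varepsilon'\asymp\varepsilon$. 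We need the following for most $U,V$ under the prior $\mu$:
\begin{itemize}
\item[(i)] $\norminf{\Delta_U}=O(1)$, which gives positivity for small $\varepsilon_\star$;
\item[(ii)] $\normone{J_U-J_V}\ge c_1\varepsilon'$, by Ginibre/Haar concentration;
\item[(iii)] small balls $\{V:\normone{J_V-J_U}\le 2\varepsilon\}$ have $\mu$-mass at most $e^{-c_2\D^2}$, a volumetric bound in dimension $\sim\D^2$.
\end{itemize}
By $\normone{J_\Phi-J_\Psi}\le\normdiam{\Phi-\Psi}$ and a rounding step, a successful estimator yields a Choi estimate within $2\varepsilon$ in trace norm with probability $2/3$. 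Hence, averaging over $\mu$, the posterior $\nu_Z$ must place mass $\gtrsim 1/2$ on a $2\varepsilon$-trace ball with constant probability under $P_\mu$.

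\emph{Step 2: tester reduction and likelihood ratios.} Using the one-use representation lemma of Section~\ref{sec:one-use-testers}, fix a transcript $z$. Each round contributes a factor
\begin{equation}
\frac{\Tr(T_t(z)J_U)}{\Tr(T_t(z)J_0)}=1+\varepsilon'\,\frac{\Tr(T_t\Delta_U)}{\Tr(T_t)}.
\end{equation}
Write $x_t(U):=\Tr(T_t\Delta_U)/\Tr(T_t)$ with $T_t=T_t(z)$. Then $R_U^0(z)=\prod_t(1+\varepsilon' x_t(U))$ is a deterministic function of $U$ once $z$ is fixed. Rank-one refinement of the POVM can only help the learner, so we may assume each $T_t$ has rank one, $T_t\propto\proj{\psi_t}$.

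The key estimate is then a scalar one. For a unit vector $\psi$, $\langle\psi|\Delta_U|\psi\rangle$ has variance $O(1/\D)$ over Haar $U$ and is subgaussian at scale $\D^{-1/2}$; this must hold uniformly despite the block projection, using $\dout\ge2$.

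\emph{Step 3: transcript-wise tilt bound (main obstacle).} We must show that for every fixed sequence $\psi_1,\dots,\psi_T$ with $T\le c\D^3/\varepsilon^2$, the function $U\mapsto\log R_U^0$ is nearly constant on local Schatten neighborhoods of scale $\varepsilon$ around $U$. Concretely, we want
\begin{equation}
\sup_{V\in N_\varepsilon(U)}\frac{R_V}{R_U}\le e^{c_2\D^2/2}
\end{equation}
for $\mu$-most $U$, or at least in a $\mu$-averaged sense weighted by $R_U$.

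The naive bound on $\sum_t\varepsilon'|x_t(V)-x_t(U)|$ gives $T\varepsilon^2/\D$, which falls short of $\D^2$ by a factor of $\D^3/\D^3$ only if we exploit cancellations, so this is where the work lies. The plan is to expand $\log R$ to second order. The linear term is $\varepsilon'\Tr(\Delta_V-\Delta_U)\,S$ with $S=\sum_t\proj{\psi_t}$, a positive operator of trace $T$; averaged over directions in the neighborhood, its magnitude is governed by $\normtwo{S}\cdot\varepsilon/\D\cdot\D^{-1/2}\cdots$, and must be compared with $\D^2$. The quadratic term is bounded by $\varepsilon^2\sum_t x_t^2\lesssim T\varepsilon^2/\D$, which is at most $\D^2$ once $T\le c\D^3/\varepsilon^2$. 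For the linear term we control the tilt's exponential moment under $\mu$ by a Gaussian/Ginibre comparison:
\begin{equation}
\E_U e^{\lambda\langle S,\Delta_U\rangle}\le e^{O(\lambda^2\normtwo{S}^2/\D^2)}.
\end{equation}
We then use $\normtwo{S}^2\le T\cdot\norminf{S}$ and a truncation on rounds, exactly as in the state case. This is the step I expect to be delicate, particularly handling the block-projection constraint and the posterior reweighting.

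\emph{Step 4: conclusion.} The tilt bound gives $\nu_z(\text{ball})\le e^{c_2\D^2/2}\mu(\text{ball})/\text{normalizer}\le e^{-c_2\D^2/2}\ll 1/2$ for typical $z$. This contradicts Step 1, so $T\ge c\D^3/\varepsilon^2$. Randomized protocols follow by conditioning on the seed, and Borel outcome spaces via Appendix~\ref{a:borel}.
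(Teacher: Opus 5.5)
The gap is Step 3, together with the ``normalizer'' in Step 4 that is never bounded. You correctly flag Step 3 as the crux, but you do not close it, and the tools you sketch cannot close it.

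\textbf{The sup-form target is false.} The bound $\sup_{V\in N_\varepsilon(U)}R_V/R_U\le e^{c_2\D^2/2}$ fails for fixed transcripts. Take the transcript in which every round returns the same rank-one element $\proj{\psi}$. For essentially any $U$, a rank-two rotation changes $\langle\psi|\Delta_V|\psi\rangle$ by $\Omega(1)$ while moving $J$ by only $O(\varepsilon'/\D)$ in trace norm. Then $\log(R_V/R_U)\gtrsim T\varepsilon'$, which is $\gg\D^2$ at $T\asymp\D^3/\varepsilon^2$. Even with an operator-norm cap on the neighborhood, as in the paper's $N_{C,\eta}$, the worst-direction first-order term is still of order $T\eta$.

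\textbf{The exponential-moment route fails for the same reason.} Along adaptive transcripts, $S=\sum_t\proj{\psi_t}$ can have $\norminf{S}$ of order $T$. Then $\normtwo{S}^2\le T\norminf{S}$ gives an exponent quadratic in $T$, not the needed $O(T\varepsilon^2/\D)$. No truncation on rounds fixes this, because that alignment is exactly what adaptivity buys.

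\textbf{Step 4 needs a bound you do not have.} With ratios referenced to $J_0$, Jensen over the prior does give a normalizer of at least $e^{-O(T\varepsilon^2/\D)}$. But the natural upper bound on the small-ball numerator under $Z\sim P_U$ is $e^{O(\D^2)}\mu(\mathrm{ball})\,R^0_U(Z)$. So you would need $\log R^0_U(Z)\lesssim\D^2$, i.e., control of $\mathrm{KL}(P_U\|P_0)$. That is precisely the information-type quantity adaptivity makes hard to bound.

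\textbf{The missing idea.} Reference everything to the \emph{true} parameter, and \emph{lower}-bound the likelihood averaged over a neighborhood exactly symmetric about it. The small ball is then handled by Markov, not by any tilt estimate.
\begin{itemize}
\item The paper uses a flat, macroscopic family $J_X=I_{AB}/\D+(\sigma/\D)E(X)$ with constant $\sigma$. The off-diagonal embedding gives $\Tr_BE(X)=0$ and $\normone{J_{X_0+W}-J_{X_0}}=(2\sigma/\D)\normone{W}$ exactly.
\item Fix a transcript and let $W$ be uniform on $N_{C,\eta}$, which is invariant under $W\mapsto-W$ and $W\mapsto U_1WU_2$. Then $\E_Wa_t(W)=0$ for whatever tester the protocol chose.
\item Isotropy together with $\normtwo{T_t}\le\Tr T_t$ gives $\E_Wa_t(W)^2=O(C^2\eta^2/\D)$ uniformly over positive testers. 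Jensen then yields $\E_W\Lambda_W(z)\ge e^{-\Ktilt C^2\eta^2T/\D}$.
\item Separately, $\E_{Z\sim P_{X_0}}\int_{B_\eta(X_0)}R_X(Z)\,dX=\vol(B_\eta(X_0))$, which Markov turns into a high-probability bound on the small ball.
\item The volume gain $C^{2r^2}e^{-\cvol r^2}$ of $N_{C,\eta}$ beats all $e^{O(\D^2)}$ losses. Since $\nu_z(B_{C\eta}(X_0))\le1$, this forces $\nu_z(B_\eta(X_0))\le e^{-\kappa\D^2}$.
\end{itemize}
No global normalizer, KL term, or $\norminf{S}$ ever appears.

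Your curved Haar-orbit family at scale $\varepsilon'\asymp\varepsilon$ cannot support this argument. Its $C\varepsilon$-balls are macroscopic caps of the orbit. By the stabilizer symmetry, the average of $\Delta_V-\Delta_U$ over such a cap is a negative multiple of $\Delta_U$ rather than zero. The resulting first-order term brings $\log R^0_U(z)$ straight back.
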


We emphasize that the lower bound applies to any incoherent strategy with classical adaptivity between rounds, arbitrary ancilla-assisted inputs within each round, and arbitrary outcome-dependent measurements. 
Theorem~\ref{thm:main} is complemented by a non-adaptive ancilla-free upper bound $O({\din^3\dout^3}/{\varepsilon^2})$, derived in Section~\ref{sec:log-free-upper-bound}. 
The upper bound is based on previously introduced algorithm, and our contribution is to remove the logarithmic factor in the previously best known bound~\cite{SurawyStepneyKahnKuengGuta2022,oufkir2023sample}. 
Therefore, for unknown channels $\Phi:\calL(\C^{\din})\to\calL(\C^{\dout})$, the optimal query complexity for diamond-norm incoherent process tomography is
\begin{equation}
\Theta\left(\frac{\din^3\dout^3}{\varepsilon^2}\right).
\end{equation}

Since incoherent protocols are allowed arbitrary classical adaptivity, any remaining gap between incoherent and coherent tomography must instead come from the ability to preserve and jointly process quantum information across channel uses.
Coherent process tomography protocols that may retain and jointly process quantum information across channel uses achieve the optimal query complexity
\begin{equation}
\Theta\left(\frac{\din^2\dout^2}{\varepsilon^2}\right)
\end{equation}
for the same task~\cite{mele2025optimal,chen2026quantum}. Hence, quantum memory improves the sample complexity by a factor of $R=\Theta(\din\dout)$ over any fully adaptive incoherent strategy. 
 
\section{Lower bound}
This section formalizes the proof sketch presented in Fig.~\ref{fig:proof_strategy}, while deferring most of the fully detailed proofs to the appendix.
We first reduce each single-round experiment to a tester acting on the Choi operator. 
We then construct the hard local family of channels and record the geometric properties of the prior. 
The key technical subroutine is the transcript-wise adaptive test tilt lemma, which controls likelihood ratios uniformly along a fixed adaptive transcript. 
Finally, we convert this likelihood control into posterior anti-concentration and use it to prove Theorem~\ref{thm:main}. After completing the lower-bound proof, Section~\ref{sec:log-free-upper-bound} states the matching non-adaptive incoherent upper bound.
\subsection{Single-round tester formalism}\label{sec:one-use-testers}

The first step is to separate the physical implementation of a single incoherent experiment from its statistical effect on the unknown channel. 
At a fixed history of an adaptive protocol, the learner has chosen an ancilla-assisted input state and a measurement on the output. 
The following lemma shows that, as a function of the channel, this entire single-round experiment is equivalently described by a collection of positive operators acting on the Choi space $A\otimes B$.

\begin{lemma}[Single-round tester representation]
\label{lem:tester}
Consider a query to a channel $\Phi:\calL(A)\to\calL(B)$ with an input state $\rho\in\calD(R\otimes A)$ and a discrete POVM $\{M_y\}_{y\in\calY}$ on $R\otimes B$. 
There exist positive operators and a state $\tau\in\calD(A)$ such that
\begin{equation}
T_y\in\calL(A\otimes B),\quad
y\in\calY,\qquad
\sum_{y\in\calY}T_y=\tau^\top\otimes I_B
\end{equation}
satisfies
\begin{equation}
\Prob_\Phi[Y=y]=\din\,\Tr(T_yJ_\Phi).
\end{equation}
Here, the transpose is taken in the computational basis used to define $J_\Phi$.
\end{lemma}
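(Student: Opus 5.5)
The plan is to write the outcome probability via a Stinespring-free, purely linear-algebraic identity. First I will reduce the input state to a purification-like form: any $\rho\in\calD(R\otimes A)$ can be written as $\rho=(X\otimes I_A)\,\din\proj{\Omega}_{R'A}... $. More concretely, I will use the standard "transpose trick" at the level of the channel. By linearity, $\Phi$ is recovered from its Choi operator via
\begin{equation}
\Phi(X)=\din\,\Tr_{A}\!\left[(X^\top\otimes I_B)\,J_\Phi\right],\qquad X\in\calL(A),
\end{equation}
where the trace is over the first (identified $A'$) factor of $J_\Phi$. This is checked directly on matrix units $X=\ketbra{i}{j}$ using the explicit form of $J_\Phi$ in the Choi convention: $\Tr_A[(\ketbra{j}{i}\otimes I)\tfrac1{\din}\sum_{k,l}\ketbra{k}{l}\otimes\Phi(\ketbra{k}{l})]=\tfrac1{\din}\Phi(\ketbra{i}{j})$. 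Tensoring with $\id_R$, the same identity gives
\begin{equation}
(\id_R\otimes\Phi)(\rho)=\din\,\Tr_{A}\!\left[(\rho^{\top_A}\otimes I_B)(I_R\otimes J_\Phi)\right],
\end{equation}
with $\top_A$ the partial transpose on $A$ in the computational basis.

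Next I compute the outcome probability: $\Prob_\Phi[Y=y]=\Tr[M_y(\id_R\otimes\Phi)(\rho)]=\din\,\Tr\big[(\rho^{\top_A}\otimes I_B)(M_y\otimes I_A)(I_R\otimes J_\Phi)\big]$ after rearranging tensor factors (everything acting on $R\otimes A\otimes B$). Taking the partial trace over $R$ first yields $\din\Tr(T_yJ_\Phi)$ with the definition
\begin{equation}
T_y:=\Tr_R\!\left[(\rho^{\top_A}\otimes I_B)(I_A\otimes M_y)\right]\in\calL(A\otimes B),
\end{equation}
with the factors appropriately ordered. Summing over $y$ and using $\sum_y M_y=I_{RB}$ gives $\sum_yT_y=\Tr_R(\rho^{\top_A})\otimes I_B=(\Tr_R\rho)^\top\otimes I_B$, so $\tau:=\Tr_R\rho\in\calD(A)$ works. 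For countable $\calY$ the sum converges (weakly, hence in norm in finite dimension on $A\otimes B$) since all terms are positive and partial sums are bounded.

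The main obstacle is positivity of $T_y$, since $\rho^{\top_A}$ is not positive in general and the expression above is not manifestly a product of positive operators. I will handle this by instead using $\rho^{\top_A}=(\rho^{\top_R})^\top$, a full transpose of a positive operator on $R\otimes A$, which is positive; and rewriting $T_y=\Tr_R[(\sigma\otimes I_B)(I\otimes M_y)]$ with $\sigma=\rho^{\top_A}$ — but this requires transposing $M_y$ on $R$ too. The clean route: $\Tr_R$ is invariant under a partial transpose on $R$ applied to the whole product, $\Tr_R[XY]=\Tr_R[X^{\top_R}Y^{\top_R}]$ for operators where $R$ factors are contracted. Thus $T_y=\Tr_R[(\rho^{\top}\otimes I_B)(M_y^{\top_R}\otimes I_A)]$, hmm still $M_y^{\top_R}$ non-positive. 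Better: purify/decompose $\rho=\sum_k p_k\proj{\psi_k}$ and check positivity for pure $\rho=\proj\psi$; write $\ket\psi=(I_R\otimes K)\ket{\Gamma}$-style via an operator $K$, so $T_y$ becomes $(\bar K\otimes I_B)$-conjugation of a partial-trace of $M_y$ against an unnormalized maximally entangled vector, i.e. $T_y=(L\otimes I_B)M_y'(L\otimes I_B)^\dagger$ with $M_y'\ge0$ a suitable reshuffling of $M_y$ restricted to the support of $\Tr_A\rho$. This manifest sandwich form gives $T_y\ge0$, and convexity over $k$ finishes.
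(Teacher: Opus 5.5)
Your proof is correct, although the last step is only sketched (see below). The tester you construct is exactly the paper's; what differs is the organization.

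\textbf{How the two routes differ.} The paper purifies $\rho$ first, absorbing the purifying register into $R$. It writes $\ket{\psi}=(V\otimes I_A)\ket{\widetilde\Omega}$ and reads everything off the single sandwich $T_y=(V^\dagger\otimes I_B)M_y(V\otimes I_B)$:
\begin{itemize}
\item positivity is manifest;
\item the normalization is $V^\dagger V\otimes I_B=\tau^\top\otimes I_B$;
\item the probability identity follows from $(\id_R\otimes\Phi)(\proj{\psi})=(V\otimes I_B)\Gamma_\Phi(V^\dagger\otimes I_B)$ and cyclicity.
\end{itemize}
You instead use the Choi inversion formula to get the closed form $T_y=\Tr_R[(M_y\otimes I_A)(\rho^{\top_A}\otimes I_B)]$, which is linear in the original mixed $\rho$. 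The order of the two factors is immaterial: $\rho^{\top_A}\otimes I_B$ is a sum of terms $X_R\otimes Y_A\otimes I_B$, and $\Tr_R$ is cyclic with respect to operators acting on $R$ alone.

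This buys you the normalization $\sum_yT_y=(\Tr_R\rho)^\top\otimes I_B$ and the probability identity without enlarging the ancilla. Applied to a purification with POVM $I_{R'}\otimes M_y$, it also shows the paper's tester does not depend on the chosen purification. The price is that positivity is invisible in this form. You therefore have to fall back on the paper's vectorization for each pure component anyway, which makes the paper's route the shorter one.

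\textbf{Two clean-ups.} First, the final step needs no ``reshuffling'' and no restriction to a support. The word ``reshuffling'' is also misleading, since realignment maps do not preserve positivity. For $\ket{\psi}=\sum_i\ket{v_i}_R\ket{i}_A$ and $V\ket{i}=\ket{v_i}$, your own formula gives directly
\[
T_y=\sum_{i,j}\ketbra{j}{i}\otimes(\bra{v_j}\otimes I_B)M_y(\ket{v_i}\otimes I_B)=(V^\dagger\otimes I_B)M_y(V\otimes I_B)\succeq0 .
\]
Linearity in $\rho$ together with $\rho=\sum_kp_k\proj{\psi_k}$ then finishes the argument. Write this computation out rather than describing it.

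Second, delete the exploratory claim that $\rho^{\top_A}=(\rho^{\top_R})^\top$ is a full transpose of a positive operator. In fact $\rho^{\top_R}$ need not be positive (take $\rho$ maximally entangled). Since a full transpose preserves the spectrum, the claim would contradict your own correct remark that $\rho^{\top_A}$ need not be positive. It is harmless only because you abandon that line.
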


\noindent The proof of Lemma~\ref{lem:tester} is given by a standard purification-and-vectorization argument and is provided in Appendix~\ref{app:one-use-tester-proof}. 
The key point is that the operators $T_y$ depend only on the chosen experiment instead of the unknown channel. 
All dependence on $\Phi$ enters linearly through the normalized Choi operator $J_\Phi$.

Applying Lemma~\ref{lem:tester} at each history of an adaptive incoherent protocol gives the following transcript-wise representation. 
If $h=(y_1,\ldots,y_{t-1})$ is a possible history before round $t$, then the input state and POVM chosen by the protocol at $h$ induce tester elements $\{T_{h,y}\}_{y\in\calY_h}$ satisfying
\begin{equation}
\sum_{y\in\calY_h}T_{h,y}=\tau_h^\top\otimes I_B
\end{equation}
for some $\tau_h\in\calD(A)$, and
\begin{equation}
\Prob_\Phi[Y_t=y\mid h]=\din\,\Tr(T_{h,y}J_\Phi).
\end{equation}
Consequently, once a full transcript $z=(y_1,\ldots,y_T)$ is fixed, the adaptive choices along that transcript become deterministic.
We write
\begin{equation}
T_t(z):=T_{y_{<t},y_t},\qquad
y_{<t}:=(y_1,\ldots,y_{t-1}).
\end{equation}
Thus, for any reference channel $\Phi_0$ and any transcript $z$ with $P_{\Phi_0}(z)>0$, the likelihood ratio factors as
\begin{equation}
\frac{P_\Phi(z)}{P_{\Phi_0}(z)}=\prod_{t=1}^T\frac{\Tr(T_t(z)J_\Phi)}{\Tr(T_t(z)J_{\Phi_0})}.
\label{eq:transcript-likelihood-ratio}
\end{equation}
The factors $\din$ cancel in the ratio.
Eq.~\eqref{eq:transcript-likelihood-ratio} is the formal mechanism by which we handle adaptivity. 
Although the protocol may choose each experiment as an arbitrary function of the past, conditioning on a transcript freezes those choices. 
The lower bound will therefore control adaptive protocols by proving a likelihood-ratio estimate that holds for every deterministic sequence of tester elements arising along a transcript.

\subsection{Construction of the hard channel family} \label{sec:hard-channel-family}

We now construct the local family of channels used in our main results. 
The family is centered at the completely depolarizing channel with the normalized Choi operator
\begin{equation}
\Phi_{\mathrm{dep}}(\rho)=\Tr(\rho)\frac{I_B}{\dout},\quad
J_{\mathrm{dep}}=\frac{I_{AB}}{\D}.
\end{equation}

The perturbations are chosen so that they preserve the trace-preserving constraint exactly, while creating a high-dimensional set of locally separated channels.
Fix a universal constant $0<\sigma\le 10^{-3}$, we decompose the output space as
\begin{equation}
\begin{aligned}
B=B_0\oplus B_1&\oplus B_{\mathrm{rem}},\\
\dim B_0=\dim B_1=&\;s:=\left\lfloor\frac{\dout}{2}\right\rfloor.
\end{aligned}
\end{equation}
Set
\begin{equation}
\begin{aligned}
&K_0:=A\otimes B_0,\qquad
K_1:=A\otimes B_1,\\
&r:=\dim K_0=\dim K_1=\din s.
\end{aligned}
\end{equation}
Since $\dout\ge2$, this auxiliary dimension satisfies
\begin{equation}
\frac{\D}{3}\le r\le\frac{\D}{2}.
\label{eq:r-dimension-comparison}
\end{equation}
After fixing orthonormal bases of $K_0$ and $K_1$, we identify both spaces with $\C^r$. 
For $X\in\C^{r\times r}$, interpreted as an operator
$K_1\to K_0$, define the Hermitian off-diagonal block operator
\begin{equation}
E(X):=\begin{pmatrix}
0 & X\\
X^\dagger & 0
\end{pmatrix}
\end{equation}
on $K_0\oplus K_1$, extended by zero on $A\otimes B_{\mathrm{rem}}$.
The defining feature of this embedding is that the perturbation is invisible to the partial trace over $B$. 
Thus it preserves the affine constraint $\Tr_BJ=I_A/\din$ for Choi operators of channels.

\begin{lemma}[Off-diagonal perturbations]
\label{lem:E}
For every $X\in\C^{r\times r}$, we have
\begin{align}
\Tr_BE(X)=0,\quad
\norminf{E(X)}=\norminf{X},
\end{align}
and thus $\normone{E(X)}=2\normone{X}$ and $\normtwo{E(X)}=\sqrt2\normtwo{X}$.
\end{lemma}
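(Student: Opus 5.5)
The proof splits into two independent parts: a partial-trace computation and a spectral computation, followed by reading off the norms.

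\emph{Partial trace.} Write $K_0=A\otimes B_0$ and $K_1=A\otimes B_1$, and let $\{\ket{a}\}$ be a basis of $A$, $\{\ket{b_0}\}$ a basis of $B_0$, and $\{\ket{b_1}\}$ a basis of $B_1$. The identification $K_0\simeq\C^r\simeq K_1$ is arbitrary, so the operator $X:K_1\to K_0$ is a sum of terms $\ket{a}\bra{a'}\otimes\ket{b_0}\bra{b_1}$. Similarly, $X^\dagger$ is a sum of terms $\ket{a'}\bra{a}\otimes\ket{b_1}\bra{b_0}$. Taking $\Tr_B$ of each term gives $\ket{a}\bra{a'}\,\langle b_1|b_0\rangle=0$, because $B_0\perp B_1$ inside $B$. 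The same holds for the adjoint terms. By linearity $\Tr_B E(X)=0$. The zero extension on $A\otimes B_{\mathrm{rem}}$ contributes nothing.

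\emph{Spectrum.} Take a singular value decomposition $X=\sum_k s_k\ket{u_k}\bra{v_k}$, with orthonormal $\{u_k\}\subset K_0$ and $\{v_k\}\subset K_1$. For each $k$, the vectors $(u_k,\pm v_k)/\sqrt2$ are eigenvectors of $E(X)$ with eigenvalues $\pm s_k$. This gives $2r$ orthonormal vectors spanning $K_0\oplus K_1$, so together with the zero block they exhaust the spectrum of $E(X)$. The spectrum is therefore $\{\pm s_k\}_{k=1}^r$ together with zeros. An alternative route is to note $E(X)^2=XX^\dagger\oplus X^\dagger X$ and observe that both blocks have eigenvalues $s_k^2$.

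\emph{Norms.} From the spectrum, $\norminf{E(X)}=\max_k s_k=\norminf{X}$. Also $\normone{E(X)}=\sum_k 2s_k=2\normone{X}$ and $\normtwo{E(X)}^2=\sum_k 2s_k^2=2\normtwo{X}^2$, which gives the factor $\sqrt2$.

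There is no real obstacle. The only point that needs care is the block-basis bookkeeping in the partial trace: one must confirm that the orthogonality of $B_0$ and $B_1$, and not the particular identification of $K_0,K_1$ with $\C^r$, is what kills $\Tr_B E(X)$.
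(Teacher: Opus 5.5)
Your proposal is correct and follows essentially the same route as the paper. The partial trace vanishes because $E(X)$ only couples $A\otimes B_0$ to $A\otimes B_1$ and $B_0\perp B_1$. The norm identities come from each singular value of $X$ appearing twice, and the paper uses exactly your alternative $E(X)^\dagger E(X)=XX^\dagger\oplus X^\dagger X$; your explicit eigenvectors $(u_k,\pm v_k)/\sqrt2$ are a harmless refinement that also pins down the spectrum as $\{\pm s_k\}$.
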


\noindent The proof is a direct block-matrix computation and is deferred to Appendix~\ref{app:hard-family-algebra}. 

We are now ready to define the local family. 
Let
\begin{equation}
\begin{aligned}
S:=\{X\in\C^{r\times r}:\norminf{X}\le4\},\\
G:=\{X\in\C^{r\times r}:\norminf{X}\le3\}.
\end{aligned}
\end{equation}
For $X\in S$ and $\sigma\geq 0$, define
\begin{equation}
J_X:=\frac{I_{AB}}{\D}+\frac{\sigma}{\D}E(X).
\label{eq:hard-family-choi}
\end{equation}

\begin{lemma}[Validity of the hard family]
\label{lem:valid-family}
For every $X\in S$, $J_X$ is the normalized Choi operator of a valid channel $\Phi_X:\calL(A)\to\calL(B)$.
Moreover, for every $X\in G$,
\begin{equation}
\frac{1-3\sigma}{\D}I_{AB}\preceq J_X\preceq\frac{1+3\sigma}{\D}I_{AB}.
\label{eq:JX-well-conditioned-G}
\end{equation}
For every $X\in S$,
\begin{equation}
J_X\succeq\frac{1-4\sigma}{\D}I_{AB}.
\label{eq:JX-well-conditioned-S}
\end{equation}
\end{lemma}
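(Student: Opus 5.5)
We verify the two channel conditions from the Choi characterization, namely $J_X\ge0$ and $\Tr_B J_X=I_A/\din$, and obtain the spectral bounds as a by-product of the positivity argument. We treat the trace condition first, since it is immediate from Lemma~\ref{lem:E}.

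\textbf{Trace condition.} Linearity of the partial trace gives
\begin{equation}
\Tr_B J_X=\frac{1}{\D}\Tr_B I_{AB}+\frac{\sigma}{\D}\Tr_B E(X)=\frac{\dout}{\D}I_A+0=\frac{I_A}{\din}.
\end{equation}
This uses $\Tr_B E(X)=0$ from Lemma~\ref{lem:E}, and it holds for every $X$, with no norm restriction. The underlying reason is that $E(X)$ only couples $A\otimes B_0$ with $A\otimes B_1$, and these subspaces carry orthogonal $B$-supports. The $A$-block structure is therefore killed by the partial trace over $B$.

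\textbf{Spectral bounds.} $E(X)$ is Hermitian, so we use Lemma~\ref{lem:E} in the form $\norminf{E(X)}=\norminf{X}$. This yields the operator inequality
\begin{equation}
-\norminf{X}\,I_{AB}\preceq E(X)\preceq\norminf{X}\,I_{AB}.
\end{equation}
Here we use that $E(X)$ is extended by zero on $A\otimes B_{\mathrm{rem}}$, which only adds zero eigenvalues. Multiplying by $\sigma/\D\ge0$ and adding $I_{AB}/\D$ gives
\begin{equation}
\frac{1-\sigma\norminf{X}}{\D}I_{AB}\preceq J_X\preceq\frac{1+\sigma\norminf{X}}{\D}I_{AB}.
\end{equation}
For $X\in G$ we have $\norminf{X}\le3$, which gives \eqref{eq:JX-well-conditioned-G}. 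For $X\in S$ we have $\norminf{X}\le4$, which gives the lower bound \eqref{eq:JX-well-conditioned-S}.

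\textbf{Positivity and conclusion.} Since $\sigma\le10^{-3}$, we have $1-4\sigma>0$, so $J_X\succ0$ for every $X\in S$. Together with the trace condition, the Choi characterization recalled in the preliminaries then shows that $J_X$ is the normalized Choi operator of a unique CPTP map $\Phi_X$.

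The proof involves no real obstacle, since everything reduces to Lemma~\ref{lem:E}. The one point needing care is the spectrum of $E(X)$, which is exactly $\{\pm s_i(X)\}$ together with zeros, where $s_i(X)$ are the singular values of $X$. This is precisely the content of $\norminf{E(X)}=\norminf{X}$, and it justifies the two-sided operator inequality above.
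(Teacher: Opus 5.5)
Your proof is correct and follows essentially the same route as the paper. The trace condition comes from $\Tr_B E(X)=0$ in Lemma~\ref{lem:E}, the spectral bounds from $-\norminf{X}\, I_{AB}\preceq E(X)\preceq \norminf{X}\, I_{AB}$ specialized to $\norminf{X}\le 3$ on $G$ and $\norminf{X}\le 4$ on $S$, and strict positivity from $\sigma\le 10^{-3}$. The only difference is cosmetic: you state one bound in terms of $\norminf{X}$ and then specialize, while the paper handles the $S$ and $G$ cases separately.
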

\noindent Therefore, the hard family is a local perturbation of the depolarizing channel inside the full-rank part of the channel set. 
The set $G$ will be the \enquote{regular} part of the prior: under the prior that we introduce in the subsequent Section~\ref{sec:prior-regularity-local-geometry}, we will sample an $X$ with an overwhelming probability in $G$. 
The larger set $S$ gives a slightly thicker support on which the channels remain uniformly full rank.

Finally, the parametrization converts Schatten trace distance in the matrix parameter $X$ exactly into Choi trace distance.

\begin{lemma}[Trace distance in the hard family]
\label{lem:choi-trace-distance}
For any $X_0,X_0+W\in S$,
\begin{equation}
\normone{J_{X_0+W}-J_{X_0}}=\frac{2\sigma}{\D}\normone{W}.
\label{eq:choi-trace-distance-hard-family}
\end{equation}
\end{lemma}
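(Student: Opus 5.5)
The plan is to reduce this statement to Lemma~\ref{lem:E} by linearity of the parametrization. By the definition in Eq.~\eqref{eq:hard-family-choi}, the identity part $I_{AB}/\D$ cancels in the difference. The map $X\mapsto E(X)$ is real-linear, since $E(X+W)=E(X)+E(W)$ blockwise, including the adjoint block $(X+W)^\dagger=X^\dagger+W^\dagger$. Hence
\begin{equation}
J_{X_0+W}-J_{X_0}=\frac{\sigma}{\D}\bigl(E(X_0+W)-E(X_0)\bigr)=\frac{\sigma}{\D}E(W).
\end{equation}
Membership of $X_0$ and $X_0+W$ in $S$ is needed only so that both Choi operators belong to the hard family, as guaranteed by Lemma~\ref{lem:valid-family}. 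The operator $W$ itself is an arbitrary element of $\C^{r\times r}$, so Lemma~\ref{lem:E} applies to it.

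Next, take trace norms and use absolute homogeneity with $\sigma\ge0$ and $\D>0$:
\begin{equation}
\normone{J_{X_0+W}-J_{X_0}}=\frac{\sigma}{\D}\normone{E(W)}.
\end{equation}
Finally, invoke $\normone{E(W)}=2\normone{W}$ from Lemma~\ref{lem:E}. This identity holds because the spectrum of the Hermitian block operator $E(W)$ consists of $\pm$ the singular values of $W$, together with zeros on $A\otimes B_{\mathrm{rem}}$. Combining these gives $\frac{2\sigma}{\D}\normone{W}$, as claimed.

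No step is a real obstacle. The only point to check is that the zero extension on $A\otimes B_{\mathrm{rem}}$ and the identification $K_0,K_1\simeq\C^r$ do not affect the trace norm, and this is already built into Lemma~\ref{lem:E}.
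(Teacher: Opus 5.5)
Your proposal is correct and follows the paper's proof: additivity of $E$ gives $J_{X_0+W}-J_{X_0}=\frac{\sigma}{\D}E(W)$, and Lemma~\ref{lem:E}'s identity $\normone{E(W)}=2\normone{W}$ then yields the claim. Your remark that the spectrum of $E(W)$ is $\pm$ the singular values of $W$, padded with zeros, matches the paper's justification of that identity.
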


\noindent Consequently, a Choi trace ball of radius $\eta$ around $J_{X_0}$ corresponds to a Schatten-$1$ ball in the local parameter $W$ of radius
\begin{equation}
L_\eta:=\frac{\eta\D}{2\sigma}.
\label{eq:L-eta-definition}
\end{equation}
This exact conversion is the reason for using the off-diagonal embedding $E(X)$: it lets us phrase posterior concentration around the true channel as a local volume question in the matrix space $\C^{r\times r}$, where $r=\Theta(\D)$.

\subsection{Prior regularity and local geometry}\label{sec:prior-regularity-local-geometry}

We next put a prior on the hard family and present the local geometric facts used in the posterior argument. 
The prior is a truncated complex Ginibre measure on the matrix parameter $X$, which has two useful features: typical draws lie in the well-conditioned set $G$, and the prior density is sufficiently regular on its support $S$.

Quantitatively, let $\mu$ be the probability measure on $S=\{X\in\C^{r\times r}:\norminf{X}\le4\}$ with density
\begin{equation}
f_\mu(X)=\frac{1}{Z}\exp(-r\normtwo{X}^2)\mathbf 1_S(X)
\end{equation}
with respect to Lebesgue measure on $\C^{r\times r}\simeq\R^{2r^2}$, where
\begin{equation}
Z=\int_S \exp(-r\normtwo{X}^2)\,dX .
\end{equation}
Equivalently, $\mu$ is the complex Ginibre distribution with independent entries of density
\begin{equation}
z\mapsto \frac{r}{\pi}e^{-r|z|^2},\qquad
z\in\C,
\end{equation}
conditioned on $\norminf{X}\le4$.

\begin{lemma}[Prior regularity]
\label{lem:prior}
There exist universal constants $\cgin>0$, $\Cdens=4$, and an integer $r_{\mathrm{gin}}$ such that, for every $r\ge r_{\mathrm{gin}}$,
\begin{equation}
\mu(G)\ge1-e^{-\cgin r},\ G=\{X\in\C^{r\times r}:\norminf{X}\le3\}.
\end{equation}
Moreover, for every $r\ge1$ and every $X,X'\in S$,
\begin{equation}
\frac{f_\mu(X)}{f_\mu(X')}\le\exp(\Cdens \D^2).
\label{eq:prior-density-ratio}
\end{equation}
\end{lemma}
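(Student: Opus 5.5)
The lemma has two independent parts, and I would treat them separately: a tail bound for the unconditioned Ginibre measure, and a crude deterministic bound on the density ratio.

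\emph{Density ratio.} This part is elementary and holds for every $r\ge1$. On $S$ the normalizing constant cancels, so
\begin{equation}
\frac{f_\mu(X)}{f_\mu(X')}=\exp\bigl(r(\normtwo{X'}^2-\normtwo{X}^2)\bigr)\le\exp\bigl(r\normtwo{X'}^2\bigr).
\end{equation}
For $X'\in S$ we have $\normtwo{X'}^2\le r\norminf{X'}^2\le16r$, which gives the bound $\exp(16r^2)$. Since $r\le\D/2$ by \eqref{eq:r-dimension-comparison}, we get $16r^2\le4\D^2$, so $\Cdens=4$ works.

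\emph{Mass of $G$.} Let $\gamma$ be the unconditioned complex Ginibre measure with entry density $\frac{r}{\pi}e^{-r|z|^2}$. Then $\E|X_{ij}|^2=1/r$, and $\mu(\cdot)=\gamma(\cdot\cap S)/\gamma(S)$. Because $G\subseteq S$,
\begin{equation}
\mu(G)=\frac{\gamma(G)}{\gamma(S)}\ge\gamma(G)=1-\gamma(\norminf{X}>3).
\end{equation}
So it suffices to show $\gamma(\norminf{X}>3)\le e^{-\cgin r}$ for all large $r$, which I would do in two steps.

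First, the expectation. Under this normalization $\E\norminf{X}\le2$. One route is the Gordon/Slepian comparison for Gaussian matrices: write $X=(G_1+iG_2)/\sqrt{2r}$ with real standard Gaussian matrices $G_1,G_2$, and then $\E\norminf{X}\le(\E\norminf{G_1}+\E\norminf{G_2})/\sqrt{2r}\le2\sqrt{2r}/\sqrt{2r}\cdot\ldots$, which crudely gives at most $2\sqrt2<3$. Alternatively one can cite the standard bound $\E\norminf{X}\le2$ for complex Ginibre matrices with variance $1/r$ directly.

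Second, concentration. The map $X\mapsto\norminf{X}$ is $1$-Lipschitz in $\normtwo{\cdot}$. In the real coordinates, where the entries are Gaussian with variance $1/(2r)$ per real dimension, Gaussian concentration gives
\begin{equation}
\gamma\bigl(\norminf{X}\ge\E\norminf{X}+u\bigr)\le e^{-ru^2}.
\end{equation}
With $\E\norminf{X}\le2\sqrt2$ and $u=3-2\sqrt2$, this yields $\cgin=(3-2\sqrt2)^2$ up to slack, so no large-$r$ threshold is even needed here. If one prefers the sharp constant $\E\norminf{X}\le2+o(1)$, one instead takes $u=1/2$ and chooses $r_{\mathrm{gin}}$ to absorb the $o(1)$ term.

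The only delicate step is pinning down the expected-norm constant so that it lies strictly below $3$. The crude Gordon bound $2\sqrt2\approx2.83$ already suffices, so I would use it to avoid relying on asymptotic results.
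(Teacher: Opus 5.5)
Your proof is correct and follows the paper's: the density-ratio bound is identical, and for $\mu(G)$ you use the same reduction $\mu(G)\ge\gamma(G)$ followed by an operator-norm tail bound for Ginibre matrices. The only difference is that the paper cites $\Prob[\norminf{X}>2+t/\sqrt r]\le 2e^{-ct^2}$ (for $r\ge r_0$) with $t=\sqrt r$, whereas you derive a tail bound yourself from Gordon's inequality, $\E\norminf{X}\le(2\sqrt r+2\sqrt r)/\sqrt{2r}=2\sqrt2$ (the expression you wrote for this step is garbled, but the value is right), plus $1/\sqrt{2r}$-Lipschitz Gaussian concentration, which gives the explicit constant $\cgin=(3-2\sqrt2)^2$ for every $r\ge1$.
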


\noindent The proof, given in Appendix~\ref{app:prior-geometry}, uses only the standard operator-norm tail bound for complex Ginibre matrices. 
The first conclusion says that a draw from the prior is regular with overwhelming probability.
The second is a deliberately crude density-ratio bound costing only $\exp(O(\D^2))$, which is of the same order as the dimension of the local parameter space and is harmless in the final volume comparison.

We now define the local neighborhoods used to test posterior concentration.
For $X_0\in S$ and $\eta>0$, let
\begin{equation}
B_\eta(X_0):=\{X\in S:\normone{J_X-J_{X_0}}\le\eta\}
\end{equation}
be the set of matrices $X$ in the hard family whose Choi operators are within trace-norm distance $\eta$ of $J_{X_0}$. 
By Lemma~\ref{lem:choi-trace-distance},
\begin{equation}
B_\eta(X_0)=\{X_0+W\in S:\normone{W}\le L_\eta\},\ L_\eta:=\frac{\eta\D}{2\sigma}.
\label{eq:Beta-Leta}
\end{equation}
We also write
\begin{equation}
\widetilde B_\eta(X_0):=\{X_0+W:\normone{W}\le L_\eta\}
\end{equation}
for the unrestricted translate, so that
\begin{equation}
\vol(\widetilde B_\eta(X_0))=\vol\{W:\normone{W}\le L_\eta\}.
\end{equation}
The posterior anti-concentration argument will compare the small ball $B_\eta(X_0)$ with a larger, regularized neighborhood. 
For $C>1$, we define
\begin{equation}
N_{C,\eta}:=\left\{W\in\C^{r\times r}:\normone{W}\le C L_\eta,\;\norminf{W}\le\frac{C L_\eta}{4r}\right\}.
\label{eq:NCeta-definition}
\end{equation}
The condition $\normone{W}\le C L_\eta$ places $X_0+N_{C,\eta}$ inside the Choi trace ball $B_{C\eta}(X_0)$, while the additional operator-norm constraint ensures that every perturbation is pointwise small enough for the likelihood-ratio estimates below.

\begin{lemma}[Support preservation]
\label{lem:support}
If $\eta\le\tfrac{8\sigma}{3C}$, then for every $X_0\in G$ and every $W\in N_{C,\eta}$, we have
\begin{equation}
X_0+W\in S.
\end{equation}
\end{lemma}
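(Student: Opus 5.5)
The plan is to reduce the statement to the triangle inequality for the operator norm. We have $X_0\in G$, so $\norminf{X_0}\le 3$, and $S$ is the operator-norm ball of radius $4$. It therefore suffices to show that every $W\in N_{C,\eta}$ has $\norminf{W}\le 1$. Then
\begin{equation}
\norminf{X_0+W}\le\norminf{X_0}+\norminf{W}\le 3+1=4,
\end{equation}
which gives $X_0+W\in S$ directly from the definition of $S$.

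To bound $\norminf{W}$, I would use only the operator-norm constraint built into the definition \eqref{eq:NCeta-definition}, namely $\norminf{W}\le C L_\eta/(4r)$. The trace-norm constraint is not needed. Substituting $L_\eta=\eta\D/(2\sigma)$ from \eqref{eq:L-eta-definition} gives
\begin{equation}
\norminf{W}\le\frac{C\eta\D}{8\sigma r}.
\end{equation}
Next, I would apply the dimension comparison \eqref{eq:r-dimension-comparison}, $r\ge\D/3$, which holds because $\dout\ge2$. This yields $\D/r\le 3$, and hence $\norminf{W}\le 3C\eta/(8\sigma)$.

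Finally, the hypothesis $\eta\le 8\sigma/(3C)$ is exactly what makes this quantity at most $1$: $3C\eta/(8\sigma)\le 3C\cdot 8\sigma/(3C\cdot 8\sigma)=1$. This completes the argument.

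There is no real obstacle here. The only point needing care is the direction of the dimension comparison: we need the lower bound $r\ge\D/3$, coming from $s=\lfloor\dout/2\rfloor\ge\dout/3$ for $\dout\ge2$, rather than the upper bound $r\le\D/2$. That lower bound converts the $1/r$ scaling of the operator-norm constraint into a $1/\D$ scaling, which cancels the factor $\D$ in $L_\eta$.
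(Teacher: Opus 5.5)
Your proposal is correct and matches the paper's proof essentially line by line. It uses the operator-norm constraint in $N_{C,\eta}$ together with $L_\eta=\eta\D/(2\sigma)$ and $r\ge\D/3$ to get $\norminf{W}\le 3C\eta/(8\sigma)\le 1$, and then the triangle inequality with $\norminf{X_0}\le 3$ gives $X_0+W\in S$.
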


\noindent Thus, for regular centers $X_0\in G$, the translated set $X_0+N_{C,\eta}$ remains inside the support of the prior. 
In particular, under the same condition on $\eta$,
\begin{equation}
X_0+N_{C,\eta}\subseteq B_{C\eta}(X_0).
    \label{eq:NCeta-contained-large-ball}
\end{equation}

The final geometry on the input is that the regularized enlargement $N_{C,\eta}$ has exponentially larger volume than the original trace ball once $C$ is chosen large enough, which results in the entropy source for the anti-concentration argument.

\begin{lemma}[Local Schatten volume estimate]
\label{lem:volume}
There exist universal constants $\cvol>0$ and $r_{\mathrm{vol}}\in\N$ such that, for every $r\ge r_{\mathrm{vol}}$,
\begin{equation}
\frac{\vol\{W:\normone{W}\le1,\ \norminf{W}\le1/(4r)\}}{\vol\{W:\normone{W}\le1\}}\ge e^{-\cvol r^2}.
\label{eq:unit-volume-estimate}
\end{equation}
Consequently, for every $C>1$ and every $\eta>0$,
\begin{equation}
\frac{\vol(N_{C,\eta})}{\vol\{W:\normone{W}\le L_\eta\}}\ge C^{2r^2}e^{-\cvol r^2}.
\label{eq:scaled-volume-estimate}
\end{equation}
\end{lemma}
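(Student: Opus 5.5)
The plan is to first observe that the operator-norm constraint makes the trace-norm constraint redundant, which reduces \eqref{eq:unit-volume-estimate} to a ratio of two Schatten-ball volumes. I will then bound that ratio with one Gaussian (Ginibre) lower bound and one duality upper bound.

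\textbf{Reduction.} Every $W\in\C^{r\times r}$ satisfies $\normone{W}\le r\norminf{W}$. Hence $\norminf{W}\le 1/(4r)$ already forces $\normone{W}\le 1/4\le 1$, so the numerator of \eqref{eq:unit-volume-estimate} is exactly $\vol(K_\infty)$, where $K_\infty:=\{W:\norminf{W}\le 1/(4r)\}$. Writing $B_p:=\{W:\|W\|_p\le1\}$ and $n:=2r^2$ for the real dimension, it therefore suffices to show
\begin{equation}
\frac{(4r)^{-n}\vol(B_\infty)}{\vol(B_1)}\ge e^{-\cvol r^2}.
\end{equation}

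\textbf{Lower bound on $\vol(B_\infty)$.} Let $G$ be the complex Ginibre matrix of Section~\ref{sec:prior-regularity-local-geometry}, with density $(r/\pi)^{r^2}e^{-r\normtwo{X}^2}\le (r/\pi)^{r^2}$. By the operator-norm tail used in Lemma~\ref{lem:prior}, we have $\Prob[\norminf{G}\le 3]\ge 1/2$ for $r$ large. Dividing this probability by the maximal density gives $\vol(3B_\infty)\ge \tfrac12(\pi/r)^{r^2}$, and hence
\begin{equation}
\vol(B_\infty)\ge \tfrac12\left(\frac{\pi}{9r}\right)^{r^2}.
\end{equation}

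\textbf{Upper bound on $\vol(B_1)$.} This is the step I expect to be the main obstacle. The crude inclusion $B_1\subseteq B_2$ only gives $\vol(B_1)^{1/n}\lesssim r^{-1}$, whereas we need the sharp order $r^{-3/2}$; with the cruder bound the final ratio would lose a factor $r^{r^2}$. I will use duality instead. On $\R^{n}$ with the real inner product $\re\Tr(A^\dagger B)$, the polar of $B_\infty$ is exactly $B_1$, by trace-norm/operator-norm duality. The Blaschke--Santaló inequality, $\vol(K)\vol(K^\circ)\le\vol(B_2)^2$ for symmetric convex $K$, then gives
\begin{equation}
\vol(B_1)\le \frac{\vol(B_2)^2}{\vol(B_\infty)}\le 2\left(\frac{\pi^{r^2}}{(r^2)!}\right)^2\left(\frac{9r}{\pi}\right)^{r^2}\le 2\left(\frac{9e^2\pi}{r^3}\right)^{r^2},
\end{equation}
where the last step uses $(r^2)!\ge (r^2/e)^{r^2}$.

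\textbf{Combining the bounds.} Dividing the two estimates, the powers of $r$ cancel exactly:
\begin{equation}
\frac{(4r)^{-2r^2}\cdot\tfrac12(\pi/(9r))^{r^2}}{2(9e^2\pi/r^3)^{r^2}}=\frac14\left(\frac{1}{1296\,e^2}\right)^{r^2}\ge e^{-\cvol r^2}
\end{equation}
for a suitable universal $\cvol$ and all $r\ge r_{\mathrm{vol}}$. The threshold $r_{\mathrm{vol}}$ absorbs the requirement from the Ginibre tail bound and the prefactor $1/4$.

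\textbf{Scaled estimate.} By definition \eqref{eq:NCeta-definition}, $N_{C,\eta}=CL_\eta\cdot\{W:\normone{W}\le1,\ \norminf{W}\le 1/(4r)\}$, while $\{W:\normone{W}\le L_\eta\}=L_\eta B_1$. Lebesgue measure on $\R^{2r^2}$ is homogeneous of degree $2r^2$. The ratio in \eqref{eq:scaled-volume-estimate} is therefore $C^{2r^2}$ times the unit ratio, which is at least $C^{2r^2}e^{-\cvol r^2}$ by \eqref{eq:unit-volume-estimate}.
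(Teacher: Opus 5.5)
Your proof is correct, and it reaches the key estimate $\vol(B_\infty^r)/\vol(B_1^r)\ge (ar)^{2r^2}$ by a different route from the paper. The paper cites the exact volume--radius asymptotics for complex Schatten balls of Kabluchko--Prochno--Th\"ale, which give $\vol(B_p^r)^{1/(2r^2)}\asymp r^{-1/2-1/p}$, and reads off the ratio for large $r$. The rest of the argument is identical to yours: the inclusion $\frac{1}{4r}B_\infty^r\subseteq\{W:\normone{W}\le1,\ \norminf{W}\le 1/(4r)\}$ and the homogeneity/scaling step.

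You replace that citation with two elementary ingredients. First, you get the $r^{-1/2}$ lower bound on $\vol(B_\infty^r)^{1/(2r^2)}$ from the Ginibre operator-norm tail, the same input the paper already uses in Lemma~\ref{lem:prior}. Second, you convert that same lower bound into the sharp $r^{-3/2}$ upper bound on $\vol(B_1^r)^{1/(2r^2)}$ via Blaschke--Santal\'o. This uses that $B_1^r$ is the polar of $B_\infty^r$ for the real inner product $\re\Tr(A^\dagger B)$, whose Euclidean ball is the Frobenius ball. In effect you bound the ratio below by $(4r)^{-2r^2}\bigl(\vol(B_\infty^r)/\vol(B_2^r)\bigr)^2$. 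Your arithmetic checks out: the powers of $r$ cancel and you get $\tfrac14(1296e^2)^{-r^2}$.

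Your approach is self-contained and gives an explicit constant: $\cvol=\log(1296e^2)+\log 4$ works for every $r$ past the Ginibre threshold. It uses only a classical convex-geometry inequality and no new random-matrix input. You also correctly identified why something sharper than $B_1^r\subseteq B_2^r$ is needed. The paper's citation is shorter and yields sharp asymptotic constants, which are not needed here, but it leaves $a$ and $r_{\mathrm{vol}}$ implicit and relies on a more sophisticated external asymptotic result.
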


\noindent The proof of Lemma~\ref{lem:volume}, deferred to Appendix~\ref{app:prior-geometry}, uses the standard volume-radius asymptotic approximation for complex Schatten balls~\cite{KabluchkoProchnoThale2020}. 
The exponent $2r^2$ is the dimension of $\C^{r\times r}$. 
Since $r=\Theta(\D)$, the volume gain from enlarging by a constant factor is of order $\exp(\Theta(\D^2))$, which will dominate the prior density-ratio loss in Eq.~\eqref{eq:prior-density-ratio} and the likelihood loss controlled in the next section.

\subsection{The transcript-wise adaptive tester tilt lemma}\label{sec:transcript-wise-tilt}

We now state the main technical estimate that handles adaptivity. 
A standard Bayesian lower bound might try to control the information gained in each round. 
However, such a single-step bound is not sufficient here by itself, because an adaptive protocol may choose later testers that are highly tuned to the posterior produced by earlier outcomes. 
We observe that, once the transcript is fixed, the adaptive protocol becomes a deterministic sequence of tester elements, as in Eq.~\eqref{eq:transcript-likelihood-ratio}. 
The lemma below controls the average likelihood ratio over a local Schatten neighborhood for every such frozen transcript.

Fixing a regular center $X_0\in G$ and a perturbation
$W\in N_{C,\eta}$, we write
\begin{equation}
\Delta_W:=J_{X_0+W}-J_{X_0}=\frac{\sigma}{\D}E(W).
\end{equation}
The assumptions in the next lemma ensure that by Lemma~\ref{lem:support}, $X_0+W$ remains in the support $S$ of the hard family, and that each single-step likelihood perturbation is uniformly small enough for quadratic logarithmic estimates to apply.

\begin{lemma}[Transcript-wise adaptive tester tilt]
\label{lem:tilt}
Fix $X_0\in G$ and $C>1$.
Assume $\eta>0$ satisfies
\begin{equation}
\eta\le\frac{8\sigma}{3C},\qquad
\frac{3C\eta}{8(1-3\sigma)}\le\frac{1}{10}.
\label{eq:tilt-eta-assumptions}
\end{equation}
Let $z=(y_1,\ldots,y_T)$ be a transcript with positive probability under $\Phi_{X_0}$. 
Along this transcript, let $T_t(z):=T_{y_{<t},y_t}$ with $t=1,\ldots,T$ be the tester elements induced by the adaptive protocol. 
For $W\in N_{C,\eta}$, define the transcript likelihood ratio
\begin{equation}
\Lambda_W(z):=\prod_{t=1}^T\frac{\Tr(T_t(z)J_{X_0+W})}{\Tr(T_t(z)J_{X_0})}.
\label{eq:Lambda-W-definition}
\end{equation}
Let $W$ be uniformly distributed on $N_{C,\eta}$. Then
\begin{equation}
\E_W[\Lambda_W(z)]\ge\exp\left(-\Ktilt\frac{C^2\eta^2T}{\D}\right),
\label{eq:tilt-bound}
\end{equation}
where we can take $\Ktilt=\tfrac{27}{2(1-3\sigma)^2}$.
\end{lemma}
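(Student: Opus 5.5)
The plan is to pass from the average of $\Lambda_W(z)$ to the average of its logarithm via Jensen's inequality, $\E_W[\Lambda_W(z)]\ge\exp(\E_W[\log\Lambda_W(z)])$. We then expand each factor to second order, kill the first-order term by symmetry of $N_{C,\eta}$, and bound the second-order term by an isotropy computation. The transcript $z$ is fixed, so the tester elements $T_t:=T_t(z)$ are deterministic positive operators and do not depend on $W$. This is exactly how adaptivity is neutralized: the bound must hold for every sequence $T_1,\dots,T_T$ of positive operators. Write
\[
\Lambda_W(z)=\prod_{t=1}^T\bigl(1+a_t(W)\bigr),\qquad a_t(W):=\frac{\Tr(T_t\Delta_W)}{\Tr(T_tJ_{X_0})}=\frac{\sigma}{\D}\frac{\Tr(T_tE(W))}{\Tr(T_tJ_{X_0})}.
\]
Each denominator is positive because $J_{X_0}$ is full rank by Lemma~\ref{lem:valid-family}. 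Since $X_0\in G$, that lemma also gives $\Tr(T_tJ_{X_0})\ge\frac{1-3\sigma}{\D}\Tr T_t$. Set $P_t:=T_t/\Tr T_t$, a density operator. Then
\[
|a_t(W)|\le\frac{\sigma}{1-3\sigma}\,|\Tr(P_tE(W))|.
\]

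\emph{Step 1 (uniform smallness).} We have $|\Tr(P_tE(W))|\le\norminf{E(W)}=\norminf{W}\le CL_\eta/(4r)$ by Lemma~\ref{lem:E}. Substituting $L_\eta=\eta\D/(2\sigma)$ and using $r\ge\D/3$ from Eq.~\eqref{eq:r-dimension-comparison} gives
\[
|a_t|\le\frac{3C\eta}{8(1-3\sigma)}\le\frac1{10}
\]
by Eq.~\eqref{eq:tilt-eta-assumptions}. On $|a|\le 1/10$ we have the elementary inequality $\log(1+a)\ge a-\kappa a^2$ with an explicit $\kappa$ slightly above $1/2$; we use $\kappa\le1/2\cdot(1+O(1/10))$, or simply $\kappa=1$ if needed for a clean constant. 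Lemma~\ref{lem:support} guarantees $X_0+W\in S$, so every $\Lambda_W$ is a genuine likelihood ratio within the family.

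\emph{Step 2 (first order vanishes).} The set $N_{C,\eta}$ is invariant under $W\mapsto -W$. Since $a_t$ is linear in $W$, we get $\E_W[a_t(W)]=0$. Hence
\[
\E_W\log\Lambda_W\ge-\kappa\sum_{t=1}^T\E_W[a_t(W)^2].
\]

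\emph{Step 3 (second moment, the main step).} The set $N_{C,\eta}$ is invariant under $W\mapsto UWV$ for unitaries $U,V$, and under $W\mapsto e^{i\phi}W$. Consequently $\E[W\otimes W]=0$ and
\[
\E[W_{ij}\overline{W_{kl}}]=\delta_{ik}\delta_{jl}\,\frac{\E\normtwo{W}^2}{r^2}.
\]
Writing $P_t^{10}$ for the $K_1\to K_0$-type off-diagonal block of $P_t$ that pairs with $W$, we have $\Tr(P_tE(W))=2\re\Tr(P_t^{10}W)$. Therefore
\[
\E[\Tr(P_tE(W))^2]\le 4\,\E|\Tr(P_t^{10}W)|^2=\frac{4\normtwo{P_t^{10}}^2\,\E\normtwo{W}^2}{r^2}\le\frac{4\,\E\normtwo{W}^2}{r^2}.
\]
The last inequality uses $\normtwo{P_t^{10}}\le\normtwo{P_t}\le1$. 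On $N_{C,\eta}$ we also have $\normtwo{W}^2\le\normone{W}\norminf{W}\le C^2L_\eta^2/(4r)$, so
\[
\E[a_t^2]\le\frac{\sigma^2}{(1-3\sigma)^2}\,\frac{C^2L_\eta^2}{r^3}=\frac{C^2\eta^2\D^2}{4(1-3\sigma)^2r^3}\le\frac{27\,C^2\eta^2}{4(1-3\sigma)^2\D}.
\]
The final bound again uses $r\ge\D/3$. Summing over $t$ and multiplying by $\kappa\le 2$ yields $\E_W\log\Lambda_W\ge-\Ktilt C^2\eta^2T/\D$ with $\Ktilt=27/(2(1-3\sigma)^2)$. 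Jensen's inequality then concludes Eq.~\eqref{eq:tilt-bound}.

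The main obstacle is Step 3. A naive bound $|a_t|\lesssim\norminf{W}$ only gives $\E[a_t^2]\sim C^2\eta^2$, which loses a full factor of $\D$. The gain comes precisely from averaging over the bi-unitarily invariant neighborhood: it spreads any fixed tester direction over all $r^2$ matrix coordinates, and this is what makes the bound independent of how the adaptive protocol tuned $T_t$. The care needed is in verifying the invariances of $N_{C,\eta}$ (both constraints are unitarily invariant norms) and in handling the real-part/block bookkeeping, so that the constant $27/2$ comes out.
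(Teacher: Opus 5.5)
Your proof is correct and follows the paper's argument essentially step for step. You show $|a_t|\le 1/10$ from the operator-norm constraint, kill the first moment by the symmetry $W\mapsto -W$, and use bi-unitary and phase invariance of $N_{C,\eta}$ to get an isotropic covariance, which gives $\E_W a_t^2\le 27C^2\eta^2/(4(1-3\sigma)^2\D)$; then $\log(1+u)\ge u-\kappa u^2$ and Jensen finish it. The only differences are cosmetic: you normalize $T_t$ to a density operator up front, you hedge on $\kappa$ where the paper simply takes $\kappa=2$ to land exactly on $\Ktilt=27/(2(1-3\sigma)^2)$, and the block pairing with $W$ in $\Tr(P_t^{10}W)$ is really $P_1P_tP_0$ rather than the $K_1\to K_0$ block, which does not affect the Hilbert--Schmidt bound.
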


\noindent Lemma~\ref{lem:tilt} shows that, averaged over the symmetric neighborhood $N_{C,\eta}$, no fixed adaptive transcript can suppress the likelihood by more than $\exp(O(C^2\eta^2T/\D))$. 
The factor $1/\D$ per query to channel is the key quantitative gain. 
In the posterior argument, it will be compared against the volume growth
\begin{equation}
\frac{\vol(N_{C,\eta})}{\vol\{W:\normone{W}\le L_\eta\}}\ge C^{2r^2}e^{-\cvol r^2},
\end{equation}
which is exponential in $r^2=\Theta(\D^2)$. 
Balancing these two terms gives
the lower-bound asymptotic scaling
\begin{equation}
T=\Omega\left(\frac{\D^3}{\eta^2}\right).
\end{equation}

\noindent We prove Lemma~\ref{lem:tilt} in Appendix~\ref{app:tilt-proof}. 
The proof has three ingredients. 
First, the operator-norm constraint in $N_{C,\eta}$ makes every single-step likelihood ratio close to one. 
Second, the off-diagonal embedding $E(W)$ reduces each linear fluctuation to a matrix inner product $\Tr(B_t^\dagger W)$ against the off-diagonal block of the tester. 
Third, the unitary invariance of $N_{C,\eta}$ gives an isotropic covariance bound for $W$, which yields a transcript-wise second-moment estimate. 
Jensen's inequality then converts this second-moment control into the lower bound in Eq.~\eqref{eq:tilt-bound}.

\subsection{Posterior anti-concentration}\label{sec:posterior-anti-concentration}

We now convert the transcript-wise tilt estimate into the main Bayesian statement used in the main result. 
For $X\in S$, let $P_X$ denote the law of the full transcript $Z$ when the unknown channel is $\Phi_X$. 
The first observation is that all channels in the hard family induce the same transcript support.

\begin{lemma}[Common transcript support]
\label{lem:common-support}
For every $X,X'\in S$ and every full transcript $z$,
\begin{equation}
P_X(z)>0\qquad\Longleftrightarrow\qquad P_{X'}(z)>0 .
\end{equation}
Consequently, for any fixed $X_0\in S$, the transcript laws $P_X$ and $P_{X_0}$ are mutually absolutely continuous on the discrete transcript space for every $X\in S$.
\end{lemma}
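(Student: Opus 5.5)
The plan is to show that every single-round factor in the transcript likelihood is positive under one channel of the hard family exactly when it is positive under every other, and then multiply. By the transcript-wise representation following Lemma~\ref{lem:tester}, for a deterministic protocol and a full transcript $z=(y_1,\ldots,y_T)$ we have
\begin{equation}
P_X(z)=\prod_{t=1}^T \din\,\Tr\bigl(T_t(z)J_X\bigr),
\end{equation}
where the tester elements $T_t(z)=T_{y_{<t},y_t}$ are positive semidefinite and depend only on the transcript, not on $X$. Hence $P_X(z)>0$ if and only if $\Tr(T_t(z)J_X)>0$ for every $t$. It therefore suffices to prove, for each fixed positive operator $T\succeq0$ on $A\otimes B$, that $\Tr(TJ_X)>0$ holds for one $X\in S$ if and only if it holds for all $X\in S$.

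For this equivalence I will use the two-sided comparison of $J_X$ with the identity. The lower bound Eq.~\eqref{eq:JX-well-conditioned-S} of Lemma~\ref{lem:valid-family} gives $J_X\succeq\frac{1-4\sigma}{\D}I_{AB}$ for every $X\in S$. For an upper bound on $S$ (Lemma~\ref{lem:valid-family} states one only on $G$), I use Lemma~\ref{lem:E}: $\norminf{E(X)}=\norminf{X}\le4$, so $J_X\preceq\frac{1+4\sigma}{\D}I_{AB}$. Since $T\succeq0$, this yields
\begin{equation}
\frac{1-4\sigma}{\D}\Tr T\le\Tr(TJ_X)\le\frac{1+4\sigma}{\D}\Tr T .
\end{equation}
Because $\sigma\le10^{-3}$, we have $1-4\sigma>0$. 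Thus $\Tr(TJ_X)>0$ if and only if $\Tr T>0$, which is equivalent to $T\neq0$, and this condition does not depend on $X$. Applying it to every factor $t=1,\ldots,T$ gives $P_X(z)>0\Leftrightarrow P_{X'}(z)>0$.

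Absolute continuity follows directly. On a countable discrete transcript space, two laws with the same set of positive-probability points are mutually absolutely continuous: a set is null exactly when it contains no point of the common support. As a by-product, the two-sided bound shows that each likelihood-ratio factor lies in $\bigl[\frac{1-4\sigma}{1+4\sigma},\frac{1+4\sigma}{1-4\sigma}\bigr]$, so $R_X^{X_0}(z)$ is well defined and finite.

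I do not expect a genuine obstacle. The only points needing care are these. First, one must check that the conditional-probability factorization of the adaptive protocol is valid along every history, including histories of zero probability; this is fine because the factorization is an identity of products and a zero factor simply makes both sides zero. Second, randomized protocols should be handled by conditioning on the seed, as described in the preliminaries.
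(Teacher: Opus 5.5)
Your proof is correct and follows the paper's argument. Like the paper, you factor $P_X(z)$ into per-round terms $\din\Tr(T_{h,y}J_X)$ and use the uniform full-rank bound $J_X\succeq\frac{1-4\sigma}{\D}I_{AB}$ to show each factor is positive exactly when $T_{h,y}\neq0$. Your additional upper bound $J_X\preceq\frac{1+4\sigma}{\D}I_{AB}$ is not needed for the equivalence, and the paper does not use it, but it is valid and harmlessly bounds each likelihood-ratio factor.
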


\noindent The proof is given in Appendix~\ref{app:posterior-anti-concentration-proof}. 
It uses only the uniform full-rank lower bound in Eq.~\eqref{eq:JX-well-conditioned-S}: at any history, whether a single-step outcome has positive probability is determined only by whether the corresponding tester element is nonzero, and not by the choice of $X\in S$.

Fixing $X_0\in G$, for a transcript $z$ in the common support, we define the likelihood ratio as
\begin{equation}
R_X(z):=\frac{P_X(z)}{P_{X_0}(z)} .
\end{equation}
By Lemma~\ref{lem:common-support}, $R_X(z)$ is positive and finite for all $X\in S$ on this support. 
Moreover,
\begin{equation}
\E_{Z\sim P_{X_0}}R_X(Z)=1 .
\end{equation}
Let $X\sim\mu$ and, conditional on $X$, let $Z\sim P_X$. 
For every transcript $z$ with positive prior predictive probability, we have
\begin{equation}
P_\mu(z):=\int_S P_X(z)f_\mu(X)\,dX,
\end{equation}
the posterior is
\begin{equation}
\nu_z(A):=\frac{\int_{A\cap S}P_X(z)f_\mu(X)\,dX}{\int_S P_X(z)f_\mu(X)\,dX},\quad
A\subseteq S .
\end{equation}
Equivalently, for a fixed $X_0\in G$ and a transcript $z$ in the common support,
\begin{equation}
\nu_z(A)=\frac{\int_{A\cap S}R_X(z)f_\mu(X)\,dX}{\int_S R_X(z)f_\mu(X)\,dX}.
\end{equation}

The following lemma is the posterior anti-concentration statement, which indicates that, below the sample size $\D^3/\eta^2$, the posterior cannot assign substantial mass to a Choi trace ball of radius $\eta$ around the true channel, with high probability over transcripts generated by that true channel.

\begin{lemma}[Adaptive posterior anti-concentration]
\label{lem:anti}
There exist universal constants $C>1$, $\eta_\star>0$, $\cac>0$, $\kappa>0$, and an integer $r_\star\in\N$ such that the following holds for every
$r\ge r_\star$. 
Let
\begin{equation}
0<\eta\le\eta_\star,\qquad
T\le \cac\frac{\D^3}{\eta^2}.
\end{equation}
Then, for every $X_0\in G$,
\begin{equation}
P_{X_0}\left[\nu_Z(B_\eta(X_0))\le e^{-\kappa\D^2}\right]\ge 1-e^{-\D^2}.
\label{eq:posterior-anti-concentration}
\end{equation}
\end{lemma}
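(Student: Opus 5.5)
We bound the posterior mass of $B_\eta(X_0)$ by a ratio of two integrals and control the numerator and denominator separately. Writing
\begin{equation}
\nu_z(B_\eta(X_0))=\frac{\int_{B_\eta(X_0)}R_X(z)f_\mu(X)\,dX}{\int_S R_X(z)f_\mu(X)\,dX},
\end{equation}
we lower bound the denominator by restricting to $X_0+N_{C,\eta}$, which lies in $S$ by Lemma~\ref{lem:support} once $\eta_\star\le 8\sigma/(3C)$. There we apply the prior density-ratio bound of Lemma~\ref{lem:prior} in the form $f_\mu(X)\ge e^{-\Cdens\D^2}\max_S f_\mu$. Together with Lemma~\ref{lem:tilt}, this gives
\begin{equation}
\int_S R_Xf_\mu\ge e^{-\Cdens\D^2}\Big(\max_S f_\mu\Big)\vol(N_{C,\eta})\,\E_W[\Lambda_W(z)]\ge e^{-\Cdens\D^2}\Big(\max_S f_\mu\Big)\vol(N_{C,\eta})\,e^{-\Ktilt C^2\eta^2T/\D},
\end{equation}
valid for every transcript in the common support of Lemma~\ref{lem:common-support}. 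This bound is deterministic in $z$, which is exactly how adaptivity is neutralized.

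The numerator cannot be bounded pointwise in $z$, since $R_X(z)$ may be huge for some $X$ near $X_0$. We therefore control it in expectation and apply Markov's inequality. Let $N(z):=\int_{B_\eta(X_0)}R_X(z)f_\mu(X)\,dX$. By Fubini and $\E_{P_{X_0}}R_X=1$,
\begin{equation}
\E_{Z\sim P_{X_0}}N(Z)=\mu(B_\eta(X_0))\le\Big(\max_S f_\mu\Big)\vol\{W:\normone{W}\le L_\eta\}.
\end{equation}
Markov's inequality then gives $N(Z)\le e^{\D^2}\mu(B_\eta(X_0))$ with $P_{X_0}$-probability at least $1-e^{-\D^2}$. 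The factor $\max_S f_\mu$ cancels in the ratio.

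On this event, Lemma~\ref{lem:volume} yields
\begin{equation}
\nu_Z(B_\eta(X_0))\le \exp\Big(\D^2+\Cdens\D^2+\cvol r^2-2r^2\log C+\Ktilt C^2\eta^2T/\D\Big).
\end{equation}
Using $\D/3\le r\le\D/2$ from Eq.~\eqref{eq:r-dimension-comparison}, the term $-2r^2\log C$ is at most $-\tfrac{2}{9}\D^2\log C$. The positive terms are $(1+\Cdens)\D^2+\cvol\D^2/4$ and, for $T\le\cac\D^3/\eta^2$, at most $\Ktilt C^2\cac\D^2$.

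The constants are fixed in this order. First choose $C$ so large that $\tfrac{2}{9}\log C\ge 1+\Cdens+\cvol/4+2$. Then choose $\cac\le 1/(\Ktilt C^2)$, which leaves an exponent of at most $-\D^2$, so $\kappa=1$ works. Next choose $\eta_\star$ small enough to satisfy both conditions in Eq.~\eqref{eq:tilt-eta-assumptions}. Finally take $r_\star\ge\max(r_{\mathrm{vol}},r_{\mathrm{gin}})$.

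The main obstacle is the ordering of the constants: $C$ must be fixed before $\cac$ and $\eta_\star$, because the tilt loss scales as $C^2$ while the volume gain scales only as $\log C$. The scheme above resolves this. The other delicate point is measure-theoretic bookkeeping: the Fubini/Markov step uses that $R_X(z)$ is jointly measurable and that $P_{X_0}(z)>0$ on the relevant support, both of which follow from Lemma~\ref{lem:common-support}.
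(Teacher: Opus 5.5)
Your proposal is correct and follows essentially the same route as the paper. The paper likewise uses a Markov bound on the likelihood integral over the small ball $B_\eta(X_0)$, the transcript-wise tilt lemma to lower bound the likelihood integral over $X_0+N_{C,\eta}$, and the volume estimate of Lemma~\ref{lem:volume} to beat the prior density-ratio and tilt losses. The differences are cosmetic: you put the prior weight $f_\mu$ inside the Markov step and bound the full posterior denominator directly, which lets you take $\kappa=1$ after enlarging $C$. The paper instead applies Markov to the unweighted integral, pays $m/M\ge e^{-\Cdens\D^2}$ once, and compares $\nu_z(B_{C\eta}(X_0))$ with $\nu_z(B_\eta(X_0))$ using $\nu_z(B_{C\eta}(X_0))\le1$, with $\kappa=a_C/2$.
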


\noindent The proof of Lemma~\ref{lem:anti} is deferred to Appendix~\ref{app:posterior-anti-concentration-proof}. 
The argument compares two posterior masses around the same center $X_0$. 
First, a Markov inequality shows that the unweighted likelihood integral over the small ball $B_\eta(X_0)$ is rarely much larger than its volume. 
Second, the transcript-wise tilt lemma lower bounds the likelihood integral over the larger regularized neighborhood $X_0+N_{C,\eta}\subseteq B_{C\eta}(X_0)$.
The volume gain from $N_{C,\eta}$ is exponential in $r^2=\Theta(\D^2)$, while the likelihood loss is at most of scaling
\begin{equation}
\exp\left(O\left(\frac{\eta^2T}{\D}\right)\right).
\end{equation}
Thus, when $T\lesssim \D^3/\eta^2$, the larger neighborhood has much more posterior mass than the smaller ball. 
Since the larger posterior mass is at most one, the smaller posterior mass must be exponentially small.

This anti-concentration lemma is the final building block of the proof for Theorem~\ref{thm:main}. 
In the next section, we show that any uniformly accurate estimator would force the posterior to place non-negligible mass in one such small ball, contradicting Eq.~\eqref{eq:posterior-anti-concentration}.

\subsection{Proof of the main theorem}\label{sec:proof-main-theorem}

We now assemble the preceding ingredients to prove Theorem~\ref{thm:main}.

\begin{proof}[Proof of Theorem~\ref{thm:main}]
It suffices to prove the claim for deterministic protocols, as any randomized protocol is a mixture of deterministic protocols indexed by a private random seed, independent of the unknown channel. 
Since the argument below applies uniformly to every fixed seed, averaging over the seed gives the same conclusion for randomized protocols.

Let $C,\eta_\star,\cac,\kappa,r_\star$ be the constants from Lemma~\ref{lem:anti}, and let $\cgin$ be the constant from Lemma~\ref{lem:prior}. 
Set
\begin{equation}
\varepsilon_\star:=\frac{\eta_\star}{2},\qquad
c:=\frac{\cac}{4}.
\end{equation}
Choose $\D_0$ large enough so that, whenever $\D\ge \D_0$, one has $r\ge r_\star$ and
\begin{equation}
e^{-\cgin r}+e^{-\D^2}+e^{-\kappa\D^2}\le\frac13.
\label{eq:error-terms-main-proof}
\end{equation}

Assume, for contradiction, that a deterministic adaptive incoherent protocol satisfies the uniform guarantee in Theorem~\ref{thm:main} with
\begin{equation}
T\le c\frac{\D^3}{\varepsilon^2},\qquad
0<\varepsilon\le\varepsilon_\star .
\end{equation}
Draw $X\sim\mu$, set the unknown channel to $\Phi_X$, and let $Z\sim P_X$ be the transcript. 
Apply Lemma~\ref{lem:anti} with $\eta:=2\varepsilon$.
Since $\eta\le\eta_\star$ and
\begin{equation}
T\le\frac{\cac}{4}\frac{\D^3}{\varepsilon^2}=\cac\frac{\D^3}{(2\varepsilon)^2}=\cac\frac{\D^3}{\eta^2},
\end{equation}
the lemma applies. 
Therefore, for every $X_0\in G$,
\begin{equation}
P_{X_0}\left[\nu_Z(B_{2\varepsilon}(X_0))>e^{-\kappa\D^2}\right]\le e^{-\D^2}.
\end{equation}
Integrating this bound over $X_0\sim\mu$ and using
$\mu(G^c)\le e^{-\cgin r}$ gives
\begin{align}
\begin{split}
&\Prob\left[X\notin G\right]\le e^{-\cgin r},\\
&\Prob\left[X\in G\ \text{and}\ \nu_Z(B_{2\varepsilon}(X))>e^{-\kappa\D^2}\right]\le e^{-\D^2}.
\label{eq:posterior-bad-integrated}
\end{split}
\end{align}

We now show that posterior anti-concentration rules out Bayes success in Choi trace norm. 
For a transcript $z$, define
\begin{align}
\begin{split}
A_z :=\Bigl\{X\in G:&\ \nu_z(B_{2\varepsilon}(X))\le e^{-\kappa\D^2}, \\
&\normone{J_{\widehat\Phi(z)}-J_X}\le\varepsilon\Bigr\}.
\end{split}
\end{align}
If $A_z$ is empty, then $\nu_z(A_z)=0$. 
Otherwise, one may choose $X_z\in A_z$. 
For every $X\in A_z$, the triangle inequality gives
\begin{equation}
\normone{J_X-J_{X_z}}\le\normone{J_X-J_{\widehat\Phi(z)}}+\normone{J_{\widehat\Phi(z)}-J_{X_z}}\le 2\varepsilon.
\end{equation}
Thus $A_z\subseteq B_{2\varepsilon}(X_z)$. Since $X_z\in A_z$,
\begin{equation}
\nu_z(B_{2\varepsilon}(X_z))\le e^{-\kappa\D^2}\quad\Rightarrow\quad
\nu_z(A_z)\le e^{-\kappa\D^2}.
\end{equation}
Averaging over the prior predictive law of $Z$ yields
\begin{equation}
\Prob[X\in A_Z]\le e^{-\kappa\D^2}.
\label{eq:choi-success-good-anti}
\end{equation}

Let
\begin{align}
\begin{split}
\mathcal E_{\mathrm{Choi}}&:=\left\{\normone{J_{\widehat\Phi(Z)}-J_X}\le\varepsilon\right\},\\
\mathcal E_{\mathrm{post}}&:=\left\{X\in G,\ \nu_Z(B_{2\varepsilon}(X))>e^{-\kappa\D^2}\right\}.
\end{split}
\end{align}
By construction,
\begin{equation}
\mathcal E_{\mathrm{Choi}}\subseteq\{X\notin G\}\cup\mathcal E_{\mathrm{post}}\cup\{X\in A_Z\}.
\end{equation}
Therefore, we have
\begin{equation}
\begin{aligned}
\Prob[\mathcal E_{\mathrm{Choi}}]&\le\Prob[X\notin G]+\Prob[\mathcal E_{\mathrm{post}}]+\Prob[X\in A_Z]  \\
&\le e^{-\cgin r}+e^{-\D^2}+e^{-\kappa\D^2}  \\
&\le \frac13,
\end{aligned}
\label{eq:bayes-choi-upper}
\end{equation}
where the last step follows from Eq.~\eqref{eq:error-terms-main-proof}.

On the other hand, the assumed worst-case guarantee gives, for every
$X\in S$,
\begin{equation}
\Prob_{\Phi_X}\left[\normdiam{\widehat\Phi-\Phi_X}\le\varepsilon\right]\ge\frac23.
\end{equation}
Averaging over $X\sim\mu$,
\begin{equation}
\Prob\left[\normdiam{\widehat\Phi(Z)-\Phi_X}\le\varepsilon\right]\ge\frac23.
\end{equation}
Since
\begin{equation}
\normone{J_{\widehat\Phi(Z)}-J_X}\le\normdiam{\widehat\Phi(Z)-\Phi_X},
\end{equation}
this implies
\begin{equation}
\Prob\left[\normone{J_{\widehat\Phi(Z)}-J_X}\le\varepsilon\right]\ge\frac23,
\end{equation}
contradicting Eq.~\eqref{eq:bayes-choi-upper}.

Therefore, no adaptive incoherent protocol satisfying the stated worst-case success guarantee can use $T\le c\D^3/\varepsilon^2$ channel queries. 
Equivalently, after decreasing $c$ by a universal factor if necessary, every such protocol satisfies
\begin{equation}
T\ge c\,\frac{\D^3}{\varepsilon^2}=c\,\frac{\din^3\dout^3}{\varepsilon^2}.
\end{equation}
This proves the main result.
\end{proof}

\section{A matching non-adaptive upper bound}
\label{sec:log-free-upper-bound}

In this section, we show that the scaling of Theorem~\ref{thm:main} is tight \emph{up to a constant factor} by deriving a matching upper bound. 
The construction builds on the covariant projected-least-squares approach to incoherent process tomography~\cite{SurawyStepneyKahnKuengGuta2022,oufkir2023sample}, but replaces the direct matrix-concentration step by scalar concentration for fixed quadratic forms followed by a constant-radius covering argument. 
This removes the logarithmic factor in the previously best known upper bound.

\begin{theorem}[Incoherent process tomography upper bound]
\label{thm:log-free-upper}
There exists a universal constant $C>0$ such that the following holds. 
Let $\Phi:\calL(A)\to\calL(B)$ be an arbitrary quantum channel, let $0<\varepsilon\le1$, and let $0<\delta<1$. 
There is a non-adaptive, ancilla-free incoherent protocol that uses at most
\begin{equation}
\left\lceil C\,\frac{\D^3+\D^2\log(1/\delta)}{\varepsilon^2}\right\rceil
\label{eq:upper-bound-main-complexity}
\end{equation}
queries to $\Phi$ and outputs a quantum channel $\widetilde\Phi$ satisfying
\begin{equation}
\Prob_{\Phi}\left[\normdiam{\widetilde\Phi-\Phi}\le\varepsilon\right]\ge1-\delta.
\label{eq:upper-bound-main-guarantee}
\end{equation}
In particular, for constant success probability, the protocol uses
\begin{equation}
O\left(\frac{\D^3}{\varepsilon^2}\right)
=O\left(\frac{\din^3\dout^3}{\varepsilon^2}\right)
\label{eq:upper-bound-main-constant-confidence}
\end{equation}
channel queries.
\end{theorem}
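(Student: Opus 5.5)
The protocol is the covariant projected-least-squares scheme: in each round, draw a Haar-random (or unitary 3-design) basis state $\ket{\psi}\in A$, send $\ket{\psi}$ through $\Phi$, and measure the output in an independent random basis of $B$ (a $2$-design POVM). Via Lemma~\ref{lem:tester}, each round is a single-round tester with $\tau=I_A/\din$. The tester is informationally complete with a known, invertible frame operator on the space of operators on $A\otimes B$. Inverting the frame gives, for each outcome, an unbiased single-shot estimator
\[
\widehat J_t=\alpha\,\bigl(\proj{\bar\psi_t}\otimes\proj{\phi_t}\bigr)-\beta\,\bigl(\proj{\bar\psi_t}\otimes I_B\bigr)-\gamma\,I_{AB}+\dots
\]
with coefficients $\alpha,\beta,\gamma=O(\D)$. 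We set $\widehat J=\frac1N\sum_t\widehat J_t$ and obtain $\widetilde\Phi$ by projecting $\widehat J$ onto the set of normalized Choi operators of channels: first symmetrize so that $\Tr_B\widehat J=I_A/\din$, then apply the standard projection step used in~\cite{SurawyStepneyKahnKuengGuta2022,oufkir2023sample}.

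The error reduction is taken over from the cited works. If $\norminf{\widehat J-J_\Phi}\le\epsilon'$, the projected estimate satisfies $\normdiam{\widetilde\Phi-\Phi}\le O(\D\,\epsilon')$. The reason is that the diamond norm of a difference of channels is bounded by $\din$ times the operator-norm–weighted quantity $\sup_{\rho}\normone{(\sqrt\rho\otimes I)(J-J')(\sqrt\rho\otimes I)}$, which is at most $\din\dout\norminf{J-J'}$ up to constants. We therefore take $\epsilon'=c\varepsilon/\D$, and it suffices to show
\[
\Prob\bigl[\norminf{\widehat J-J_\Phi}\ge c\varepsilon/\D\bigr]\le\delta \quad\text{once}\quad N\ge C(\D^3+\D^2\log(1/\delta))/\varepsilon^2 .
\]

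The operator-norm concentration avoids matrix Bernstein. First fix a unit vector $v\in A\otimes B$ and consider the scalar i.i.d. sum $\bra v(\widehat J-J_\Phi)\ket v$. Each summand is bounded by $O(\D)$, since $\widehat J_t$ is a combination of rank-one projectors with $O(\D)$ coefficients. Its variance is $O(1)$: the 2-design (actually 4-design-type for the second moment, so I would use Haar randomness) moments give $\E|\braket{v|\bar\psi\otimes\phi}|^4=O(1/\D^2)$, which multiplies $\alpha^2=O(\D^2)$. Scalar Bernstein then yields
\[
\Prob\bigl[|\bra v(\widehat J-J_\Phi)\ket v|\ge s\bigr]\le 2\exp\!\left(-c N\min\{s^2,\,s/\D\}\right).
\]
Next take a $1/4$-net $\mathcal N$ of the unit sphere of $\C^{\D}$ with $|\mathcal N|\le 9^{2\D}$. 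For Hermitian $M$ the standard estimate gives $\norminf M\le 2\max_{v\in\mathcal N}|\bra v M\ket v|$. A union bound with $s=c\varepsilon/(2\D)$ then requires $N s^2\gtrsim\D+\log(1/\delta)$, that is, $N\gtrsim \D^2(\D+\log(1/\delta))/\varepsilon^2$, which matches the claimed count. The linear Bernstein regime needs $Ns/\D\gtrsim\D+\log(1/\delta)$, i.e.\ $N\gtrsim \D^3/\varepsilon$, which is dominated because $\varepsilon\le1$.

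The main obstacle is obtaining the variance bound $O(1)$ uniformly in $v$, including entangled $v$, from the random-input, random-measurement design, together with bounding the subtracted correction terms in $\widehat J_t$. I would compute $\E|\bra v\widehat J_t\ket v|^2$ by expanding with Haar moment formulas on $A$ and $B$ separately. The diagonal term is $\alpha^2\E[\bra v(\proj{\bar\psi}\otimes\proj\phi)\ket v^2]$, controlled by $\Tr[(\proj v)^{\otimes2}\,\E(P_{\bar\psi}\otimes P_\phi)^{\otimes2}]$. Here $\E P^{\otimes2}$ is proportional to the symmetric projector and is weighted by $\Phi$ through the outcome distribution. Bounding this by $\norminf{\E(\cdot)}\le O(1/\D^2)$ uses $\Phi(\proj\psi)\preceq I$ together with the symmetric-subspace normalizations $1/(\din(\din+1))$ and $1/(\dout(\dout+1))$. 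A second point needing care is that the projection onto channels must preserve an $O(\D\epsilon')$ diamond error; if the cited lemma only gives this with logarithms, I would re-derive it directly from the operator-norm bound.
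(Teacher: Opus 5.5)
Your overall architecture is the paper's. You use the same Haar-input, Haar-random-basis estimator: your $\widehat J_t$ is exactly $X_t=((\din+1)P_{\bar\psi}-I_A)\otimes((\dout+1)P_\phi-I_B)$, with the omitted term $-(\dout+1)I_A\otimes P_\phi$ and constant $+I_{AB}$. You also use scalar Bernstein in each fixed direction, a $1/4$-net of size $9^{2\D}$, and the conversion $\normdiam{\Phi-\Psi}\le\D\norminf{J_\Phi-J_\Psi}$.

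However, the step you call the main obstacle does not go through as you propose. Given $\psi$, the outcome $\phi$ has density $\dout\langle\phi|\rho_\psi|\phi\rangle$ with respect to Haar measure, where $\rho_\psi=\Phi(\proj{\psi})$. If you bound this weight pointwise using $\rho_\psi\preceq I$, you get $\dout\int P_\phi^{\otimes2}\,d\mu(\phi)=(I+F)/(\dout+1)$. Its norm is $2/(\dout+1)$, not $O(1/\dout^2)$. Combined with $\alpha^2=(\din+1)^2(\dout+1)^2$, this gives a directional variance of order $\dout$, hence only $T=O(\D^3\dout/\varepsilon^2)$. The $(\dout+1)I_A\otimes P_\phi$ correction term suffers the same loss.

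The missing ingredient is the third Haar moment of the measurement basis, used together with $\Tr\rho_\psi=1$ rather than $\rho_\psi\preceq I$:
\[
\dout\int\langle\phi|\rho|\phi\rangle P_\phi^{\otimes2}\,d\mu(\phi)=\frac{\dout\,\Tr_1\bigl[(\rho\otimes I\otimes I)\Pi^{(3)}_{\mathrm{sym}}\bigr]}{\binom{\dout+2}{3}}\preceq\frac{6}{(\dout+1)(\dout+2)}\,I .
\]
This gives the $O(1/\D^2)$ operator-norm bound and hence $O(1)$ variance. The paper packages it as $\E[\Tr(QY)^2]\le14(\Tr Q)^2$ for every $Q\succeq0$, then averages $q_v^2$ over the input using only the second moment. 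So what is needed is a 2-design on the input side and a 3-design on the measurement side, not a 4-design.

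This is not a technicality, and your phrase ``a $2$-design POVM'' would not suffice. Take a complete set of mutually unbiased bases (in a prime-power dimension), a replacement channel outputting one basis state $|e_1\rangle$, and the direction $|a\rangle\otimes|e_1\rangle$. Then the directional variance is genuinely of order $\dout$, so the uniform $O(1)$ bound fails.

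Your projection step is also left unspecified, and its stability in operator norm is exactly what the argument needs. A Frobenius-norm projection, for instance, is nonexpansive only in the $2$-norm. It therefore only guarantees $\norminf{\widetilde J-J_\Phi}\lesssim\sqrt{\D}\,\epsilon'$.

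The paper avoids this by taking $\widetilde J$ to minimize $\norminf{K-L_T}$ over the compact convex set of normalized Choi operators. Since $J_\Phi$ is feasible, the triangle inequality gives $\norminf{\widetilde J-J_\Phi}\le2\norminf{L_T-J_\Phi}$. The diamond-norm bound then follows from the $\D\norminf{\cdot}$ conversion, with no symmetrization step needed.
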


\noindent The protocol and the proof of Theorem~\ref{thm:log-free-upper} are given in Appendix~\ref{app:log-free-upper-bound}. 
The estimator is an unbiased single-query estimator of the normalized Choi operator. 
The key observation is that every fixed quadratic form has variance bounded by a universal constant. 
Scalar Bernstein concentration therefore gives a tail of order $\exp(-\Omega(T\varepsilon^2/\D^2))$ in each fixed direction. 
A constant-radius net of the unit sphere has cardinality $\exp(O(\D))$, and the resulting union bound requires only $T=O(\D^3/\varepsilon^2)$ queries.

\begin{corollary}[Optimal incoherent sample complexity]
\label{cor:optimal-incoherent-complexity}
Let $\dout\ge2$, $\D\ge\D_0$, and $0<\varepsilon\le\min\{\varepsilon_\star,1\}$, where $\D_0$ and $\varepsilon_\star$ are the constants in Theorem~\ref{thm:main}. 
Then the query complexity of learning an arbitrary channel $\Phi:\calL(A)\to\calL(B)$ to diamond-norm error $\varepsilon$ with a constant success probability by adaptive incoherent protocols is
\begin{equation}
\Theta\left(\frac{\din^3\dout^3}{\varepsilon^2}\right).
\label{eq:optimal-incoherent-complexity}
\end{equation}
The upper bound is achieved by a non-adaptive ancilla-free protocol, whereas the lower bound continues to hold for fully adaptive protocols with arbitrary ancilla-assisted inputs within each round.
\end{corollary}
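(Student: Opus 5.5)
The corollary follows by pairing Theorem~\ref{thm:main} with Theorem~\ref{thm:log-free-upper}; the only real work is checking that the parameter ranges and success-probability conventions agree. Throughout, ``constant success probability'' means success probability $2/3$; any other fixed constant works up to a change of universal constants.

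For the upper bound, apply Theorem~\ref{thm:log-free-upper} with $\delta=1/3$. This requires $0<\varepsilon\le1$, which holds since $\varepsilon\le\min\{\varepsilon_\star,1\}$. The resulting non-adaptive, ancilla-free protocol uses at most $\lceil C(\D^3+\D^2\log 3)/\varepsilon^2\rceil$ queries. Since $\D\ge1$, we have $\D^2\log3\le(\log 3)\D^3$. Moreover $\D^3/\varepsilon^2\ge1$, so the ceiling at most doubles the count. The query count is therefore at most $C'\D^3/\varepsilon^2$ for a universal $C'$. The protocol outputs a channel $\widetilde\Phi$ with $\normdiam{\widetilde\Phi-\Phi}\le\varepsilon$ with probability at least $2/3$ for every channel $\Phi$. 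Non-adaptive ancilla-free protocols are a special case of adaptive incoherent ones, so this also bounds the adaptive incoherent query complexity from above.

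For the lower bound, take any adaptive incoherent protocol, possibly randomized, using arbitrary fresh ancillas and standard Borel outcome spaces. Suppose it achieves $\Prob_\Phi[\normdiam{\widehat\Phi-\Phi}\le\varepsilon]\ge2/3$ for every channel $\Phi$. The hypotheses $\dout\ge2$, $\D\ge\D_0$, and $0<\varepsilon\le\varepsilon_\star$ are exactly those of Theorem~\ref{thm:main}, so $T\ge c\,\D^3/\varepsilon^2$. The optimal query complexity is thus sandwiched between $c\,\din^3\dout^3/\varepsilon^2$ and $C'\din^3\dout^3/\varepsilon^2$, which is the claimed $\Theta$ bound.

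There is essentially no obstacle here. The one point to watch is that the estimator classes are compatible. The lower bound allows any linear-map-valued estimator, and the upper-bound output $\widetilde\Phi$ is a channel, hence in particular a linear map, so the two theorems refer to the same task. The stated restrictions on $\varepsilon$ and $\D$ are exactly those under which both theorems apply at once.
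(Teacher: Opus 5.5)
Your proposal is correct and matches the paper's argument. The paper gives no separate proof, since the corollary is simply Theorem~\ref{thm:main} combined with Theorem~\ref{thm:log-free-upper} at $\delta=1/3$, and your handling of the ceiling, the absorption $\D^2\log 3\le(\log 3)\D^3$, and the matching of parameter ranges and estimator classes covers everything that combination needs.
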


\section{Discussion} \label{s:discussion}

We have established that quantum process tomography exhibits a strict quantum memory advantage. 
In the model considered here, the learner may use arbitrary ancilla-assisted inputs within each channel query, perform arbitrary measurements after each use, carry out classical computation, and choose each future experiment as an arbitrary function of the entire previous transcript. 
The only forbidden resource is the ability to preserve quantum information across different uses of the unknown channel. 
Our main result shows that every such adaptive incoherent protocol requires $\Omega\left(\din^3\dout^3/\varepsilon^2\right)$ queries. 
Together with the non-adaptive ancilla-free upper bound in Theorem~\ref{thm:log-free-upper}, this gives the optimal incoherent query complexity
\begin{equation}
\Theta\left(\frac{\din^3\dout^3}{\varepsilon^2}\right).
\end{equation}
For $\din=1$, this recovers the optimal sample-complexity bounds for single-copy state tomography~\cite{chen2023does,kuengfasttkomogr}. Since coherent protocols achieve the optimal query complexity
$\Theta(d_{\rm in}^2d_{\rm out}^2/\varepsilon^2)$~\cite{mele2025optimal,chen2026quantum}, our results isolate quantum memory as the resource responsible for the separation.

We expect the techniques developed here for proving lower bounds against adaptive incoherent protocols in general channel tomography to be useful more broadly. Many quantum learning settings share the same underlying structure: each round consists of a quantum experiment, but only a classical outcome is retained before the next experiment is chosen. Natural directions include extending our methods to structured families of channels, unitaries, and dynamical processes, such as fermionic linear-optical and bosonic Gaussian processes~\cite{christensen2026learningfermioniclinearoptics,fanizza2026efficientlearningbosonicgaussian}, Pauli channels~\cite{fawzi2025lower}, and adaptive single-copy POVMs tomography~\cite{zambrano2025fast}.

Several other directions remain open. First, it would be interesting to characterize the intermediate regimes between fully incoherent and fully coherent process tomography, in analogy with recent memory tradeoffs for state and Pauli-channel learning~\cite{chen2022exponential,chen2022quantum,chen2024tradeoff}. In particular, how much quantum memory is required to obtain a nontrivial improvement over the optimal incoherent rate?
A second direction is to determine the full query complexity of incoherent channel tomography as a function of the Kraus rank. Notably, adaptive lower bounds remain open even for bounded-rank state tomography~\cite{chen2023does}. Establishing corresponding lower bounds for low-Kraus-rank channels therefore appears to require overcoming difficulties that are already present in the simpler state-tomography setting.
Finally, another important direction is to study process tomography with limited or noisy quantum memory, i.e., when only a limited number of coherent qubits can be stored across channel uses, and determine when this already suffices to obtain an advantage over incoherent protocols~\cite{chen2025efficient,arunachalam2026optimal}.

\subsection*{Acknowledgments}
    We thank Richard Allen, Senrui Chen, Sitan Chen, Angus Lowe, and Angelos Pelecanos for helpful discussions. 
    C.B.-P. and A.A.M. acknowledge the BMFTR (MUNIQC-Atoms, HYBRID, QuSol, Hybrid++), the DFG (CRC 183 and SPP 2514), the Quantum Flagship (Millenion, PasQuanS2), the Munich Quantum Valley, Berlin Quantum and the European Research Council (DebuQC) for financial support. A.A.M. further acknowledges support from a 2025 Google PhD Fellowship.
    W.G. is supported by NSF Grant CCF-2430375 and the Von Neumann Award from Harvard Computer Science.
    We acknowledge ChatGPT for assistance in discussing and refining proof ideas, as well as improving the presentation. The authors are solely responsible for the proofs and the results.
    
\bibliography{references}

\vspace*{0.5cm}

\onecolumngrid
\appendix

\newpage

\setcounter{figure}{0}
\setcounter{theorem}{0}

\begin{center}
\large{Supplementary Material for ``Quantum memory advantage for quantum process tomography''
}
\end{center}
\counterwithin{figure}{section}

\section{Proof of the single-round tester representation} \label{app:one-use-tester-proof}

\begin{proof}[Proof of Lemma~\ref{lem:tester}]
We first show that one can always reduce a general protocol to a protocol with only pure inputs of the same complexity. 
To see this, we consider an input $\rho\in\calD(R\otimes A)$.
Choose a purification of $\rho$ on an enlarged ancilla $R'R$. 
Measuring $I_{R'}\otimes M_y$ after the channel gives exactly the same outcome distribution as the original experiment. 
Thus it suffices to consider a pure input state $\ket{\psi}\in R\otimes A$, after absorbing the purifying register into $R$.

We first note that very vector $\ket{\psi}\in R\otimes A$ can be written uniquely as
\begin{equation}
\ket{\psi}_{RA}=(V\otimes I_A)\ket{\widetilde\Omega}_{A'A},\qquad \ket{\widetilde\Omega}_{A'A}:=\sum_{i=1}^{\din}\ket{i}_{A'}\ket{i}_{A}
\end{equation}
for a linear map $V:A'\to R$. 
Let $\tau:=\Tr_R\proj{\psi}\in\calD(A)$, we writing $\ket{\psi}=\sum_{i=1}^{\din}\ket{v_i}_R\ket{i}_A$ with $V\ket{i}=\ket{v_i}$. 
We have
\begin{equation}
\tau=\sum_{i,j=1}^{\din}\langle v_j|v_i\rangle\,\ketbra{i}{j},\qquad
V^\dagger V=\tau^\top.
\end{equation}
Let $\Gamma_\Phi:=\din J_\Phi$ be the unnormalized Choi operator. 
By the Choi identity,
\begin{equation}
(\id_R\otimes\Phi)(\proj{\psi})=(V\otimes I_B)\Gamma_\Phi(V^\dagger\otimes I_B).
\end{equation}
Define
\begin{equation}
T_y:=(V^\dagger\otimes I_B)M_y(V\otimes I_B)\ge0,
\end{equation}
we have
\begin{equation}
\sum_{y\in\calY}T_y=(V^\dagger\otimes I_B)\left(\sum_{y\in\calY}M_y\right)(V\otimes I_B)=(V^\dagger V)\otimes I_B=\tau^\top\otimes I_B.
\end{equation}
If $\calY$ is countably infinite, the equality is understood as the bound on the norm of increasing finite partial sums given that $A\otimes B$ is finite-dimensional.

Finally, by cyclicity of trace, we have
\begin{align}
\Prob_\Phi[Y=y]=\Tr\left[M_y(V\otimes I_B)\Gamma_\Phi(V^\dagger\otimes I_B)\right] \nonumber=\Tr(T_y\Gamma_\Phi)=\din\,\Tr(T_yJ_\Phi).
\end{align}
This proves the claim.
\end{proof}

\section{Algebra of the hard channel family}\label{app:hard-family-algebra}

We collect the elementary properties for the hard family introduced in Section~\ref{sec:hard-channel-family}.
We first observe that, since $s=\lfloor \dout/2\rfloor$,
\begin{equation}
r=\din s\le \frac{\din\dout}{2}=\frac{\D}{2}.
\end{equation}
For the lower bound, every integer $m\ge2$ satisfies $\lfloor\tfrac{m}{2}\rfloor\ge\tfrac{m}{3}$.
Therefore,
\begin{equation}
r=\din\left\lfloor\frac{\dout}{2}\right\rfloor\ge\frac{\din\dout}{3}=\frac{\D}{3}.
\end{equation}
Given that $\tfrac{\D}{3}\le r\le\frac{\D}{2}$, we now prove Lemma~\ref{lem:E}, Lemma~\ref{lem:valid-family}, and Lemma~\ref{lem:choi-trace-distance}.
\begin{proof}[Proof of Lemma~\ref{lem:E}]
For the partial trace, choose orthonormal bases of $B_0$, $B_1$, and $B_{\mathrm{rem}}$, and combine them into an orthonormal basis of $B$. 
For $a,a'\in A$, and $\{b\}$ as a set of basis for $B$, we have
\begin{equation}
\langle a|\Tr_BE(X)|a'\rangle=\sum_b\langle a,b|E(X)|a',b\rangle.
\end{equation}
The operator $E(X)$ has matrix elements only between $A\otimes B_0$ and $A\otimes B_1$. 
These subspaces are orthogonal in the $B$ register, so every summand with the same basis vector $b$ on the left and right vanishes. 
Hence, we conclude that
\begin{equation}
\Tr_BE(X)=0.
\end{equation}
For the norm identities, relative to $K_0\oplus K_1$,
\begin{equation}
E(X)^\dagger E(X)=\begin{pmatrix}
XX^\dagger & 0\\
0 & X^\dagger X
\end{pmatrix}.
\end{equation}
Thus the singular values of $E(X)$ are the singular values of $X$, with each repeated twice. 
The identities
\begin{equation}
\norminf{E(X)}=\norminf{X},\qquad
\normone{E(X)}=2\normone{X},\qquad
\normtwo{E(X)}=\sqrt2\normtwo{X}
\end{equation}
follow immediately from matrix norm inequalities.
\end{proof}

\begin{proof}[Proof of Lemma~\ref{lem:valid-family}]
Let $X\in S$. Since $\norminf{X}\le4$, Lemma~\ref{lem:E} gives $\norminf{E(X)}\le4$.
Therefore, we have
\begin{equation}
J_X=\frac{I_{AB}}{\D}+\frac{\sigma}{\D}E(X)\ge\frac{1-4\sigma}{\D}I_{AB}.
\end{equation}
Because $\sigma\le10^{-3}$, this operator is positive definite.
Next, using Lemma~\ref{lem:E}, we have
\begin{equation}
\Tr_BJ_X=\Tr_B\left(\frac{I_A\otimes I_B}{\D}\right)+\frac{\sigma}{\D}\Tr_BE(X)=\frac{\dout}{\din\dout}I_A=\frac{I_A}{\din}.
\end{equation}
Thus $J_X$ is the normalized Choi operator of a channel by the Choi characterization.
If $X\in G$, then $\norminf{X}\le3$, and hence $-3I_{AB}\le E(X)\le 3I_{AB}$.
Substituting this into the definition of $J_X$ gives
\begin{equation}
\frac{1-3\sigma}{\D}I_{AB}\preceq J_X\preceq\frac{1+3\sigma}{\D}I_{AB}.
\end{equation}
\end{proof}

\begin{proof}[Proof of Lemma~\ref{lem:choi-trace-distance}]
We have
\begin{equation}
J_{X_0+W}-J_{X_0}=\frac{\sigma}{\D}E(W).
\end{equation}
By Lemma~\ref{lem:E}, $\normone{E(W)}=2\normone{W}$, and thus 
\begin{equation}
\normone{J_{X_0+W}-J_{X_0}}=\frac{2\sigma}{\D}\normone{W}.
\end{equation}
\end{proof}

\section{Proofs for prior regularity and local geometry}\label{app:prior-geometry}

\begin{proof}[Proof of Lemma~\ref{lem:prior}]
Let $X_{\mathrm{Gin}}$ be an unconditioned complex Ginibre matrix with independent entries of density
\begin{equation}
z\mapsto \frac{r}{\pi}e^{-r|z|^2}.
\end{equation}
Equivalently, $\E |(X_{\mathrm{Gin}})_{ij}|^2=1/r$. 
The standard operator-norm tail bound for complex Ginibre matrices gives universal constants $c>0$ and $r_0\in\N$ such that, for all $r\ge r_0$,
\begin{equation}
\Prob\left[\norminf{X_{\mathrm{Gin}}}>2+\frac{t}{\sqrt r}\right]\le 2e^{-ct^2}\qquad\text{for all }t\ge0 .
\end{equation}
Taking $t=\sqrt r$ yields
\begin{equation}
\Prob[\norminf{X_{\mathrm{Gin}}}>3]\le 2e^{-cr}.
\end{equation}
After decreasing the constant and increasing $r_{\mathrm{gin}}$ if necessary, we may write this bound as $e^{-\cgin r}$.
Since $\mu$ is the Ginibre law conditioned on $\norminf{X}\le4$,
\begin{equation}
\mu(G)=\frac{\Prob[\norminf{X_{\mathrm{Gin}}}\le3]}{\Prob[\norminf{X_{\mathrm{Gin}}}\le4]}\ge\Prob[\norminf{X_{\mathrm{Gin}}}\le3]\ge1-e^{-\cgin r}.
\end{equation}
For the density-ratio bound, if $X,X'\in S$, then
\begin{equation}
\frac{f_\mu(X)}{f_\mu(X')}=\exp\left(-r\normtwo{X}^2+r\normtwo{X'}^2\right)\le\exp\left(r\normtwo{X'}^2\right).
\end{equation}
Since $\norminf{X'}\le4$ and $X'\in\C^{r\times r}$, we have $\normtwo{X'}^2\le r\norminf{X'}^2\le 16r$, and thus
\begin{equation}
\frac{f_\mu(X)}{f_\mu(X')}\le e^{16r^2}\le e^{4\D^2}
\end{equation}
using $r\le \D/2$, we have $16r^2\le4\D^2$.
This proves the claim with $\Cdens=4$.
\end{proof}

\begin{proof}[Proof of Lemma~\ref{lem:support}]
Let $X_0\in G$ and $W\in N_{C,\eta}$. 
By definition of $N_{C,\eta}$ and $r\ge \D/3$, we have
\begin{equation}
\norminf{W}\le\frac{C L_\eta}{4r}=\frac{C\eta\D}{8\sigma r}\le\frac{3C\eta}{8\sigma}.
\end{equation}
If $\eta\le 8\sigma/(3C)$, then $\norminf{W}\le1$. Since $X_0\in G$, $\norminf{X_0}\le3$, and therefore
\begin{equation}
\norminf{X_0+W}\le\norminf{X_0}+\norminf{W}\le4.
\end{equation}
Thus $X_0+W\in S$.
\end{proof}

\begin{proof}[Proof of Lemma~\ref{lem:volume}]
Let $B_1^r, B_\infty^r\subseteq \C^{r\times r}\simeq\R^{2r^2}$ with
\begin{equation}
B_1^r:=\{W:\normone{W}\le1\},\qquad
B_\infty^r:=\{W:\norminf{W}\le1\}.
\end{equation}
The complex Schatten-ball volume-radius asymptotics imply that there are universal constants $0<a<1$ and $r_{\mathrm{vol}}\in\N$ such that, for every $r\ge r_{\mathrm{vol}}$,
\begin{equation}
\left(\frac{\vol(B_\infty^r)}{\vol(B_1^r)}\right)^{1/(2r^2)}\ge ar.
\end{equation}
As $\norminf{W}\le1/(4r)$ implies $\normone{W}\le r\norminf{W}\le\frac14\le1$, we have
\begin{equation}
\frac{1}{4r}B_\infty^r\subseteq\{W:\normone{W}\le1,\ \norminf{W}\le1/(4r)\}
\end{equation}
Therefore, we compute
\begin{equation}
\frac{\vol\{W:\normone{W}\le1,\ \norminf{W}\le1/(4r)\}}{\vol(B_1^r)}\ge(4r)^{-2r^2}\frac{\vol(B_\infty^r)}{\vol(B_1^r)}\ge(4r)^{-2r^2}(ar)^{2r^2}=\left(\frac{a}{4}\right)^{2r^2}.
\end{equation}
Setting $\cvol:=2\log(4/a)$ proves Eq.~\eqref{eq:unit-volume-estimate}.

For the scaled estimate, we observe that $N_{C,\eta}$ is the dilation by $C L_\eta$ of the set $\{W:\normone{W}\le1,\ \norminf{W}\le1/(4r)\}$, whereas $\{W:\normone{W}\le L_\eta\}$ is the dilation by $L_\eta$ of $B_1^r$. 
Since the real dimension is $2r^2$, we have
\begin{equation}
\frac{\vol(N_{C,\eta})}{\vol\{W:\normone{W}\le L_\eta\}}=C^{2r^2}\frac{\vol\{W:\normone{W}\le1,\ \norminf{W}\le1/(4r)\}}{\vol(B_1^r)}\ge C^{2r^2}e^{-\cvol r^2}.
\end{equation}
This proves the claim.
\end{proof}

\section{Proof of the transcript-wise adaptive tester tilt lemma}\label{app:tilt-proof}

\begin{proof}[Proof of Lemma~\ref{lem:tilt}]
Fix $X_0\in G$, $C>1$, and a transcript $z=(y_1,\ldots,y_T)$ with positive probability under $\Phi_{X_0}$. 
For readability, write $T_t=T_t(z)$, and define
\begin{equation}
q_t:=\Tr(T_tJ_{X_0}),\qquad
a_t(W):=\frac{\Tr(T_t\Delta_W)}{q_t},
\end{equation}
where $\Delta_W:=J_{X_0+W}-J_{X_0}$.
Since $P_{X_0}(z)>0$, every single-step probability along the transcript is positive, and hence $q_t>0$. 
Moreover,
\begin{equation}
\Lambda_W(z)=\prod_{t=1}^T(1+a_t(W)).
\label{eq:Lambda-product-at}
\end{equation}

We first show that $a_t(W)$ is uniformly small. 
Since $X_0\in G$, Lemma~\ref{lem:valid-family} gives $J_{X_0}\ge\tfrac{1-3\sigma}{\D}I_{AB}$.
Therefore, we have
\begin{equation}
q_t=\Tr(T_tJ_{X_0})\ge\frac{1-3\sigma}{\D}\Tr(T_t).
\end{equation}
Since $T_t\ge0$, we observe that $|\Tr(T_t\Delta_W)|\le \norminf{\Delta_W}\Tr(T_t)$, and that, for $W\in N_{C,\eta}$,
\begin{equation}
\norminf{\Delta_W}=\frac{\sigma}{\D}\norminf{E(W)}=\frac{\sigma}{\D}\norminf{W} \le\frac{\sigma}{\D}\frac{CL_\eta}{4r}=\frac{C\eta}{8r}\le\frac{3C\eta}{8\D},
\end{equation}
where we used $r\ge\D/3$. 
Hence, we obtain
\begin{equation}
|a_t(W)|\le\frac{3C\eta}{8(1-3\sigma)}\le\frac{1}{10}.
\label{eq:at-small}
\end{equation}

Next we estimate the second moment of $a_t(W)$ when $W$ is uniform on $N_{C,\eta}$. 
Let $P_0$ and $P_1$ be the projections onto $K_0$ and $K_1$, respectively, and set $B_t:=P_0T_tP_1$, which is viewed as an operator from $K_1$ to $K_0$. Relative to $K_0\oplus K_1$, one obtain
\begin{equation}
E(W)=\begin{pmatrix}
0 & W\\
W^\dagger & 0
\end{pmatrix}.
\end{equation}
Since $T_t$ is Hermitian,
\begin{equation}
\Tr(T_tE(W))=2\re\,\Tr(B_t^\dagger W),\qquad
\bigl(\Tr(T_tE(W))\bigr)^2\le 4\left|\Tr(B_t^\dagger W)\right|^2.
\label{eq:tester-offdiagonal-inner-product}
\end{equation}

The set $N_{C,\eta}$ is invariant under $W\mapsto UWV$ for all unitaries $U,V\in U(r)$ and under phase multiplication $W\mapsto e^{i\theta}W$.
Thus, if $W$ is uniform on $N_{C,\eta}$, its covariance is isotropic.
There exists $\alpha\ge0$ such that
\begin{equation}
\E_W[W_{ij}\overline{W_{kl}}]=\alpha\,\delta_{ik}\delta_{jl}.
\end{equation}
Taking traces gives
\begin{equation}
\alpha=\frac{\E_W\normtwo{W}^2}{r^2}.
\end{equation}
For every $W\in N_{C,\eta}$, note that $\normtwo{W}^2\le\normone{W}\norminf{W}\le(CL_\eta)\frac{CL_\eta}{4r}=\frac{C^2L_\eta^2}{4r}$.
Hence,
\begin{equation}
\alpha\le\frac{C^2L_\eta^2}{4r^3}.
\end{equation}
For every fixed $B\in\C^{r\times r}$,
\begin{equation}
\E_W|\Tr(B^\dagger W)|^2=\alpha\normtwo{B}^2.
\end{equation}
Using Eq.~\eqref{eq:tester-offdiagonal-inner-product} with $B=B_t$ and the projection contraction $\normtwo{B_t}\le\normtwo{T_t}$, we obtain
\begin{equation}
\E_W\bigl(\Tr(T_tE(W))\bigr)^2\le\frac{C^2L_\eta^2}{r^3}\normtwo{T_t}^2.
\end{equation}
Since $\Delta_W=(\sigma/\D)E(W)$ and
$L_\eta=\eta\D/(2\sigma)$, we have
\begin{equation}
\E_W\bigl(\Tr(T_t\Delta_W)\bigr)^2\le\frac{C^2\eta^2}{4r^3}\normtwo{T_t}^2.
\end{equation}
Dividing by $q_t^2$ and using $q_t^2\ge\tfrac{(1-3\sigma)^2}{\D^2}(\Tr T_t)^2$, we get
\begin{equation}
\E_W a_t(W)^2\le\frac{C^2\eta^2\D^2}{4(1-3\sigma)^2r^3}\frac{\normtwo{T_t}^2}{(\Tr T_t)^2}.
\end{equation}
Because $T_t\ge0$ and thus $\normtwo{T_t}^2\le(\Tr T_t)^2$, using $r\ge\D/3$, we conclude that
\begin{equation}
\E_W a_t(W)^2\le\frac{\Ktilt}{2}\frac{C^2\eta^2}{\D},\qquad
\Ktilt:=\frac{27}{2(1-3\sigma)^2}.
\label{eq:at-second-moment}
\end{equation}

The set $N_{C,\eta}$ is symmetric under $W\mapsto -W$, while $a_t(W)$ is linear in $W$. 
Therefore
\begin{equation}
\E_W a_t(W)=0.
\label{eq:at-mean-zero}
\end{equation}
By Eq.~\eqref{eq:at-small} and the elementary inequality $\log(1+u)\ge u-2u^2$ for $|u|\le1/10$, we have
\begin{equation}
\E_W\log(1+a_t(W))\ge-2\E_Wa_t(W)^2\ge-\Ktilt\frac{C^2\eta^2}{\D}.
\end{equation}
Finally, Jensen's inequality gives
\begin{equation}
\log\E_W\Lambda_W(z)\ge\E_W\log\Lambda_W(z)=\sum_{t=1}^T\E_W\log(1+a_t(W))\ge-\Ktilt\frac{C^2\eta^2T}{\D},
\end{equation}
and thus
\begin{equation}
\E_W\Lambda_W(z)\ge\exp\left(-\Ktilt\frac{C^2\eta^2T}{\D}\right),
\end{equation}
which proves the lemma.
\end{proof}

\section{Proofs for posterior anti-concentration}\label{app:posterior-anti-concentration-proof}

\begin{proof}[Proof of Lemma~\ref{lem:common-support}]
Fix a history $h$ and an outcome $y$. 
Let $T_{h,y}\ge0$ be the tester element associated with that history and outcome. 
For any $X\in S$,
\begin{equation}
\Prob_X[Y_t=y\mid h]=\din\,\Tr(T_{h,y}J_X).
\end{equation}
By Lemma~\ref{lem:valid-family}, we have $J_X\succeq\tfrac{1-4\sigma}{\D}I_{AB}\succ0$, and thus
\begin{equation}
\Tr(T_{h,y}J_X)>0\qquad\Longleftrightarrow\qquad T_{h,y}\ne0.
\end{equation}
Indeed, if $T_{h,y}=0$, the trace is zero. 
If $T_{h,y}\ne0$, then $\Tr T_{h,y}>0$, and
\begin{equation}
\Tr(T_{h,y}J_X)\ge\frac{1-4\sigma}{\D}\Tr T_{h,y}>0.
\end{equation}
Thus, whether a single-step conditional probability is positive is independent of $X\in S$.
For a full transcript $z=(y_1,\ldots,y_T)$,
\begin{equation}
P_X(z)=\prod_{t=1}^T\Prob_X[Y_t=y_t\mid y_{<t}].
\end{equation}
Each factor is positive for $X$ if and only if it is positive for $X'$. 
Hence $P_X(z)>0$ if and only if $P_{X'}(z)>0$.
\end{proof}

\begin{proof}[Proof of Lemma~\ref{lem:anti}]
We first choose the constants. 
Let $\cvol$ and $r_{\mathrm{vol}}$ be the constants in Lemma~\ref{lem:volume}, and let $\Cdens=4$ and $r_\mathrm{gin}$ be the constants in Lemma~\ref{lem:prior}. 
Choose $C>1$ large enough that
\begin{equation}
a_C:=\frac{2}{9}\log C-\frac{\cvol}{4}-(\Cdens+1)>0.
\end{equation}
Set
\begin{equation}
\kappa:=\frac{a_C}{2},\qquad
\cac:=\frac{\kappa}{\Ktilt C^2},\qquad
\eta_\star:=\min\left\{\frac{8\sigma}{3C},\frac{8(1-3\sigma)}{30C}\right\},\qquad
r_\star:=\max\{r_{\mathrm{gin}},r_{\mathrm{vol}}\}.
\end{equation}
Fix $r\ge r_\star$, $X_0\in G$, and $0<\eta\le\eta_\star$. 
Let $z$ be a transcript in the common support. 
We define
\begin{equation}
I_{\mathrm{small}}(z):=\int_{B_\eta(X_0)}R_X(z)\,dX .
\end{equation}
Using Tonelli's theorem and $\E_{Z\sim P_{X_0}}R_X(Z)=1$,
\begin{equation}
\E_{Z\sim P_{X_0}} I_{\mathrm{small}}(Z)=\int_{B_\eta(X_0)}\E_{Z\sim P_{X_0}}R_X(Z)\,dX=\vol(B_\eta(X_0))\le\vol(\widetilde B_\eta(X_0)).
\end{equation}
Therefore, by Markov's inequality,
\begin{equation}
P_{X_0}\left[I_{\mathrm{small}}(Z)>e^{\D^2}\vol(\widetilde B_\eta(X_0))\right]\le e^{-\D^2}.
\label{eq:anti-markov-bad}
\end{equation}
We work on the complementary event, where
\begin{equation}
I_{\mathrm{small}}(z)\le e^{\D^2}\vol(\widetilde B_\eta(X_0)).
\label{eq:anti-markov-good}
\end{equation}

Since $\eta\le\eta_\star$, the assumptions of Lemma~\ref{lem:support} and Lemma~\ref{lem:tilt} are satisfied. 
Thus
\begin{equation}
X_0+N_{C,\eta}\subseteq B_{C\eta}(X_0).
\end{equation}
For every transcript $z$ in the common support, translation invariance of Lebesgue measure gives
\begin{equation}
\int_{X_0+N_{C,\eta}}R_X(z)\,dX=\vol(N_{C,\eta})\,\E_{W\sim\mathrm{Unif}(N_{C,\eta})}R_{X_0+W}(z).
\end{equation}
Along the fixed transcript $z$, the adaptive tester elements are deterministic. 
Hence, for $W\in N_{C,\eta}$, we have
\begin{equation}
R_{X_0+W}(z)=\prod_{t=1}^T\frac{\Tr(T_t(z)J_{X_0+W})}{\Tr(T_t(z)J_{X_0})}=\Lambda_W(z).
\end{equation}
By Lemma~\ref{lem:tilt}, we have
\begin{equation}
\int_{B_{C\eta}(X_0)}R_X(z)\,dX\ge\int_{X_0+N_{C,\eta}}R_X(z)\,dX\ge\vol(N_{C,\eta})\exp\left(-\Ktilt\frac{C^2\eta^2T}{\D}\right).
\label{eq:anti-large-integral-lower}
\end{equation}

Let
\begin{equation}
m:=\inf_{X\in S}f_\mu(X),
\qquad
M:=\sup_{X\in S}f_\mu(X),
\end{equation}
by Lemma~\ref{lem:prior}, we have $\tfrac{M}{m}\le e^{\Cdens\D^2}$.
If $\nu_z(B_\eta(X_0))=0$, then the desired conclusion holds immediately.
Otherwise, on the event in Eq.~\eqref{eq:anti-markov-good},
\begin{equation}
\frac{\nu_z(B_{C\eta}(X_0))}{\nu_z(B_\eta(X_0))}\ge\frac{m}{M}\frac{\int_{B_{C\eta}(X_0)}R_X(z)\,dX}{\int_{B_\eta(X_0)}R_X(z)\,dX}\ge e^{-\Cdens\D^2}\frac{\vol(N_{C,\eta})}{e^{\D^2}\vol(\widetilde B_\eta(X_0))}\exp\left(-\Ktilt\frac{C^2\eta^2T}{\D}\right).
\end{equation}
Since
\begin{equation}
\vol(\widetilde B_\eta(X_0))=\vol\{W:\normone{W}\le L_\eta\},
\end{equation}
the volume estimate in Lemma~\ref{lem:volume} gives
\begin{equation}
\frac{\nu_z(B_{C\eta}(X_0))}{\nu_z(B_\eta(X_0))}\ge\exp\left(2r^2\log C-\cvol r^2-(\Cdens+1)\D^2-\Ktilt\frac{C^2\eta^2T}{\D}\right).
\end{equation}
Using $\D/3\le r\le \D/2$, we have
\begin{equation}
2r^2\log C-\cvol r^2-(\Cdens+1)\D^2\ge a_C\D^2.
\end{equation}
If $T\le \cac\tfrac{\D^3}{\eta^2}$, then
\begin{equation}
\Ktilt\frac{C^2\eta^2T}{\D}\le\kappa\D^2.
\end{equation}
Therefore, on the event in Eq.~\eqref{eq:anti-markov-good},
\begin{equation}
\frac{\nu_z(B_{C\eta}(X_0))}{\nu_z(B_\eta(X_0))}\ge e^{\kappa\D^2}.
\end{equation}
Since $\nu_z(B_{C\eta}(X_0))\le1$, it follows that $\nu_z(B_\eta(X_0))\le e^{-\kappa\D^2}$.
The good event in Eq.~\eqref{eq:anti-markov-good} has $P_{X_0}$-probability at
least $1-e^{-\D^2}$ by Eq.~\eqref{eq:anti-markov-bad}. This proves the lemma.
\end{proof}

\section{General outcome spaces}
\label{a:borel}

The main text is written for discrete outcome spaces in order to avoid measure-theoretic notation. 
We now record the standard-Borel formulation and explain why all proofs above extend without change.

Fix one round of an incoherent protocol at a history $h$. 
Let $(\calY_h,\Sigma_h)$ be a standard Borel outcome space, $\rho_h\in\calD(R_h\otimes A)$ be the chosen input state, and let $M_h:\Sigma_h\to\calL(R_h\otimes B)$ be a POVM. 
By the same purification-and-vectorization argument as in Lemma~\ref{lem:tester}, there is a positive operator-valued measure and a state $\tau_h\in\calD(A)$ such that for every channel $\Phi$, we have
\begin{equation}
T_h:\Sigma_h\to\calL(A\otimes B),\qquad
T_h(\calY_h)=\tau_h^\top\otimes I_B,\qquad
\Pr_\Phi[Y_t\in E\mid h]=\din\,\Tr\left(T_h(E)J_\Phi\right),\quad E\in\Sigma_h .
\end{equation}
Since $A\otimes B$ is finite-dimensional, the scalar measure $\lambda_h(E):=\Tr T_h(E)$ dominates the operator-valued measure $T_h$. 
Hence there exists a $\lambda_h$-measurable positive-operator-valued density $y\mapsto T_h(y)$ such that
\begin{equation}
    T_h(E)=\int_E T_h(y)\,d\lambda_h(y).
\end{equation}
Consequently the conditional distribution of $Y_t$, given $h$, has density
\begin{equation}
p_X(y\mid h)=\din\,\Tr\left(T_h(y)J_X\right)
\end{equation}
with respect to $\lambda_h$, for every channel in the hard family.

The full-rank bound in Lemma~\ref{lem:valid-family} implies common conditional support. 
Indeed, for every $X\in S$, we have $J_X\succeq \tfrac{1-4\sigma}{\D}I_{AB}\succ0$, and therefore
\begin{equation}
p_X(y\mid h)>0\qquad\Longleftrightarrow\qquad
T_h(y)\ne0.
\end{equation}
Thus the conditional support is independent of $X$. We note that the adaptive protocol induces mutually absolutely continuous transcript laws on the standard-Borel transcript space.
For a reference point $X_0\in S$, the likelihood ratio is, for $P_{X_0}$-almost every transcript $z=(y_1,\ldots,y_T)$,
\begin{equation}
\frac{dP_X}{dP_{X_0}}(z)=\prod_{t=1}^T\frac{\Tr\left(T_t(z)J_X\right)}{\Tr\left(T_t(z)J_{X_0}\right)},
\end{equation}
where $T_t(z)$ denotes the density of the tester at the outcome observed at round $t$, along the history determined by $z$.

All arguments in Sections~\ref{sec:transcript-wise-tilt} and \ref{sec:posterior-anti-concentration} use only this product likelihood ratio, common support, positivity of the tester densities, and Markov inequalities. 
These ingredients hold exactly as above.
Therefore, Theorem~\ref{thm:main} and its corollaries extend from discrete outcome spaces to arbitrary standard Borel outcome spaces.

\section{Proof of the incoherent upper bound}
\label{app:log-free-upper-bound}

In this appendix, we prove Theorem~\ref{thm:log-free-upper}, removing the logarithmic factors appearing in the previously established upper bounds for incoherent channel tomography~\cite{kuengfasttkomogr,oufkir2023sample}. Our proof adapts the technique of Ref.~\cite{kuengfasttkomogr}, which gives an optimal upper bound without logarithmic overhead for incoherent state tomography, to the setting of quantum channels.
Throughout, for a unit vector $u$, write $P_u:=\proj{u}$. 
Let $\mu_d$ denote normalized Haar measure on the unit sphere of $\C^d$.
\subsection{The estimator}
\label{app:upper-estimator}

Fix an integer $T\ge1$. 
For each $t\in\{1,\ldots,T\}$, independently perform the following experiment.
\begin{enumerate}[label=(\roman*)]
\item Sample a Haar-random unit vector $v_t\in A$.
\item Prepare $P_{v_t}$, apply $\Phi$ once, and obtain the output state $\Phi(P_{v_t})$.
\item Independently sample $U_t$ from Haar measure on $U(\dout)$ and measure $\Phi(P_{v_t})$ in the orthonormal basis $\{U_t\ket{j}\}_{j=1}^{\dout}$. 
Let $I_t$ be the observed outcome and set $w_t:=U_t\ket{I_t}$.
\item Form the Hermitian random operator
\begin{equation}
X_t:=\left((\din+1)P_{v_t}^{\top}-I_A\right)\otimes\left((\dout+1)P_{w_t}-I_B\right).
\label{eq:upper-Xt-definition}
\end{equation}
\end{enumerate}
Define the linear estimator
\begin{equation}
L_T:=\frac{1}{T}\sum_{t=1}^T X_t.
\label{eq:upper-LT-definition}
\end{equation}
Let
\begin{equation}
\mathfrak C_{A\to B}:=\left\{K\in\calL(A\otimes B):K\ge0,\ \Tr_BK=\frac{I_A}{\din}\right\}
\label{eq:upper-choi-feasible-set}
\end{equation}
be the set of normalized Choi operators of channels from $A$ to $B$, and choose
\begin{equation}
\widetilde J\in\argmin_{K\in\mathfrak C_{A\to B}}\norminf{K-L_T}.
\label{eq:upper-projection-definition}
\end{equation}
Finally, return the unique channel $\widetilde\Phi$ whose normalized Choi operator is $\widetilde J$.

The minimizer in Eq.~\eqref{eq:upper-projection-definition} exists. 
Indeed, $\mathfrak C_{A\to B}$ is closed, and every $K\in\mathfrak C_{A\to B}$ satisfies $K\ge0$ and $\Tr K=1$, hence $\norminf{K}\le1$. 
Thus $\mathfrak C_{A\to B}$ is compact in finite dimension, and the objective in Eq.~\eqref{eq:upper-projection-definition} is continuous. 
All vectors $v_t$ and bases $U_t$ may be sampled before the experiment begins. 
Each $X_t$ uses exactly one query of $\Phi$, each channel output is measured separately, and no ancillary system is used. 
The protocol is therefore non-adaptive, incoherent, and ancilla-free.

\subsection{Haar moments and random-basis reconstruction}
\label{app:upper-haar-reconstruction}

\begin{lemma}[First three Haar moments~\cite{harrow2013churchsymmetricsubspace,Mele_2024}]
\label{lem:upper-haar-moments}
Let $u\sim\mu_d$, and let $\Pi_{\mathrm{sym}}^{(k)}$ be the orthogonal projector onto the symmetric subspace of $(\C^d)^{\otimes k}$. 
For $k\in\{1,2,3\}$,
\begin{equation}
\E_{u\sim\mu_d}\left[P_u^{\otimes k}\right]
=\frac{\Pi_{\mathrm{sym}}^{(k)}}{\binom{d+k-1}{k}}.
\label{eq:upper-general-haar-moment}
\end{equation}
In particular, if $F$ denotes the swap operator on $\C^d\otimes\C^d$, then
\begin{equation}
\E P_u=\frac{I_d}{d},\qquad
\E P_u^{\otimes2}=\frac{I+F}{d(d+1)},\qquad
\E P_u^{\otimes3}=\frac{\Pi_{\mathrm{sym}}^{(3)}}{\binom{d+2}{3}}.
\label{eq:upper-first-three-haar-moments}
\end{equation}
\end{lemma}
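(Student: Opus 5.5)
The plan is to derive the general formula $\E[P_u^{\otimes k}]=\Pi_{\mathrm{sym}}^{(k)}/\binom{d+k-1}{k}$ from Schur's lemma plus the symmetric-subspace dimension, and then specialize to $k=1,2,3$. Set $M_k:=\E_{u\sim\mu_d}[P_u^{\otimes k}]$.

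\textbf{Support.} For every unit $u$ we have $\ket{u}^{\otimes k}\in\mathrm{Sym}^k(\C^d)$. Hence $P_u^{\otimes k}=\Pi_{\mathrm{sym}}^{(k)}P_u^{\otimes k}\Pi_{\mathrm{sym}}^{(k)}$, and after averaging, $M_k$ is supported on the symmetric subspace.

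\textbf{Invariance.} Haar measure on the sphere is invariant under $u\mapsto Uu$ for every $U\in U(d)$. Therefore
\begin{equation}
U^{\otimes k}M_k(U^\dagger)^{\otimes k}=M_k\qquad\text{for all }U\in U(d).
\end{equation}
The restriction of $U\mapsto U^{\otimes k}$ to $\mathrm{Sym}^k(\C^d)$ is an irreducible representation of $U(d)$. This is standard: it is the irreducible representation of highest weight $(k,0,\dots,0)$. Alternatively, irreducibility follows from the fact that the span of $\{U^{\otimes k}\}$ is the commutant of the permutation action, whose restriction to the symmetric subspace is the full algebra $\calL(\mathrm{Sym}^k)$.

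By Schur's lemma, $M_k=c\,\Pi_{\mathrm{sym}}^{(k)}$ for some scalar $c$. Taking traces gives $1=\Tr M_k=c\dim\mathrm{Sym}^k(\C^d)=c\binom{d+k-1}{k}$, which yields the general formula.

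\textbf{Specializations.} For $k=1$, $\Pi^{(1)}_{\mathrm{sym}}=I_d$ and $\binom{d}{1}=d$. For $k=2$, $\Pi^{(2)}_{\mathrm{sym}}=(I+F)/2$ and $\binom{d+1}{2}=d(d+1)/2$, so the factors of $2$ cancel and give $(I+F)/(d(d+1))$. For $k=3$, the formula is just the general one written out.

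\textbf{Main obstacle.} The only non-elementary input is the irreducibility of $\mathrm{Sym}^k(\C^d)$ under $U^{\otimes k}$. If a self-contained argument is preferred, I would avoid it by a direct computation: average over the diagonal phase torus $\mathrm{diag}(e^{i\phi_j})$ to show that $M_k$ is diagonal in the occupation-number basis of $\mathrm{Sym}^k$. Then use invariance under permutation matrices and under a two-dimensional rotation mixing $\ket{1}$ and $\ket{2}$ to show that all diagonal entries coincide.

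Since the lemma is cited from the literature, in the paper it suffices to present the Schur-lemma argument briefly and refer to the cited works for the representation-theoretic fact.
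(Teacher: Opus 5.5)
Your proof is correct. You use support on $\mathrm{Sym}^k(\C^d)$, $U(d)$-invariance, Schur's lemma on the irreducible symmetric subspace, and trace normalization with $\dim\mathrm{Sym}^k(\C^d)=\binom{d+k-1}{k}$, and your specializations to $k=1,2,3$ are right. The paper gives no proof of its own; it cites Harrow and Mele, where this lemma is established by essentially this Schur-lemma argument.
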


The following lemma is standard and has previously been used in the contexts of quantum state tomography and classical shadows~\cite{kuengfasttkomogr,huangclassical}. For completeness, we provide a self-contained proof with explicit constants.
\begin{lemma}[Random-basis reconstruction]
\label{lem:upper-random-basis-reconstruction}
Let $\rho\in\calD(\C^d)$. 
Sample $U$ from Haar measure on $U(d)$, measure $\rho$ in the basis $\{U\ket{j}\}_{j=1}^d$, let $I$ be the outcome, and set $w:=U\ket{I}$. 
Then, for every bounded measurable function $f$ on the unit sphere,
\begin{equation}
\E f(w)=d\int f(w)\bra{w}\rho\ket{w}\,d\mu_d(w).
\label{eq:upper-outcome-law}
\end{equation}
Define
\begin{equation}
Y_d(w):=(d+1)P_w-I_d.
\label{eq:upper-Yd-definition}
\end{equation}
Then
\begin{equation}
\E Y_d(w)=\rho.
\label{eq:upper-local-unbiasedness}
\end{equation}
Moreover, for every $Q\ge0$,
\begin{equation}
\E\left[\Tr\left(QY_d(w)\right)^2\right]\le14\left(\Tr Q\right)^2.
\label{eq:upper-local-second-moment}
\end{equation}
\end{lemma}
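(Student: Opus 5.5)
The plan has three parts, taken in order: the outcome law, unbiasedness, and the second-moment bound.

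\emph{Outcome law.} Conditional on $U$, outcome $j$ occurs with probability $\bra{j}U^\dagger\rho U\ket{j}$, so
\begin{equation}
\E f(w)=\E_U\sum_{j=1}^d f(U\ket{j})\bra{j}U^\dagger\rho U\ket{j}.
\end{equation}
For each fixed $j$, the vector $U\ket{j}$ is distributed according to $\mu_d$ by unitary invariance of Haar measure. Each of the $d$ summands therefore equals $\int f(w)\bra{w}\rho\ket{w}\,d\mu_d(w)$, which gives Eq.~\eqref{eq:upper-outcome-law}.

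\emph{Unbiasedness.} Apply Eq.~\eqref{eq:upper-outcome-law} entrywise with $f(w)=P_w$. Then
\begin{equation}
\E P_w=d\,\Tr_2\!\left[(I\otimes\rho)\,\E P_u^{\otimes2}\right]=\frac{d}{d(d+1)}\Tr_2\!\left[(I\otimes\rho)(I+F)\right]=\frac{I+\rho}{d+1},
\end{equation}
using Lemma~\ref{lem:upper-haar-moments} and $\Tr\rho=1$. Hence $\E Y_d(w)=(I+\rho)-I=\rho$.

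\emph{Second moment.} By Eq.~\eqref{eq:upper-outcome-law},
\begin{equation}
\E\left[\Tr(QY_d(w))^2\right]=d\int\bigl((d+1)\bra{w}Q\ket{w}-\Tr Q\bigr)^2\bra{w}\rho\ket{w}\,d\mu_d(w).
\end{equation}
Since the square of a difference is at most twice the sum of squares, this is bounded by
\begin{equation}
2d(d+1)^2\,\E_u\!\left[\bra{u}Q\ket{u}^2\bra{u}\rho\ket{u}\right]+2(\Tr Q)^2\,d\,\E_u\bra{u}\rho\ket{u}.
\end{equation}
The second term equals $2(\Tr Q)^2$. For the first term, use the third Haar moment in Lemma~\ref{lem:upper-haar-moments}:
\begin{equation}
\E_u\!\left[\bra{u}Q\ket{u}^2\bra{u}\rho\ket{u}\right]=\frac{\Tr\!\left[(Q\otimes Q\otimes\rho)\Pi^{(3)}_{\mathrm{sym}}\right]}{\binom{d+2}{3}}.
\end{equation}
Expand $\Pi^{(3)}_{\mathrm{sym}}=\frac16\sum_{\pi\in S_3}W_\pi$. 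Each permutation contributes a product of traces of words in $Q$ and $\rho$:
\begin{equation}
(\Tr Q)^2,\quad \Tr(Q^2),\quad \Tr Q\,\Tr(Q\rho)\ \text{(twice)},\quad \Tr(Q^2\rho)\ \text{(twice)}.
\end{equation}
Since $Q\ge0$ and $\rho\le I$, every such term is at most $(\Tr Q)^2$, so the trace above is at most $(\Tr Q)^2$. Hence the first term is at most
\begin{equation}
\frac{2d(d+1)^2\cdot 6}{d(d+1)(d+2)}(\Tr Q)^2\le 12(\Tr Q)^2,
\end{equation}
and the total is at most $14(\Tr Q)^2$.

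The only step needing care is this bound on the third moment. Everything rests on the positivity bounds $\Tr(Q\rho)\le\Tr Q$, $\Tr(Q^2\rho)\le\Tr(Q^2)\le(\Tr Q)^2$, and $\Tr(Q^2)\le(\Tr Q)^2$, all of which use $Q\ge0$. One must also check the constant $6(d+1)/(d+2)\le 6$, which holds for all $d\ge1$.
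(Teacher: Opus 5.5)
Your proposal is correct and follows the paper's proof step for step: Born's rule plus Haar invariance for the outcome law, the second Haar moment for unbiasedness, and $(x-y)^2\le 2x^2+2y^2$ followed by the third Haar moment for the second-moment bound. The only local difference is that you bound $\Tr[(Q\otimes Q\otimes\rho)\Pi^{(3)}_{\mathrm{sym}}]\le(\Tr Q)^2$ by expanding the projector into the six permutation operators and bounding each trace term separately, whereas the paper obtains the same bound in one line from $0\le\Pi^{(3)}_{\mathrm{sym}}\le I$ and the positivity of $\rho\otimes Q\otimes Q$.
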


\begin{proof}
Conditioned on $U$, Born's rule gives
\begin{equation}
\E[f(w)\mid U]=\sum_{j=1}^d\bra{j}U^\dagger\rho U\ket{j}\,f(U\ket{j}).
\end{equation}
For each fixed $j$, the vector $U\ket{j}$ is Haar-distributed. 
Taking expectation over $U$ therefore gives
\begin{equation}
\E f(w)=d\int f(w)\bra{w}\rho\ket{w}\,d\mu_d(w),
\end{equation}
which proves Eq.~\eqref{eq:upper-outcome-law}.

Using Eq.~\eqref{eq:upper-outcome-law} and the second Haar moment,
\begin{equation}
\begin{aligned}
\E P_w
&=d\int \Tr(\rho P_w)P_w\,d\mu_d(w)\\
&=d\,\Tr_1\left[(\rho\otimes I_d)\frac{I+F}{d(d+1)}\right]\\
&=\frac{I_d+\rho}{d+1}.
\end{aligned}
\label{eq:upper-EPw}
\end{equation}
The last equality uses $\Tr_1[(\rho\otimes I_d)I]=I_d$ and $\Tr_1[(\rho\otimes I_d)F]=\rho$. 
Equation~\eqref{eq:upper-local-unbiasedness} follows immediately from Eqs.~\eqref{eq:upper-Yd-definition} and \eqref{eq:upper-EPw}.

For the second moment, set $q:=\Tr Q$. 
By Eq.~\eqref{eq:upper-outcome-law} and the inequality $(x-y)^2\le2x^2+2y^2$,
\begin{equation}
\begin{aligned}
\E\left[\Tr\left(QY_d(w)\right)^2\right]
&=d\int \Tr(\rho P_w)\left((d+1)\Tr(QP_w)-q\right)^2d\mu_d(w)\\
&\le2d(d+1)^2\int \Tr(\rho P_w)\Tr(QP_w)^2d\mu_d(w)+2q^2.
\end{aligned}
\label{eq:upper-local-second-prebound}
\end{equation}
The third Haar moment gives
\begin{equation}
\int \Tr(\rho P_w)\Tr(QP_w)^2d\mu_d(w)
=\frac{\Tr\left[(\rho\otimes Q\otimes Q)\Pi_{\mathrm{sym}}^{(3)}\right]}{\binom{d+2}{3}}.
\label{eq:upper-third-moment-contraction}
\end{equation}
Since $\rho\otimes Q\otimes Q\ge0$ and $0\le\Pi_{\mathrm{sym}}^{(3)}\le I$, we have
\begin{equation}
\Tr\left[(\rho\otimes Q\otimes Q)\Pi_{\mathrm{sym}}^{(3)}\right]
\le\Tr(\rho\otimes Q\otimes Q)=q^2.
\label{eq:upper-third-moment-upper}
\end{equation}
Substituting Eqs.~\eqref{eq:upper-third-moment-contraction} and \eqref{eq:upper-third-moment-upper} into Eq.~\eqref{eq:upper-local-second-prebound} yields
\begin{equation}
\begin{aligned}
\E\left[\Tr\left(QY_d(w)\right)^2\right]
&\le2d(d+1)^2\frac{6q^2}{d(d+1)(d+2)}+2q^2\\
&=12\frac{d+1}{d+2}q^2+2q^2\\
&\le14q^2.
\end{aligned}
\end{equation}
This proves Eq.~\eqref{eq:upper-local-second-moment}.
\end{proof}

\subsection{Unbiasedness of the Choi estimator}
\label{app:upper-unbiasedness}

\begin{lemma}[Choi frame identity]
\label{lem:upper-choi-frame}
For $v\sim\mu_{\din}$,
\begin{equation}
(\din+1)\E_v\left[P_v^\top\otimes\Phi(P_v)\right]
=J_\Phi+I_A\otimes\Phi\left(\frac{I_A}{\din}\right).
\label{eq:upper-choi-frame}
\end{equation}
\end{lemma}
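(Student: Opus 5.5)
The plan is to expand $\Phi(P_v)$ by linearity using the Choi representation, and then average over $v$ with the second Haar moment from Lemma~\ref{lem:upper-haar-moments}. First write $P_v=\sum_{i,j}v_i\bar v_j\ketbra{i}{j}$. Then
\begin{equation}
P_v^\top\otimes\Phi(P_v)=\sum_{i,j,k,l}\bar v_k v_l\, v_i\bar v_j\,\ketbra{k}{l}\otimes\Phi(\ketbra{i}{j}),
\end{equation}
since $P_v^\top=\sum_{k,l}v_l\bar v_k\ketbra{k}{l}$ up to relabeling. Only the fourth-order moment $\E[v_i\bar v_j v_l\bar v_k]$ is needed. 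By the second Haar moment $\E P_v^{\otimes 2}=(I+F)/(\din(\din+1))$, this moment equals $(\delta_{ij}\delta_{lk}+\delta_{ik}\delta_{lj})/(\din(\din+1))$.

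Next, substitute the two Kronecker-delta terms and identify each one.

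\begin{itemize}[label={}]
\item The term $\delta_{ik}\delta_{jl}$ yields $\sum_{i,j}\ketbra{i}{j}\otimes\Phi(\ketbra{i}{j})$. This is $\din J_\Phi$ by the normalized Choi convention.
\item The term $\delta_{ij}\delta_{kl}$ yields $\sum_k\ketbra{k}{k}\otimes\Phi(\sum_i\ketbra{i}{i})$. This is $I_A\otimes\Phi(I_A)=\din\, I_A\otimes\Phi(I_A/\din)$.
\end{itemize}

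Dividing by $\din(\din+1)$ and multiplying by $(\din+1)$ then gives exactly $J_\Phi+I_A\otimes\Phi(I_A/\din)$.

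The only delicate step is bookkeeping the transpose, namely checking that the delta pattern attached to the swap really pairs with the $J_\Phi$ term rather than with its partial transpose. To keep this transparent, I would equivalently write
\begin{equation}
P_v^\top\otimes\Phi(P_v)=(\id\otimes\Phi)\bigl[(P_v^\top\otimes I)\,\din\proj{\Omega}\,(\ldots)\bigr].
\end{equation}
Alternatively, I would use the identity $\Tr_1[(P_v^{\top}\otimes I)\cdots]$: the transpose on the first factor converts $F$ into $\din\proj{\Omega}$ under partial transposition. In the end, this amounts to the standard fact that $(I+F)^{T_1}=I+\din\proj{\Omega}$, and applying $\id\otimes\Phi$ to it gives the claim.
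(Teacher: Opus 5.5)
Your proof is correct and is essentially the paper's argument: the paper partially transposes the second Haar moment, uses $F^{\top_1}=\din\proj{\Omega}$, and applies $\id_A\otimes\Phi$, which is exactly what your coordinate computation does. Your index bookkeeping checks out, since $\delta_{ik}\delta_{jl}$ gives $\din J_\Phi$ and $\delta_{ij}\delta_{kl}$ gives $I_A\otimes\Phi(I_A)$, so the half-sketched alternatives in your last paragraph are not needed.
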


\begin{proof}
Let $F$ be the swap operator on $A\otimes A$. 
Taking the transpose on the first tensor factor of the second Haar-moment identity gives
\begin{equation}
\E_v\left[P_v^\top\otimes P_v\right]
=\frac{I_A\otimes I_A+F^{\top_1}}{\din(\din+1)}.
\label{eq:upper-partial-transpose-haar}
\end{equation}
A direct expansion of the swap operator shows that
\begin{equation}
F^{\top_1}=\din\proj{\Omega}.
\label{eq:upper-swap-partial-transpose}
\end{equation}
Applying $\id_A\otimes\Phi$ to Eq.~\eqref{eq:upper-partial-transpose-haar}, using Eq.~\eqref{eq:upper-swap-partial-transpose} and the definition of $J_\Phi$, gives
\begin{equation}
\E_v\left[P_v^\top\otimes\Phi(P_v)\right]
=\frac{I_A\otimes\Phi(I_A)+\din J_\Phi}{\din(\din+1)}.
\end{equation}
Multiplying by $\din+1$ proves Eq.~\eqref{eq:upper-choi-frame}.
\end{proof}

\begin{lemma}[Unbiasedness]
\label{lem:upper-unbiasedness}
For every $t$,
\begin{equation}
\E X_t=J_\Phi.
\label{eq:upper-Xt-unbiased}
\end{equation}
Consequently, $\E L_T=J_\Phi$.
\end{lemma}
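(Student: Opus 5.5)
The plan is to condition on the random input vector $v_t$ and apply the local unbiasedness from Lemma~\ref{lem:upper-random-basis-reconstruction} to the output measurement, then average over $v_t$ using the Choi frame identity of Lemma~\ref{lem:upper-choi-frame}.

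\emph{Conditioning on the input.} Since $X_t$ is a tensor product whose first factor depends only on $v_t$, we have
\begin{equation}
\E\left[X_t\mid v_t\right]=\left((\din+1)P_{v_t}^\top-I_A\right)\otimes\E\left[Y_{\dout}(w_t)\mid v_t\right].
\end{equation}
Conditioned on $v_t$, the outcome $w_t$ arises from measuring the state $\rho=\Phi(P_{v_t})$ in a Haar-random basis. Equation~\eqref{eq:upper-local-unbiasedness} therefore gives $\E[Y_{\dout}(w_t)\mid v_t]=\Phi(P_{v_t})$, and so
\begin{equation}
\E\left[X_t\mid v_t\right]=(\din+1)P_{v_t}^\top\otimes\Phi(P_{v_t})-I_A\otimes\Phi(P_{v_t}).
\end{equation}

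\emph{Averaging over $v_t$.} Now take the expectation over $v_t\sim\mu_{\din}$.
\begin{itemize}
\item The first term equals $J_\Phi+I_A\otimes\Phi(I_A/\din)$, by Lemma~\ref{lem:upper-choi-frame}.
\item For the second term, linearity of $\Phi$ and $\E P_{v}=I_A/\din$ from Lemma~\ref{lem:upper-haar-moments} give $I_A\otimes\Phi(I_A/\din)$.
\end{itemize}
These two contributions cancel except for $J_\Phi$, which proves Eq.~\eqref{eq:upper-Xt-unbiased}. The statement $\E L_T=J_\Phi$ then follows by linearity of expectation over the $T$ rounds.

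\emph{Where care is needed.} The only point requiring attention is the conditioning step. The random basis $U_t$ must be independent of $v_t$, which holds because it is sampled independently in step (iii). Lemma~\ref{lem:upper-random-basis-reconstruction} applies to an arbitrary density operator $\rho$, so it can be used with the $v_t$-dependent state $\Phi(P_{v_t})$. All quantities are bounded, so exchanging expectation with the tensor product and with $\Phi$ is justified in finite dimension.
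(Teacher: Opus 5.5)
Your proof is correct and follows the paper's argument step for step. You condition on $v_t$, use the random-basis unbiasedness to replace the output factor by $\Phi(P_{v_t})$, average over $v_t$ with the Choi frame identity and $\E P_v=I_A/\din$, and conclude $\E L_T=J_\Phi$ by linearity.
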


\begin{proof}
Condition on $v_t=v$. 
The measured output state is $\rho_v:=\Phi(P_v)$. 
By Lemma~\ref{lem:upper-random-basis-reconstruction},
\begin{equation}
\E\left[(\dout+1)P_{w_t}-I_B\mid v_t=v\right]=\Phi(P_v).
\label{eq:upper-conditional-output-reconstruction}
\end{equation}
Using the tower property, Eq.~\eqref{eq:upper-conditional-output-reconstruction}, and $\E_vP_v=I_A/\din$, we obtain
\begin{equation}
\begin{aligned}
\E X_t
&=\E_v\left[\left((\din+1)P_v^\top-I_A\right)\otimes\Phi(P_v)\right]\\
&=(\din+1)\E_v\left[P_v^\top\otimes\Phi(P_v)\right]
-I_A\otimes\Phi\left(\frac{I_A}{\din}\right)\\
&=J_\Phi,
\end{aligned}
\end{equation}
where the final equality is Lemma~\ref{lem:upper-choi-frame}. 
Linearity gives $\E L_T=J_\Phi$.
\end{proof}

\subsection{Dimension-independent directional variance}
\label{app:upper-directional-variance}

\begin{lemma}[Directional variance and range]
\label{lem:upper-directional-variance}
For every unit vector $z\in A\otimes B$, define
\begin{equation}
G_z:=\bra{z}X_t\ket{z}.
\end{equation}
Then
\begin{equation}
\operatorname{Var}(G_z)\le140.
\label{eq:upper-directional-variance}
\end{equation}
Moreover,
\begin{equation}
\left|G_z-\bra{z}J_\Phi\ket{z}\right|\le2\D
\label{eq:upper-directional-range}
\end{equation}
almost surely.
\end{lemma}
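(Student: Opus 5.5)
The plan is to bound $\operatorname{Var}(G_z)$ by the second moment $\E G_z^2$. I will split $X_t$ into two tensor-product pieces, each of which is handled by the positive-operator second-moment bound of Lemma~\ref{lem:upper-random-basis-reconstruction} applied conditionally on $v_t$. Write $Y_{\dout}(w)=(\dout+1)P_w-I_B$, so that
\begin{equation}
X_t=(\din+1)\,P_{v_t}^\top\otimes Y_{\dout}(w_t)-I_A\otimes Y_{\dout}(w_t).
\end{equation}
For a unit vector $z\in A\otimes B$, define the operators on $B$
\begin{equation}
Q^{(1)}_v:=\Tr_A\bigl[(P_v^\top\otimes I_B)P_z\bigr],\qquad Q^{(2)}:=\Tr_A P_z .
\end{equation}
Both are positive semidefinite, since $P_v^\top\ge0$. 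This gives the decomposition $G_z=(\din+1)\Tr(Q^{(1)}_{v_t}Y_{\dout}(w_t))-\Tr(Q^{(2)}Y_{\dout}(w_t))$. The point of splitting this way is that $Y_{\din}(v)^\top$ itself is not positive, but each piece above pairs $Y_{\dout}$ with a positive operator.

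The first step is to use $(x-y)^2\le 2x^2+2y^2$ to get
\begin{equation}
\E G_z^2\le 2(\din+1)^2\,\E\bigl[\Tr(Q^{(1)}_{v_t}Y_{\dout}(w_t))^2\bigr]+2\,\E\bigl[\Tr(Q^{(2)}Y_{\dout}(w_t))^2\bigr].
\end{equation}
Conditionally on $v_t=v$, the measured state is $\Phi(P_v)$. Eq.~\eqref{eq:upper-local-second-moment} therefore bounds each conditional second moment by $14(\Tr Q)^2$. Since $\Tr Q^{(2)}=1$, the second term is at most $28$.

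For the first term, note that $\Tr Q^{(1)}_v=\Tr[(P_v^\top\otimes I_B)P_z]=\bra{\bar v}\sigma\ket{\bar v}$, where $\sigma:=\Tr_B P_z$ is a density operator on $A$. The vector $\bar v$ is again Haar-distributed. Using the second Haar moment from Lemma~\ref{lem:upper-haar-moments},
\begin{equation}
\E_v\bigl(\Tr Q^{(1)}_v\bigr)^2=\frac{\Tr[(\sigma\otimes\sigma)(I+F)]}{\din(\din+1)}=\frac{1+\Tr\sigma^2}{\din(\din+1)}\le\frac{2}{\din(\din+1)}.
\end{equation}
The first term is then at most $2\cdot14\cdot(\din+1)^2\cdot\frac{2}{\din(\din+1)}=56\,\frac{\din+1}{\din}\le112$. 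Together with the $28$ from the second term, this gives $\operatorname{Var}(G_z)\le\E G_z^2\le140$.

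For the almost-sure range bound, observe the spectra of the two factors. The operator $(\din+1)P_v^\top-I_A$ has eigenvalues $\din$ and $-1$, and $Y_{\dout}(w)$ has eigenvalues $\dout$ and $-1$. Hence $\norminf{X_t}\le\din\dout=\D$, so $|G_z|\le\D$. Moreover $0\le\bra{z}J_\Phi\ket{z}\le1\le\D$, and the triangle inequality gives $|G_z-\bra{z}J_\Phi\ket{z}|\le2\D$.

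The only delicate step is the first-term bound. The conditioning on $v_t$ must be done correctly: Lemma~\ref{lem:upper-random-basis-reconstruction} is applied with $\rho=\Phi(P_v)$ and a $v$-dependent positive $Q$, which is legitimate because $U_t$ is independent of $v_t$. In addition, the factor $(\din+1)^2$ must be exactly compensated by the $1/(\din(\din+1))$ from the Haar moment. This cancellation is what makes the bound dimension-independent.
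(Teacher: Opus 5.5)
Your proposal is correct and is essentially the paper's proof. Your $Q^{(1)}_v$ and $Q^{(2)}$ are exactly the paper's $Q_v=(\bra{\overline v}\otimes I_B)\proj{z}(\ket{\overline v}\otimes I_B)$ and $R_B=\Tr_A\proj{z}$, and the remaining steps match line by line: the conditional application of the $14(\Tr Q)^2$ bound, the second-Haar-moment estimate $\E_v q_v^2\le 2/(\din(\din+1))$, and the range bound via $\norminf{X_t}=\D$.
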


\begin{proof}
Let
\begin{equation}
R_A:=\Tr_B\proj{z},\qquad R_B:=\Tr_A\proj{z}.
\label{eq:upper-reduced-states}
\end{equation}
Both $R_A$ and $R_B$ are density operators. 
For fixed $v$, define the positive operator on $B$
\begin{equation}
Q_v:=(\bra{\overline v}\otimes I_B)\proj{z}(\ket{\overline v}\otimes I_B),
\label{eq:upper-Qv-definition}
\end{equation}
and set
\begin{equation}
q_v:=\Tr Q_v=\bra{\overline v}R_A\ket{\overline v}.
\label{eq:upper-qv-definition}
\end{equation}
Writing $Y:=(\dout+1)P_{w_t}-I_B$ and using $P_v^\top=P_{\overline v}$, the definition of the partial trace gives
\begin{equation}
G_z=(\din+1)\Tr(Q_vY)-\Tr(R_BY).
\label{eq:upper-Gz-decomposition}
\end{equation}
Conditioned on $v$, the vector $w_t$ is obtained from the state $\Phi(P_v)$ by the random-basis procedure in Lemma~\ref{lem:upper-random-basis-reconstruction}. 
Applying Eq.~\eqref{eq:upper-local-second-moment} to $Q_v$ and to $R_B$, respectively, gives
\begin{equation}
\E\left[\Tr(Q_vY)^2\mid v\right]\le14q_v^2,
\qquad
\E\left[\Tr(R_BY)^2\mid v\right]\le14.
\label{eq:upper-two-local-moments}
\end{equation}
Using $(x-y)^2\le2x^2+2y^2$ in Eq.~\eqref{eq:upper-Gz-decomposition},
\begin{equation}
\E[G_z^2\mid v]\le28(\din+1)^2q_v^2+28.
\label{eq:upper-Gz-conditional-second}
\end{equation}
Because $\overline v$ is Haar-distributed whenever $v$ is Haar-distributed, the second Haar moment gives
\begin{equation}
\begin{aligned}
\E_vq_v^2
&=\E_v\bra{\overline v}R_A\ket{\overline v}^2\\
&=\frac{(\Tr R_A)^2+\Tr(R_A^2)}{\din(\din+1)}\\
&\le\frac{2}{\din(\din+1)}.
\end{aligned}
\label{eq:upper-qv-second-moment}
\end{equation}
Averaging Eq.~\eqref{eq:upper-Gz-conditional-second} over $v$ and using Eq.~\eqref{eq:upper-qv-second-moment},
\begin{equation}
\begin{aligned}
\E G_z^2
&\le28(\din+1)^2\frac{2}{\din(\din+1)}+28\\
&=56\frac{\din+1}{\din}+28\\
&\le140.
\end{aligned}
\label{eq:upper-Gz-second-moment}
\end{equation}
By Lemma~\ref{lem:upper-unbiasedness}, $\E G_z=\bra{z}J_\Phi\ket{z}$. 
Therefore,
\begin{equation}
\operatorname{Var}(G_z)=\E G_z^2-(\E G_z)^2\le\E G_z^2\le140,
\end{equation}
which proves Eq.~\eqref{eq:upper-directional-variance}.

For the range bound, $(\din+1)P_v^\top-I_A$ has operator norm $\din$, and $(\dout+1)P_w-I_B$ has operator norm $\dout$. 
Hence
\begin{equation}
\norminf{X_t}=\din\dout=\D.
\label{eq:upper-Xt-operator-norm}
\end{equation}
Since $J_\Phi$ is a density operator, $\norminf{J_\Phi}\le1$. 
Thus
\begin{equation}
\left|G_z-\bra{z}J_\Phi\ket{z}\right|
\le\norminf{X_t-J_\Phi}
\le\D+1
\le2\D,
\end{equation}
which proves Eq.~\eqref{eq:upper-directional-range}.
\end{proof}

\subsection{Scalar concentration and a covering net}
\label{app:upper-concentration}
 
We recall the following standard form of Bernstein's inequality; see, for
example, Ref.~\cite[Eq.~(2.10), p.~36]{boucheron2013concentration}.
For completeness, we include a self-contained proof that keeps track of
the constants.

\begin{lemma}[Scalar Bernstein inequality]
\label{lem:upper-scalar-bernstein}
Let $\xi_1,\ldots,\xi_T$ be independent mean-zero real random variables satisfying $|\xi_t|\le R$ almost surely and $\E\xi_t^2\le\sigma^2$ for every $t$. 
Then, for every $s>0$,
\begin{equation}
\Prob\left[\left|\frac{1}{T}\sum_{t=1}^T\xi_t\right|\ge s\right]
\le2\exp\left(-\frac{Ts^2}{2\left(\sigma^2+Rs/3\right)}\right).
\label{eq:upper-scalar-bernstein}
\end{equation}
\end{lemma}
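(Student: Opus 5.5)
We prove the scalar Bernstein inequality by the Chernoff (exponential moment) method, treating the upper tail and then obtaining the lower tail by symmetry.

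\textbf{Step 1: the moment generating function of a single term.} Fix $\lambda\in(0,3/R)$. For each $t$, expand
\begin{equation}
\E e^{\lambda\xi_t}=1+\lambda\E\xi_t+\sum_{k\ge2}\frac{\lambda^k\E\xi_t^k}{k!}.
\end{equation}
The linear term vanishes by the mean-zero assumption. For the higher terms, use $|\xi_t|\le R$ to get $|\E\xi_t^k|\le R^{k-2}\sigma^2$. Then apply the elementary bound $k!\ge 2\cdot 3^{k-2}$ to obtain
\begin{equation}
\E e^{\lambda\xi_t}\le 1+\frac{\lambda^2\sigma^2}{2}\sum_{k\ge2}\left(\frac{\lambda R}{3}\right)^{k-2}=1+\frac{\lambda^2\sigma^2}{2(1-\lambda R/3)}.
\end{equation}
Finally, use $1+x\le e^x$ to convert this into an exponential bound.

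\textbf{Step 2: Chernoff bound for the sum.} By independence the moment generating functions multiply, and Markov's inequality applied to $e^{\lambda\sum_t\xi_t}$ gives
\begin{equation}
\Prob\left[\sum_t\xi_t\ge Ts\right]\le\exp\left(-\lambda Ts+\frac{T\lambda^2\sigma^2}{2(1-\lambda R/3)}\right).
\end{equation}

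\textbf{Step 3: choice of $\lambda$.} Instead of optimizing exactly, take
\begin{equation}
\lambda=\frac{s}{\sigma^2+Rs/3}.
\end{equation}
This choice satisfies $\lambda R/3<1$ and gives
\begin{equation}
1-\frac{\lambda R}{3}=\frac{\sigma^2}{\sigma^2+Rs/3}.
\end{equation}
Substituting, the second term of the exponent equals $\frac{Ts^2}{2(\sigma^2+Rs/3)}$. The exponent is therefore
\begin{equation}
-\frac{Ts^2}{\sigma^2+Rs/3}+\frac{Ts^2}{2(\sigma^2+Rs/3)}=-\frac{Ts^2}{2(\sigma^2+Rs/3)},
\end{equation}
which is exactly the claimed upper-tail rate.

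\textbf{Step 4: the lower tail.} Apply the same argument to the variables $-\xi_t$, which satisfy identical hypotheses. A union bound over the two tails then gives the factor $2$.

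The only step requiring care is the moment bound in Step 1, specifically the combinatorial inequality $k!\ge2\cdot3^{k-2}$. It holds by induction, since $(k+1)\ge3$ for all $k\ge2$. One must also handle the degenerate case $\sigma=0$: then $\xi_t=0$ almost surely, and the claim is trivial.
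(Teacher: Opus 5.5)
Your proposal is correct and follows the paper's proof essentially step for step: Chernoff bound via $k!\ge 2\cdot 3^{k-2}$, the choice $\lambda=s/(\sigma^2+Rs/3)$, symmetry plus a union bound, and the degenerate case $\sigma=0$ handled separately. The only cosmetic difference is that the paper first proves the pointwise bound $e^{\lambda x}\le 1+\lambda x+\lambda^2x^2/(2(1-\lambda R/3))$ for $|x|\le R$ and then takes expectations, whereas you bound the moments $\E\xi_t^k\le R^{k-2}\sigma^2$ term by term; the interchange of sum and expectation this requires is justified because the variables are bounded.
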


\begin{proof}
Assume first that $\sigma^2>0$. 
For $0\le\lambda<3/R$ and every real $x$ with $|x|\le R$, the power-series expansion of the exponential and the bound $1/k!\le1/(2\cdot3^{k-2})$ for $k\ge2$ imply
\begin{equation}
e^{\lambda x}
\le1+\lambda x+\frac{\lambda^2x^2}{2(1-\lambda R/3)}.
\label{eq:upper-bernstein-elementary-mgf}
\end{equation}
Taking expectations and using $\E\xi_t=0$ gives
\begin{equation}
\E e^{\lambda\xi_t}
\le1+\frac{\lambda^2\E\xi_t^2}{2(1-\lambda R/3)}
\le\exp\left(\frac{\lambda^2\sigma^2}{2(1-\lambda R/3)}\right).
\label{eq:upper-bernstein-one-mgf}
\end{equation}
Independence and Markov's inequality therefore imply
\begin{equation}
\Prob\left[\sum_{t=1}^T\xi_t\ge Ts\right]
\le\exp\left(-\lambda Ts+\frac{T\lambda^2\sigma^2}{2(1-\lambda R/3)}\right).
\label{eq:upper-bernstein-chernoff}
\end{equation}
Choose
\begin{equation}
\lambda:=\frac{s}{\sigma^2+Rs/3}.
\label{eq:upper-bernstein-lambda}
\end{equation}
Then $0\le\lambda<3/R$ and
\begin{equation}
1-\frac{\lambda R}{3}=\frac{\sigma^2}{\sigma^2+Rs/3}.
\end{equation}
Substitution into Eq.~\eqref{eq:upper-bernstein-chernoff} yields
\begin{equation}
\Prob\left[\frac{1}{T}\sum_{t=1}^T\xi_t\ge s\right]
\le\exp\left(-\frac{Ts^2}{2(\sigma^2+Rs/3)}\right).
\end{equation}
Applying the same argument to $-\xi_t$ and taking a union bound proves Eq.~\eqref{eq:upper-scalar-bernstein}. 
If $\sigma^2=0$, then each $\xi_t=0$ almost surely, and the conclusion is immediate.
\end{proof}

\begin{lemma}[Constant-radius net]
\label{lem:upper-net}
There exists a $1/4$-net $\mathcal T$ of the unit sphere of $\C^\D$, in Euclidean norm, such that
\begin{equation}
|\mathcal T|\le9^{2\D}.
\label{eq:upper-net-size}
\end{equation}
For every Hermitian operator $H\in\calL(\C^\D)$,
\begin{equation}
\norminf{H}\le2\max_{z\in\mathcal T}|\bra{z}H\ket{z}|.
\label{eq:upper-net-operator-norm}
\end{equation}
\end{lemma}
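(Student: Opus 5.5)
The lemma has two parts: a volumetric bound on the net size, and a net-to-operator-norm comparison for Hermitian operators. For the first part, identify $\C^\D\simeq\R^{2\D}$ isometrically, so the unit sphere of $\C^\D$ becomes the unit sphere $S^{2\D-1}$ in Euclidean norm. Take $\mathcal T$ to be a maximal $1/4$-separated subset of the sphere. Maximality makes it a $1/4$-net: any point farther than $1/4$ from every element of $\mathcal T$ could be added to it.

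For the cardinality, the balls of radius $1/8$ around the points of $\mathcal T$ are pairwise disjoint. They all lie inside the ball of radius $1+1/8=9/8$. Comparing volumes in real dimension $2\D$ gives
\[
|\mathcal T|\left(\tfrac18\right)^{2\D}\le\left(\tfrac98\right)^{2\D},
\]
hence $|\mathcal T|\le 9^{2\D}$.

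For the operator-norm comparison, let $H$ be Hermitian. Then $\norminf{H}=\max_{\|x\|=1}|\bra{x}H\ket{x}|$, and the maximum is attained at some unit eigenvector $x$. Choose $z\in\mathcal T$ with $\|x-z\|\le1/4$. The key algebraic step is the splitting
\[
\bra{x}H\ket{x}-\bra{z}H\ket{z}=\bra{x}H\ket{x-z}+\bra{x-z}H\ket{z},
\]
which holds because the cross terms $\bra{x}H\ket{z}$ cancel. Bounding each term by Cauchy--Schwarz gives
\[
\bigl|\bra{x}H\ket{x}-\bra{z}H\ket{z}\bigr|\le 2\cdot\tfrac14\norminf{H}=\tfrac12\norminf{H}.
\]
Therefore
\[
\norminf{H}\le \max_{z\in\mathcal T}|\bra{z}H\ket{z}|+\tfrac12\norminf{H},
\]
and rearranging yields Eq.~\eqref{eq:upper-net-operator-norm}.

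There is no real obstacle; the only point needing care is the constant. A naive perturbation bound such as $|\bra{x}H\ket{x}-\bra{z}H\ket{z}|\le(2\varepsilon+\varepsilon^2)\norminf{H}$ would give a slightly worse factor than $2$. Using the exact splitting above, with $\|x\|=\|z\|=1$, gives precisely the factor $1/2$ needed.
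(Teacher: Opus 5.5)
Your proposal is correct and follows the paper's proof essentially verbatim. Both use a maximal $1/4$-separated set with the $1/8$-versus-$9/8$ volume comparison in real dimension $2\D$, and both approximate a top eigenvector within $1/4$ by a net point, bounding the difference of quadratic forms by $\tfrac12\norminf{H}$ via a two-term splitting (yours is $\bra{x}H\ket{x-z}+\bra{x-z}H\ket{z}$, the paper's is the mirror image $\bra{x-z}H\ket{x}+\bra{z}H\ket{x-z}$).
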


\begin{proof}
Identify $\C^\D$ with $\mathbb R^{2\D}$ and let $\mathcal T$ be a maximal $1/4$-separated subset of the unit sphere. 
Maximality implies that $\mathcal T$ is a $1/4$-net. 
The Euclidean balls of radius $1/8$ centered at points of $\mathcal T$ are pairwise disjoint and are all contained in the ball of radius $9/8$. 
Comparing $2\D$-dimensional Euclidean volumes gives
\begin{equation}
|\mathcal T|\left(\frac18\right)^{2\D}\le\left(\frac98\right)^{2\D},
\end{equation}
which proves Eq.~\eqref{eq:upper-net-size}.

Let $H$ be Hermitian, and let $x$ be a unit eigenvector with $|\bra{x}H\ket{x}|=\norminf{H}$. 
Choose $z\in\mathcal T$ with $\normtwo{x-z}\le1/4$. 
Then
\begin{equation}
\begin{aligned}
|\bra{x}H\ket{x}-\bra{z}H\ket{z}|
&\le|\bra{x-z}H\ket{x}|+|\bra{z}H\ket{x-z}|\\
&\le2\normtwo{x-z}\norminf{H}\\
&\le\frac12\norminf{H}.
\end{aligned}
\end{equation}
Consequently, $|\bra{z}H\ket{z}|\ge\norminf{H}/2$, proving Eq.~\eqref{eq:upper-net-operator-norm}.
\end{proof}

\begin{proposition}[Operator-norm concentration]
\label{prop:upper-operator-concentration}
For every $0<\varepsilon\le1$,
\begin{equation}
\Prob\left[\norminf{L_T-J_\Phi}>\frac{\varepsilon}{2\D}\right]
\le2\exp\left(2\D\log9-\frac{T\varepsilon^2}{4496\D^2}\right).
\label{eq:upper-operator-tail}
\end{equation}
Consequently, for every $0<\delta<1$, if
\begin{equation}
T\ge\left\lceil\frac{4496\D^2}{\varepsilon^2}\left(2\D\log9+\log\frac{2}{\delta}\right)\right\rceil,
\label{eq:upper-explicit-sample-size}
\end{equation}
then
\begin{equation}
\Prob\left[\norminf{L_T-J_\Phi}\le\frac{\varepsilon}{2\D}\right]\ge1-\delta.
\label{eq:upper-good-event}
\end{equation}
\end{proposition}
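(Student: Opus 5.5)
We combine the three preceding ingredients: the directional variance and range bounds of Lemma~\ref{lem:upper-directional-variance}, the scalar Bernstein inequality of Lemma~\ref{lem:upper-scalar-bernstein}, and the net reduction of Lemma~\ref{lem:upper-net}. First apply Lemma~\ref{lem:upper-net} to the Hermitian operator $H:=L_T-J_\Phi$ on $A\otimes B\simeq\C^\D$. This gives the inclusion
\begin{equation}
\left\{\norminf{L_T-J_\Phi}>\frac{\varepsilon}{2\D}\right\}\subseteq\bigcup_{z\in\mathcal T}\left\{\left|\bra{z}(L_T-J_\Phi)\ket{z}\right|>\frac{\varepsilon}{4\D}\right\}.
\end{equation}
The net $\mathcal T$ is deterministic and fixed before sampling, so a union bound over at most $9^{2\D}=e^{2\D\log 9}$ points reduces the problem to a single fixed unit vector $z$.

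For fixed $z$, set $\xi_t:=\bra{z}X_t\ket{z}-\bra{z}J_\Phi\ket{z}$. The $X_t$ are i.i.d., since the triples $(v_t,U_t,\text{outcome})$ are independent across rounds. By Lemma~\ref{lem:upper-unbiasedness} each $\xi_t$ has mean zero. By Lemma~\ref{lem:upper-directional-variance} we have $\E\xi_t^2=\operatorname{Var}(G_z)\le140$ and $|\xi_t|\le 2\D$ almost surely. Moreover, $\bra{z}(L_T-J_\Phi)\ket{z}=\frac1T\sum_t\xi_t$. Apply Lemma~\ref{lem:upper-scalar-bernstein} with $\sigma^2=140$, $R=2\D$ and $s=\varepsilon/(4\D)$. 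Then $Rs/3=\varepsilon/6\le1/6$, so the denominator satisfies $2(\sigma^2+Rs/3)\le 2(140+1/6)<281$. The exponent is therefore at least
\begin{equation}
\frac{T\varepsilon^2}{16\D^2\cdot 281}=\frac{T\varepsilon^2}{4496\D^2}.
\end{equation}
Multiplying the tail $2\exp(-T\varepsilon^2/(4496\D^2))$ by the net cardinality gives Eq.~\eqref{eq:upper-operator-tail}.

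For the sample-size consequence, observe that the bound in Eq.~\eqref{eq:upper-operator-tail} is at most $\delta$ exactly when
\begin{equation}
\frac{T\varepsilon^2}{4496\D^2}\ge 2\D\log9+\log\frac2\delta.
\end{equation}
This is the condition in Eq.~\eqref{eq:upper-explicit-sample-size}, and the ceiling only increases $T$, so Eq.~\eqref{eq:upper-good-event} follows.

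The only delicate point is the constant bookkeeping. The factor $2$ loss from the net turns the threshold $\varepsilon/(2\D)$ into $s=\varepsilon/(4\D)$. One must also check that the range term $Rs/3$ stays bounded by a constant because $\varepsilon\le1$, so that it does not reintroduce any $\D$-dependence beyond the $\D^2$ in the exponent. Everything else is a direct substitution into the earlier lemmas.
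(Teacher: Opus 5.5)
Your proposal is correct and follows the paper's proof essentially verbatim. The paper likewise applies the scalar Bernstein inequality to the fixed-direction variables $\bra{z}(X_t-J_\Phi)\ket{z}$ with $\sigma^2=140$, $R=2\D$, $s=\varepsilon/(4\D)$, bounds the denominator by $281$ to get $4496=16\cdot 281$, and then union-bounds over the $1/4$-net of size at most $9^{2\D}$ using the factor-$2$ net inequality; that you present the net reduction first is only a difference in ordering.
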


\begin{proof}
Fix a unit vector $z\in A\otimes B$ and define
\begin{equation}
\xi_t(z):=\bra{z}(X_t-J_\Phi)\ket{z}.
\end{equation}
The variables $\xi_t(z)$ are independent and real. 
Lemma~\ref{lem:upper-unbiasedness} gives $\E\xi_t(z)=0$, and Lemma~\ref{lem:upper-directional-variance} gives
\begin{equation}
\E\xi_t(z)^2\le140,
\qquad
|\xi_t(z)|\le2\D.
\label{eq:upper-xi-hypotheses}
\end{equation}
Apply Lemma~\ref{lem:upper-scalar-bernstein} with $\sigma^2=140$, $R=2\D$, and $s=\varepsilon/(4\D)$. 
Since $0<\varepsilon\le1$,
\begin{equation}
2\left(\sigma^2+\frac{Rs}{3}\right)
=280+\frac{\varepsilon}{3}
\le281.
\label{eq:upper-bernstein-denominator}
\end{equation}
Therefore,
\begin{equation}
\Prob\left[\left|\bra{z}(L_T-J_\Phi)\ket{z}\right|\ge\frac{\varepsilon}{4\D}\right]
\le2\exp\left(-\frac{T\varepsilon^2}{4496\D^2}\right).
\label{eq:upper-fixed-direction-tail}
\end{equation}

Set $H:=L_T-J_\Phi$ and let $\mathcal T$ be the net from Lemma~\ref{lem:upper-net}. 
By Eq.~\eqref{eq:upper-net-operator-norm}, the event $\norminf{H}>\varepsilon/(2\D)$ implies that there exists $z\in\mathcal T$ such that $|\bra{z}H\ket{z}|>\varepsilon/(4\D)$. 
The union bound, Eqs.~\eqref{eq:upper-net-size} and \eqref{eq:upper-fixed-direction-tail}, give
\begin{equation}
\begin{aligned}
\Prob\left[\norminf{H}>\frac{\varepsilon}{2\D}\right]
&\le2|\mathcal T|\exp\left(-\frac{T\varepsilon^2}{4496\D^2}\right)\\
&\le2\exp\left(2\D\log9-\frac{T\varepsilon^2}{4496\D^2}\right).
\end{aligned}
\end{equation}
This proves Eq.~\eqref{eq:upper-operator-tail}. 
If Eq.~\eqref{eq:upper-explicit-sample-size} holds, the right-hand side of Eq.~\eqref{eq:upper-operator-tail} is at most $\delta$, proving Eq.~\eqref{eq:upper-good-event}.
\end{proof}

\subsection{Projection and conversion to diamond norm}
\label{app:upper-projection-diamond}

\begin{lemma}[Operator norm of the Choi difference controls diamond norm]
\label{lem:upper-choi-to-diamond}
Let $\Phi,\Psi:\calL(A)\to\calL(B)$ be quantum channels. 
Then
\begin{equation}
\normdiam{\Phi-\Psi}\le\din\dout\,\norminf{J_\Phi-J_\Psi}
=\D\norminf{J_\Phi-J_\Psi}.
\label{eq:upper-choi-to-diamond}
\end{equation}
\end{lemma}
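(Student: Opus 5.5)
The plan is to bound the diamond norm of the Hermiticity-preserving map $\Delta:=\Phi-\Psi$ through its Choi operator $J_\Delta:=J_\Phi-J_\Psi$, using the standard representation of the action of $\id_R\otimes\Delta$ on a pure input. Fix an ancilla $R$ and a unit vector $\ket{\psi}\in R\otimes A$. Since $\Delta$ is Hermiticity preserving and the trace norm is convex, it suffices to bound $\normone{(\id_R\otimes\Delta)(\proj{\psi})}$ uniformly over pure inputs. As in the proof of Lemma~\ref{lem:tester}, we write $\ket{\psi}=(V\otimes I_A)\ket{\widetilde\Omega}$ with $V^\dagger V=\tau^\top$ for $\tau=\Tr_R\proj{\psi}$. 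Then
\begin{equation}
(\id_R\otimes\Delta)(\proj{\psi})=\din\,(V\otimes I_B)J_\Delta(V^\dagger\otimes I_B).
\end{equation}

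We then estimate the trace norm of this sandwiched operator. By the polar decomposition $V=U\sqrt{\tau^\top}$, where $U$ is an isometry on the support of $\tau^\top$, and by unitary invariance of the trace norm, the quantity equals $\din\normone{(\sqrt{\tau^\top}\otimes I_B)J_\Delta(\sqrt{\tau^\top}\otimes I_B)}$. For any operator $K$ and any positive $P$, we apply Hölder's inequality in the form $\normone{PKP}\le\norminf{K}\,\normone{P^2}$. This can be checked by writing $K=K_+-K_-$ with $\pm K\le\norminf{K}I$ when $K$ is Hermitian, or more simply via $\normone{PKP}\le\normtwo{P}\norminf{K}\normtwo{P}$. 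Here $P=\sqrt{\tau^\top}\otimes I_B$ gives $\normtwo{P}^2=\Tr(\tau^\top\otimes I_B)=\dout$.

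Combining the two steps gives
\begin{equation}
\normone{(\id_R\otimes\Delta)(\proj{\psi})}\le\din\dout\norminf{J_\Phi-J_\Psi}.
\end{equation}
Taking the supremum over $R$ and $\ket{\psi}$ yields Eq.~\eqref{eq:upper-choi-to-diamond}. The only delicate point is the Hölder step: we must use the generalized Hölder inequality $\normone{XYZ}\le\normtwo{X}\norminf{Y}\normtwo{Z}$ rather than a cruder bound. This inequality is standard, so we expect no real obstacle. Note also that the channel assumption on $\Phi$ and $\Psi$ is not needed; the bound holds for any linear maps, since the convexity reduction to pure inputs works for general maps once we take the supremum over normalized inputs, which can be chosen pure by purification.
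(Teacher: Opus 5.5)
Your argument is correct and is essentially the paper's proof. Both of you write the pure input as a filtered maximally entangled vector and reduce to the sandwich $(M\otimes I_B)J_\Delta(M^\dagger\otimes I_B)$. Both then bound its trace norm by $\norminf{J_\Delta}\Tr(M^\dagger M\otimes I_B)=\din\dout\norminf{J_\Delta}$. The paper does this last step via the spectral decomposition of $J_\Delta$ and the triangle inequality, which is your $K_+-K_-$ remark. You instead use $\normone{XYZ}\le\normtwo{X}\norminf{Y}\normtwo{Z}$, which in fact makes your polar decomposition unnecessary.

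The only imprecision is in your closing remark about general linear maps. For non-Hermiticity-preserving maps, the diamond-norm supremum includes non-Hermitian inputs such as $\ketbra{u}{v}$, and purification does not reduce these to $\proj{\psi}$. Your Hölder bound handles such inputs equally well, though, so the claim itself survives.
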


\begin{proof}
Set $\Delta:=\Phi-\Psi$ and $J_\Delta:=J_\Phi-J_\Psi$. 
The map $\Delta$ is Hermiticity preserving. 
By the standard pure-state characterization of the diamond norm, with a reference system of dimension $\din$~\cite{Watrous2018},
\begin{equation}
\normdiam{\Delta}
=\max_{\substack{\ket{\psi}\in A\otimes A\\\normtwo{\psi}=1}}
\normone{(\id_A\otimes\Delta)(\proj{\psi})}.
\label{eq:upper-pure-state-diamond}
\end{equation}
Every unit vector $\ket{\psi}\in A\otimes A$ can be written as
\begin{equation}
\ket{\psi}=(M\otimes I_A)\ket{\Omega}
\label{eq:upper-filtered-max-entangled}
\end{equation}
for some $M\in\calL(A)$. 
The normalization of $\ket{\psi}$ implies
\begin{equation}
1=\bra{\Omega}(M^\dagger M\otimes I_A)\ket{\Omega}
=\frac{1}{\din}\Tr(M^\dagger M),
\label{eq:upper-M-normalization}
\end{equation}
so $\Tr(M^\dagger M)=\din$. 
Since $M$ acts on the untouched reference register,
\begin{equation}
(\id_A\otimes\Delta)(\proj{\psi})
=(M\otimes I_B)J_\Delta(M^\dagger\otimes I_B).
\label{eq:upper-filtered-choi}
\end{equation}
Let
\begin{equation}
J_\Delta=\sum_{k=1}^{\D}\lambda_k\proj{u_k}
\end{equation}
be a spectral decomposition with an orthonormal eigenbasis, including eigenvectors with zero eigenvalue. 
By the triangle inequality and positivity of each $(M\otimes I_B)\proj{u_k}(M^\dagger\otimes I_B)$,
\begin{equation}
\begin{aligned}
\normone{(M\otimes I_B)J_\Delta(M^\dagger\otimes I_B)}
&\le\sum_{k=1}^{\D}|\lambda_k|\,\bra{u_k}(M^\dagger M\otimes I_B)\ket{u_k}\\
&\le\norminf{J_\Delta}\sum_{k=1}^{\D}\bra{u_k}(M^\dagger M\otimes I_B)\ket{u_k}\\
&=\norminf{J_\Delta}\Tr(M^\dagger M\otimes I_B)\\
&=\din\dout\,\norminf{J_\Delta},
\end{aligned}
\end{equation}
where Eq.~\eqref{eq:upper-M-normalization} was used in the last step. 
Maximizing over $\ket{\psi}$ in Eq.~\eqref{eq:upper-pure-state-diamond} proves Eq.~\eqref{eq:upper-choi-to-diamond}.
\end{proof}

\begin{proof}[Proof of Theorem~\ref{thm:log-free-upper}]
Run the protocol of Appendix~\ref{app:upper-estimator} with $T$ satisfying Eq.~\eqref{eq:upper-explicit-sample-size}. 
By Proposition~\ref{prop:upper-operator-concentration}, with probability at least $1-\delta$,
\begin{equation}
\norminf{L_T-J_\Phi}\le\frac{\varepsilon}{2\D}.
\label{eq:upper-proof-good-event}
\end{equation}
On this event, $J_\Phi\in\mathfrak C_{A\to B}$ is feasible in Eq.~\eqref{eq:upper-projection-definition}. 
The optimality of $\widetilde J$ therefore implies
\begin{equation}
\norminf{\widetilde J-L_T}
\le\norminf{J_\Phi-L_T}
\le\frac{\varepsilon}{2\D}.
\label{eq:upper-projection-optimality}
\end{equation}
By the triangle inequality and Eqs.~\eqref{eq:upper-proof-good-event} and \eqref{eq:upper-projection-optimality},
\begin{equation}
\norminf{\widetilde J-J_\Phi}
\le\norminf{\widetilde J-L_T}+\norminf{L_T-J_\Phi}
\le\frac{\varepsilon}{\D}.
\label{eq:upper-projected-choi-error}
\end{equation}
Applying Lemma~\ref{lem:upper-choi-to-diamond} gives
\begin{equation}
\normdiam{\widetilde\Phi-\Phi}
\le\D\norminf{\widetilde J-J_\Phi}
\le\varepsilon.
\end{equation}
Thus Eq.~\eqref{eq:upper-bound-main-guarantee} holds.

It remains to verify the sample complexity. 
Let $T_0$ be the least integer satisfying Eq.~\eqref{eq:upper-explicit-sample-size}. 
Since $\log(2/\delta)=\log2+\log(1/\delta)$, $\D\ge1$, and $0<\varepsilon\le1$, we have
\begin{equation}
\begin{aligned}
T_0
&\le1+\frac{4496\D^2}{\varepsilon^2}\left(2\D\log9+\log2+\log(1/\delta)\right)\\
&\le\left(4496(2\log9+\log2)+1\right)\frac{\D^3}{\varepsilon^2}
+4496\frac{\D^2\log(1/\delta)}{\varepsilon^2}\\
&\le30000\,\frac{\D^3+\D^2\log(1/\delta)}{\varepsilon^2}.
\end{aligned}
\label{eq:upper-high-probability-complexity}
\end{equation}
The last inequality uses $4496(2\log9+\log2)+1<30000$ and $4496<30000$. 
Thus Theorem~\ref{thm:log-free-upper} holds with $C=30000$. 
For $\delta=1/3$, the choice
\begin{equation}
T=\left\lceil30000\,\frac{\D^3}{\varepsilon^2}\right\rceil
\label{eq:upper-constant-confidence-explicit}
\end{equation}
satisfies Eq.~\eqref{eq:upper-explicit-sample-size}, because $\D\ge1$ and
\begin{equation}
30000>4496\left(2\log9+\log6\right).
\end{equation}
Since $\D^3=\din^3\dout^3$, this proves Theorem~\ref{thm:log-free-upper}.
\end{proof}

\end{document}